%% LyX 1.6.9 created this file.  For more info, see http://www.lyx.org/.
%% Do not edit unless you really know what you are doing.
\documentclass[12pt,oneside,english]{amsart}

\usepackage{fontenc}
\usepackage[latin9]{inputenc}
\usepackage{geometry}
\geometry{verbose,tmargin=3.5cm,bmargin=2.5cm}
\usepackage{babel}

\usepackage{endnotes}
\usepackage{amsthm}
\usepackage{setspace}
\usepackage{amssymb}
\doublespacing
\usepackage[unicode=true,pdfusetitle,
 bookmarks=true,bookmarksnumbered=false,bookmarksopen=false,
 breaklinks=false,pdfborder={0 0 1},backref=false,colorlinks=false]
 {hyperref}

\makeatletter

%%%%%%%%%%%%%%%%%%%%%%%%%%%%%% LyX specific LaTeX commands.
\newcommand{\noun}[1]{\textsc{#1}}
%% Because html converters don't know tabularnewline
\providecommand{\tabularnewline}{\\}

%%%%%%%%%%%%%%%%%%%%%%%%%%%%%% Textclass specific LaTeX commands.
\numberwithin{equation}{section}
\numberwithin{figure}{section}
\numberwithin{table}{section}
 \let\footnote=\endnote
\theoremstyle{plain}
\newtheorem{thm}{Theorem}[section]
\theoremstyle{definition}
\newtheorem{defn}[thm]{Definition}
\theoremstyle{remark}
\newtheorem{rem}[thm]{Remark}
\theoremstyle{definition}
\newtheorem{example}[thm]{Example}
\theoremstyle{plain}
\newtheorem{prop}[thm]{Proposition}
 \ifx\proof\undefined\
   \newenvironment{proof}[1][\proofname]{\par
     \normalfont\topsep6\p@\@plus6\p@\relax
     \trivlist
     \itemindent\parindent
     \item[\hskip\labelsep
           \scshape
       #1]\ignorespaces
   }{%
     \endtrivlist\@endpefalse
   }
   \providecommand{\proofname}{Proof}
 \fi
\theoremstyle{plain}
\newtheorem{lem}[thm]{Lemma}
\theoremstyle{plain}
\newtheorem{cor}[thm]{Corollary}

%%%%%%%%%%%%%%%%%%%%%%%%%%%%%% User specified LaTeX commands.
\makeatother

\makeatother

\begin{document}

\title{$SU(2)$-irreducibly covariant and eposic channels}

\maketitle
Muneerah Al Nuwairan

Department of Mathematics

University of Ottawa, Ottawa, ON.

E-mail: malnu009@uottawa
\begin{abstract}
In this paper we introduce EPOSIC channels, a class of $SU(2)$-covariant
quantum channels. We give their definition, a Kraus representation
of them, and compute their Choi matrices. We show that these channels
form the set of extreme points of all $SU(2)$-irreducibly covariant
channels. We also compute their complementary channels, and their
dual maps. As application of these channel, we get a new example of
positive map that is not completely positive.
\end{abstract}

\section*{introduction }

Given two representations of a compact group $G$ on Hilbert spaces
$K,E$, one can construct $G$-covariant quantum channels. Indeed,
the $G$-space $K\otimes E$ decomposes into a direct sum of $G$-irreducible
invariant subspaces $\{H_{i}\}_{i\in I}$. For each $i\in I$, the
$G$-equivariant inclusion $\alpha_{i}:H_{i}\longrightarrow K\otimes E$
provides (by a standard construction) two $G$-covariant channels
$\Phi_{i}^{1}:End(H_{i})\longrightarrow End(K)$ and $\Phi_{i}^{2}:End(H_{i})\longrightarrow End(E)$.
Unless we make further assumptions on the $G$-spaces $K,E$ these
channels remain difficult to study. To simplify the study of these
channels, we choose $K$ and $E$ to be $G$-irreducible spaces. In
this case, the channels $\Phi_{i}^{1},\Phi_{i}^{2}$ are called $G$-irreducibly
covariant channels \cite{key-6}. If $G=SU(n)$ then Clebsch-Gordan
formula \cite[ch.5]{key-14} describes the decomposition of the $SU(n)$-space
$K\otimes E$ into $SU(n)$-irreducible subspaces and gives their
multiplicity. Studying $SU(n)$-equivariant maps between spaces of
the form $K\otimes E$, is in general difficult, due to the fact that
the multiplicity of $SU(n)$-irreducible subspaces of $K\otimes E$
might be greater than one. However, in case of $SU(2)$, the multiplicity
of any $SU(2)$-irreducible subspace in $K\otimes E$ is always one,
simplifying the study of such maps.

In this paper we introduce EPOSIC channels, a class of $SU(2)$-covariant
quantum channels. We show that if $H,K$ are $SU(2)$-irreducible
spaces then EPOSIC channels from $End(H)$ into $End(K)$ forms the
extreme points of the convex set of all $SU(2)$-irreducibly covariant
channels form $End(H)$ into $End(K)$. The first section contains
definitions, lemmas and propositions from both representation theory
and quantum information theory. In Section 2, we introduce our examples
of quantum channels and exhibit some of their properties. In section
3, we get Kraus operators and the Choi matrix of EPOSIC channel. Section
4 shows that the set of all $SU(2)$-irreducibly covariant channels
forms a simplex and that its extreme points are nothing but EPOSIC
channels. In section 5 we compute a complementary channel and the
dual map for EPOSIC channel. The paper ends with an application of
the EPOSIC channels where we get an example of positive map that is
not completely positive. 

This work was done under the supervision of professors B. Collins
and T. Giordano as a part of a PhD thesis of the author. I greatly
appreciate their patient advice and help.

\section{background definition and results}

In this section, we recall standard notions and fix the notations
we will then use. We assume all vector spaces are finite dimensional.
A Hilbert space $H$ is a complex vector space endowed with an inner
product $\left\langle \left|\right.\right\rangle _{{\scriptscriptstyle H}}$.
If $H,K$ are Hilbert spaces, then $End(H,K)$ denotes the vector
space of linear maps from $H$ to $K$. It is a Hilbert space endowed
with Hilbert-Schmidt inner product given by $\left\langle A\left|B\right.\right\rangle _{{\scriptscriptstyle End(H,K)}}=tr(A^{*}B)$
for $A,B\in End(H,K)$. For $y\in H$, let $y^{*}$ denote the linear
form given by $y^{*}(z)=\left\langle y\left|z\right.\right\rangle _{{\scriptscriptstyle H}}$.
If $x\in K$ then $xy^{*}\in End(H,K)$ denotes the map $xy^{*}(z)=\left\langle y\left|z\right.\right\rangle _{{\scriptscriptstyle H}}x$,
for any $z\in H$. Recall that $\{xy^{*}:x\in K,y\in H\}$ forms a
set of generators of $End(H,K)$. As usual, we write $End(H)$ for
$End(H,H)$ and $I_{{\scriptscriptstyle H}}$ for the identity map
on $H$. To a Hilbert space $H$, we associate the conjugate space
$\overline{H}$, the space $\overline{H}$ is a vector space with
the same underlying abelian group as $H$ and with scalar multiplication
$(\lambda,v)\longrightarrow\overline{\lambda}v$. The inner product
on $\overline{H}$ is then defined by $\left\langle h_{1}\left|h_{2}\right.\right\rangle _{{\scriptscriptstyle \overline{H}}}=\left\langle h_{2}\left|h_{1}\right.\right\rangle _{{\scriptscriptstyle H}}$.
One can easily verify that $End(\overline{H})=\overline{End(H)}$
and the inner products of the two spaces coincide.

\subsection{The $G$-equivariant maps}
\begin{defn}
Let $G$ be a topological group. A representation of $G$ on a Hilbert
space $H$ is a continues homomorphism $\pi_{{\scriptscriptstyle H}}:G\longrightarrow GL(H)$
that makes $H$ a $G$-space. If $\pi_{{\scriptscriptstyle H}}{\scriptstyle (g)}$
is a unitary operator for each $g\in G$ then $\pi_{{\scriptscriptstyle H}}$
is called unitary representation. We restrict the use of the symbol
$\rho_{{\scriptscriptstyle H}}$ to an irreducible representation
on $H$. 

If $W$ is a subspace of $H$ such that $\pi_{{\scriptscriptstyle H}}{\scriptstyle (g)}(w)\in W$
for any $g\in G$. Then $\pi_{{\scriptscriptstyle H}}$ can be made
into a representation of $G$ on $W$, called subrepresentation. In
such case, $W$ is called $G$-invariant subspace of $H$. 
\end{defn}

There are standard methods for constructing a new representation from
given ones. For example if $\pi_{{\scriptscriptstyle H}}$,$\pi_{{\scriptscriptstyle K}}$
are two representations of $G$ on $H,K$, then the maps given by
$g\longmapsto\pi_{{\scriptscriptstyle H}}{\scriptstyle (g)}\otimes\pi_{{\scriptscriptstyle K}}{\scriptstyle (g)}$
and $g\longrightarrow l{\scriptstyle (g)}$ where $l{\scriptstyle (g)}(A)=\pi_{{\scriptscriptstyle K}}{\scriptstyle (g)}A\pi_{{\scriptscriptstyle H}}{\scriptstyle (g^{-1})}$,
for $A\in End(H,K)$, are representations of $G$ on the Hilbert spaces
$H\otimes K$ and $End(H,K)$ respectively.

If $\pi:G\longrightarrow GL(H)$ is a representation of $G$ then
$\check{\pi}:G\longrightarrow GL(\overline{H})$ defined by $\check{\pi}{\scriptstyle (g)}=\pi^{t}{\scriptstyle (g^{-1})}$
for each $g\in G$ ($t$ denotes the transpose map), and $\overline{\pi}:G\longrightarrow GL(\overline{H})$
defined by $\overline{\pi}{\scriptstyle (g)}=\overline{\pi{\scriptstyle (g)}}$
for each $g\in G$, are representations of $G$ on the Hilbert space
$\overline{H}$. These representations are known as the contragredient
and the conjugate representation of $G$ respectively. These two representations
on $\overline{H}$ coincide if $\pi$ is unitary representation. 
\begin{defn}
\label{def:1.2} Let $G$ be a group and $\pi_{{\scriptscriptstyle H}}:G\longrightarrow GL(H)$
is a representation of $G$ on the Hilbert space $H$, the commutant
of $\pi_{{\scriptscriptstyle H}}(G)$, denoted by $\pi_{{\scriptscriptstyle H}}(G)^{\prime}$,
is the set $\{T\in End(H):T\pi_{{\scriptscriptstyle H}}{\scriptstyle (g)}=\pi_{{\scriptscriptstyle H}}{\scriptstyle (g)}T\:\forall g\in G\}$. 
\end{defn}

\begin{rem}
If $\pi_{{\scriptscriptstyle H}}$ is a representation of $G$ on
a Hilbert space $H$, and $W$ is a subspace of $H$, then $W$ is
a $G$-invariant subspace of $H$ if and only if $q_{{\scriptscriptstyle W}}$,
the orthogonal projection on $W$, belongs to $\pi_{{\scriptscriptstyle H}}(G)^{\prime}$. 
\end{rem}

\begin{defn}
\label{def:1.4}Let $G$ be a group, and $\pi_{{\scriptscriptstyle H}},\pi_{{\scriptscriptstyle K}}$
be representations of $G$ on the Hilbert spaces $H,K$. A linear
map $\alpha:H\longrightarrow K$ is said to be $G$-equivariant if
$\pi_{{\scriptscriptstyle K}}{\scriptstyle (g)}\alpha=\alpha\pi_{{\scriptscriptstyle H}}{\scriptstyle (g)}$
for all $g\in G$. We denote the set of $G$-equivariant maps from
$H\longrightarrow K$ by $End(H,K)^{G}$.\end{defn}
\begin{example}
$\,$\label{exa:1.5}Let $G$ be a compact group, and $\pi_{{\scriptscriptstyle H}},\pi_{{\scriptscriptstyle K}}$
be representations of $G$ on the Hilbert spaces $H,K$, then the
following are standard constructions of $G$-equivariant maps.
\begin{enumerate}
\item If $\alpha:H\longrightarrow K$ is a $G$-equivariant then so is $\alpha^{*}:K\longrightarrow H$.
\item The map $flip_{{\scriptscriptstyle K}}^{{\scriptscriptstyle H}}:H\otimes K\longrightarrow K\otimes H$
given by linearly extending $h\otimes k\longrightarrow k\otimes h$
is $G$-equivariant map. It is a unitary with adjoint $\left(flip_{{\scriptscriptstyle K}}^{{\scriptscriptstyle H}}\right)^{*}=flip_{{\scriptscriptstyle H}}^{{\scriptscriptstyle K}}$
and satisfy $Tr_{{\scriptscriptstyle K}}(flip_{{\scriptscriptstyle K}}^{{\scriptscriptstyle H}}Aflip_{{\scriptscriptstyle H}}^{{\scriptscriptstyle K}})=Tr_{{\scriptscriptstyle K}}(A)$
for any $A\in End(H\otimes K)$. The map $Tr_{{\scriptscriptstyle K}}$
denotes the partial trace \cite[p.19]{key-18}, and it is defined
on the generators set of $End(H)\otimes End(K)$ by $A_{1}\otimes A_{2}\longmapsto tr(A_{2})A_{1}$.
\item The natural isomorphism\[
T:End(K)\otimes End(\overline{H})\longrightarrow End(K\otimes\overline{H})\]
defined by the map taking $A\otimes B$ to $T(A\otimes B)(k\otimes h)=A(k)\otimes B(h)$
and extending linearly, is a $G$-equivariant map.
\item The map $\mathrm{\mathrm{Vec}}:End(H,K)\longrightarrow K\otimes\overline{H}$
\cite[p.23]{key-18} defined to be the unique linear extension of
$\mathrm{Vec}(xy^{*})=x\otimes\overline{y}$, is an example of $G$-equivariant
map. It represents any element in $End(H,K)$ as a vector in the tensor
product space $K\otimes\overline{H}$. The map $\mathrm{Vec}$ is
a unitary in the sense that \[
\left\langle A\left|B\right.\right\rangle _{{\scriptscriptstyle End(H,K)}}=\left\langle \mathrm{Vec}(A)\left|\mathrm{Vec}(B)\right.\right\rangle _{{\scriptscriptstyle K\otimes\overline{H}}}\]
 for any $A,B\in End(H,K)$. 
\item Let $B^{*}:End(H)\longrightarrow\mathbb{C}$ denote the map $B^{*}X=\left\langle B\left|X\right.\right\rangle _{{\scriptscriptstyle End(H)}}$.
Then the Choi-Jamiolkowski map\[
C:End(End(H),End(K))\longrightarrow End(K\otimes\overline{H})\]
defined by taking $AB^{*}$ to $A\otimes\overline{B}$, for $A\in End(K)$,
$B\in End(H)$ and extending linearly, is a unitary $G$-equivariant
map. The map $C$ assigns for each $\Phi$ a unique matrix $C(\Phi)$,
known as Choi matrix of $\Phi$, that can be computed using the formula
$C(\Phi)=\underset{ij}{\sum}\Phi(E_{ij})\otimes E_{ij}$ where $E_{ij}$
are the standard basis for $End(H)$ \cite{key-4}. 
\end{enumerate}
\end{example}

\begin{rem}
\label{rem:1.6} The Choi-Jamiolkowski map $C$ is the composition
of the map \[
\mathrm{Vec}:End(End(H),End(K))\longrightarrow End(K)\otimes\overline{End(H)}=End(K)\otimes End(\overline{H})\]
and the natural isomorphism $T:End(K)\otimes End(\overline{H})\longrightarrow End(K\otimes\overline{H})$.\end{rem}
\begin{prop}
\label{pro:1.7}Let $G$ be a group, and $\pi_{{\scriptscriptstyle H}},\pi_{{\scriptscriptstyle K}}$
be representations of $G$ on the Hilbert spaces $H,K$. The set of
$G$-equivariant maps $End(H,K)^{G}$ forms a subspace of $End(H,K)$.
In particular, it is a convex set.
\end{prop}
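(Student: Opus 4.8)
The plan is to verify directly that $End(H,K)^{G}$ satisfies the three defining closure properties of a linear subspace of $End(H,K)$, and then to observe that every linear subspace is automatically convex, which yields the last sentence for free. The entire argument rests on the fact that the maps $\alpha\longmapsto\pi_{{\scriptscriptstyle K}}{\scriptstyle (g)}\alpha$ and $\alpha\longmapsto\alpha\pi_{{\scriptscriptstyle H}}{\scriptstyle (g)}$ are both linear in $\alpha$, so the defining condition $\pi_{{\scriptscriptstyle K}}{\scriptstyle (g)}\alpha=\alpha\pi_{{\scriptscriptstyle H}}{\scriptstyle (g)}$ of Definition \ref{def:1.4} is a family (indexed by $g\in G$) of linear constraints on $\alpha$, whose common solution set is necessarily a subspace.

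First I would record that the set is nonempty by checking the zero map: since $\pi_{{\scriptscriptstyle K}}{\scriptstyle (g)}\,0=0=0\,\pi_{{\scriptscriptstyle H}}{\scriptstyle (g)}$ for every $g\in G$, we have $0\in End(H,K)^{G}$. Next I would take $\alpha,\beta\in End(H,K)^{G}$ and scalars $\lambda,\mu\in\mathbb{C}$ and verify that $\lambda\alpha+\mu\beta$ again satisfies the equivariance identity: using distributivity of composition over addition and its compatibility with scalar multiplication, one has $\pi_{{\scriptscriptstyle K}}{\scriptstyle (g)}(\lambda\alpha+\mu\beta)=\lambda\pi_{{\scriptscriptstyle K}}{\scriptstyle (g)}\alpha+\mu\pi_{{\scriptscriptstyle K}}{\scriptstyle (g)}\beta=\lambda\alpha\pi_{{\scriptscriptstyle H}}{\scriptstyle (g)}+\mu\beta\pi_{{\scriptscriptstyle H}}{\scriptstyle (g)}=(\lambda\alpha+\mu\beta)\pi_{{\scriptscriptstyle H}}{\scriptstyle (g)}$ for all $g\in G$, where the middle equality invokes the hypothesis that $\alpha$ and $\beta$ are equivariant. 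This establishes closure under arbitrary linear combinations, so $End(H,K)^{G}$ is a subspace of $End(H,K)$.

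For the final assertion, I would simply note that any linear subspace is convex: given $\alpha,\beta\in End(H,K)^{G}$ and $t\in[0,1]$, the convex combination $t\alpha+(1-t)\beta$ is in particular a linear combination, hence lies in $End(H,K)^{G}$ by the closure property just proved. I do not expect any genuine obstacle in this proposition; it is a routine verification, and the only point requiring mild care is to phrase the closure argument so that it handles addition and scalar multiplication simultaneously (rather than as two separate checks), which is what makes the convexity corollary immediate.
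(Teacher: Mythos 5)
Your proof is correct, and it is exactly the routine verification the paper expects: the paper states Proposition \ref{pro:1.7} without proof, treating it as standard, and your argument (the equivariance condition is a family of linear constraints in $\alpha$, so the solution set is a subspace, and every subspace is convex) is the canonical way to fill that gap. Nothing is missing.
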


\subsection{$G$-Covariant quantum channels\label{sub:1.2}}

$\,$

Let $H$,$K$ be Hilbert spaces. Recall that quantum channel $\Phi:End(H)\longrightarrow End(K)$
is a completely positive trace preserving map \cite[p54]{key-18},
and that such a channel has several equivalent representations: 
\begin{prop}
\label{pro:1.8} \cite[p.54]{key-18} Let $H,K$ be Hilbert spaces
and $\Phi:End(H)\longrightarrow End(K)$ be a quantum channel, then 
\begin{enumerate}
\item A Stinespring representation (dilation) of $\Phi$ is a pair $(E,\alpha)$
consisting of a Hilbert space $E$ (an environment space), and an
isometry $\alpha:H\longrightarrow K\otimes E$ such that $\Phi(A)=Tr_{{\scriptscriptstyle E}}(\alpha A\alpha^{*})$
for any $A\in End(H)$. The map $Tr_{{\scriptscriptstyle E}}$ denotes
the partial trace over $E$. 
\item A Kraus representation of $\Phi$ is a set of operators $T_{j}\in End(H,K)$,
called Kraus operators, that satisfy $\Phi(A)=\overset{k}{\underset{j=1}{\sum}}T_{j}AT_{j}^{*}$
and $\overset{k}{\underset{j=1}{\sum}}T_{j}^{*}T_{j}=I_{H}$. 
\item Choi representation of $\Phi$, it states that $\Phi$ is channel
if and only if its Choi matrix $C(\Phi)$ is a positive matrix  satisfies
$Tr_{K}(C(\Phi))=I_{\overline{H}}$.
\end{enumerate}
\end{prop}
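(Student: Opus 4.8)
The plan is to route everything through the Choi--Jamiolkowski map $C$ of Example~\ref{exa:1.5}(5), treating item~(3) as the hinge and then reading items~(1) and~(2) off the structure it produces. Throughout, write $\Omega=\mathrm{Vec}(I_{{\scriptscriptstyle H}})=\underset{i}{\sum}e_{i}\otimes\overline{e_{i}}\in H\otimes\overline{H}$ for the unnormalized maximally entangled vector attached to an orthonormal basis $\{e_{i}\}$ of $H$. A direct computation gives $\Omega\Omega^{*}=\underset{ij}{\sum}E_{ij}\otimes\overline{E_{ij}}$, whence the key identity $C(\Phi)=(\Phi\otimes\mathrm{id}_{{\scriptscriptstyle \overline{H}}})(\Omega\Omega^{*})$, which is the computational heart of the whole argument.

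For item~(3) I would prove both implications. If $\Phi$ is completely positive then $\Phi\otimes\mathrm{id}_{{\scriptscriptstyle \overline{H}}}$ is a positive map, and since $\Omega\Omega^{*}$ is a positive rank-one operator the identity above forces $C(\Phi)\geq0$. For the converse, apply the spectral theorem to write $C(\Phi)=\overset{k}{\underset{j=1}{\sum}}v_{j}v_{j}^{*}$ with $v_{j}\in K\otimes\overline{H}$, put $T_{j}=\mathrm{Vec}^{-1}(v_{j})\in End(H,K)$, and check on generators $xy^{*}$ that $\Phi(A)=\underset{j}{\sum}T_{j}AT_{j}^{*}$ for every $A\in End(H)$; this is precisely the Kraus form of item~(2), and any map written this way is manifestly completely positive, since for a positive block matrix $M=[A_{rs}]$ one has $(\Phi\otimes\mathrm{id}_{m})(M)=\underset{j}{\sum}(T_{j}\otimes I_{m})M(T_{j}\otimes I_{m})^{*}\geq0$. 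Finally $tr(\Phi(A))=tr\bigl((\underset{j}{\sum}T_{j}^{*}T_{j})A\bigr)$ shows that trace preservation is equivalent to $\underset{j}{\sum}T_{j}^{*}T_{j}=I_{{\scriptscriptstyle H}}$, and under the identification $\overline{End(H)}=End(\overline{H})$ the definition of the partial trace turns this into $Tr_{{\scriptscriptstyle K}}(C(\Phi))=I_{{\scriptscriptstyle \overline{H}}}$.

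For item~(1), given the Kraus operators $\{T_{j}\}_{j=1}^{k}$ just produced, take $E=\mathbb{C}^{k}$ with orthonormal basis $\{f_{j}\}$ and define $\alpha:H\longrightarrow K\otimes E$ by $\alpha(h)=\underset{j}{\sum}T_{j}(h)\otimes f_{j}$. Then $\alpha^{*}\alpha=\underset{j}{\sum}T_{j}^{*}T_{j}=I_{{\scriptscriptstyle H}}$, so $\alpha$ is an isometry, and a short computation gives $Tr_{{\scriptscriptstyle E}}(\alpha A\alpha^{*})=\underset{j}{\sum}T_{j}AT_{j}^{*}=\Phi(A)$. Conversely, from any dilation $(E,\alpha)$ one recovers Kraus operators by fixing an orthonormal basis $\{f_{j}\}$ of $E$ and setting $T_{j}=(I_{{\scriptscriptstyle K}}\otimes f_{j}^{*})\alpha$: the isometry condition becomes the normalization $\underset{j}{\sum}T_{j}^{*}T_{j}=I_{{\scriptscriptstyle H}}$ and the partial trace becomes the Kraus sum, so the three descriptions are mutually equivalent.

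The main obstacle is the converse half of item~(3): extracting from mere positivity of $C(\Phi)$ the full complete positivity of $\Phi$, rather than only ordinary positivity. The content of Choi's theorem is that testing $\Phi\otimes\mathrm{id}$ against the single operator $\Omega\Omega^{*}$ already suffices, because the spectral decomposition of $C(\Phi)$ yields Kraus operators $T_{j}$ that act simultaneously on every input. The step needing care is verifying that $\mathrm{Vec}^{-1}$ intertwines the rank-one pieces $v_{j}v_{j}^{*}$ of $C(\Phi)$ with the conjugations $A\mapsto T_{j}AT_{j}^{*}$; this is exactly where the unitarity of $\mathrm{Vec}$ in Example~\ref{exa:1.5}(4) is used.
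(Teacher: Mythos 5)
Your argument is correct, but there is no proof in the paper to compare it against: Proposition \ref{pro:1.8} is quoted background, cited to \cite[p.54]{key-18}, and the paper never proves it. Items (1) and (2) are in effect definitions; the genuinely mathematical content (existence of Kraus/Stinespring representations, the Choi criterion, and the passages between the three pictures) is exactly what the paper takes as known and restates, again without proof, in Proposition \ref{pro:1.9} and Remark \ref{rem:1.10}. What you have written is the standard argument from the cited source, and it is sound: the identity $C(\Phi)=(\Phi\otimes\mathrm{id}_{\overline{H}})(\Omega\Omega^{*})$ is consistent with the paper's definition of $C$ by $AB^{*}\mapsto A\otimes\overline{B}$ (your $\sum_{ij}\Phi(E_{ij})\otimes\overline{E_{ij}}$ is the paper's $\sum_{ij}\Phi(E_{ij})\otimes E_{ij}$ under the identification $\overline{End(H)}=End(\overline{H})$); the forward half of (3) follows by applying the positive map $\Phi\otimes\mathrm{id}$ to $\Omega\Omega^{*}\geq0$; the converse is Choi's spectral trick, where the step you flag as delicate --- that each rank-one piece $v_{j}v_{j}^{*}$ pulls back to the conjugation $A\mapsto T_{j}AT_{j}^{*}$ with $T_{j}=\mathrm{Vec}^{-1}(v_{j})$ --- amounts to the identity $C\bigl(T(\cdot)T^{*}\bigr)=\mathrm{Vec}(T)\mathrm{Vec}(T)^{*}$ (Proposition \ref{pro:1.9}(2)) together with injectivity of the unitary $C$; trace preservation is correctly shown equivalent to $\sum_{j}T_{j}^{*}T_{j}=I_{H}$ and, via $Tr_{K}\bigl(\mathrm{Vec}(T)\mathrm{Vec}(T)^{*}\bigr)=\overline{T^{*}T}$, to $Tr_{K}(C(\Phi))=I_{\overline{H}}$; and your two constructions $\alpha(h)=\sum_{j}T_{j}(h)\otimes f_{j}$ and $T_{j}=(I_{K}\otimes f_{j}^{*})\alpha$ (the latter being Proposition \ref{pro:1.9}(1)) correctly close the loop between the Kraus and Stinespring forms. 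In short, your write-up supplies complete proofs of facts the paper only cites, which is more than the paper itself does here.
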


For any quantum channel a Stinespring and a Kraus representation always
exist \cite[p.54]{key-18}, but neither is unique. However, the Choi
representation gives a unique characterization of a quantum channel.
The following proposition \cite[p.51-p.54]{key-18} gives some relations
between these different representations of a quantum channel. 
\begin{prop}
\label{pro:1.9} Let $H,K$ be Hilbert spaces, and $\Phi:End(H)\longrightarrow End(K)$
be a quantum channel. Then 
\begin{enumerate}
\item If $(E,\alpha)$ is a Stinespring representation of $\Phi$, and $\{e_{j}:1\leq j\leq d_{{\scriptscriptstyle E}}\}$
is an orthonormal basis of $E$ then the set of maps $T_{j}:H\longrightarrow K$,
$1\leq j\leq d_{{\scriptscriptstyle E}}$ given by $T_{j}=(I_{{\scriptscriptstyle K}}\otimes e_{j}^{*})\alpha$$\,$
forms a Kraus operators of $\Phi$. 
\item If $\{T_{j}:1\leq j\leq k\}$ is a Kraus operators of $\Phi$, then
the Choi matrix of $\Phi$ is given by $C(\Phi)=\overset{k}{\underset{{\scriptstyle j=1}}{\sum}}\mathrm{Vec}(T_{j})\mathrm{Vec}(T_{j})^{*}$.
\end{enumerate}
\end{prop}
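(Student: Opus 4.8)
The plan is to verify, in each part, that the proposed objects satisfy the relevant defining identities, arguing directly from the definitions of a Stinespring dilation, a Kraus representation, and the Choi map. For part (1), the first step is to identify the adjoint $T_j^*$. Since the adjoint of $e_j^*\colon E\to\mathbb{C}$ is the map $\mathbb{C}\to E$ sending $1$ to $e_j$, we have $(I_K\otimes e_j^*)^*=I_K\otimes e_j$ and hence $T_j^*=\alpha^*(I_K\otimes e_j)$. The two Kraus conditions then reduce to short computations. The normalization follows from
\[
\sum_j T_j^*T_j=\alpha^*\Bigl(I_K\otimes\sum_j e_je_j^*\Bigr)\alpha=\alpha^*(I_K\otimes I_E)\alpha=\alpha^*\alpha=I_H,
\]
using the resolution of the identity $\sum_j e_je_j^*=I_E$ and that $\alpha$ is an isometry. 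For the channel equation I would set $B=\alpha A\alpha^*\in End(K\otimes E)$ and establish the fibrewise expression $Tr_E(B)=\sum_j(I_K\otimes e_j^*)B(I_K\otimes e_j)$; this holds on product operators $X\otimes Y$ because $(I_K\otimes e_j^*)(X\otimes Y)(I_K\otimes e_j)=\langle e_j|Ye_j\rangle X$ and $\sum_j\langle e_j|Ye_j\rangle=tr(Y)$, and it extends to all of $End(K\otimes E)$ by linearity. Combining the two identities gives $\sum_j T_jAT_j^*=Tr_E(\alpha A\alpha^*)=\Phi(A)$, as required.

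For part (2), I would use that the Choi map $C$ is linear (Example~\ref{exa:1.5}(5)): writing $\Phi=\sum_j\Phi_j$ with $\Phi_j(A)=T_jAT_j^*$, it is enough to prove $C(\Phi_j)=\mathrm{Vec}(T_j)\mathrm{Vec}(T_j)^*$ for a single operator. Fixing orthonormal bases $\{f_a\}$ of $K$ and $\{e_b\}$ of $H$ and expanding $T_j=\sum_{ab}(T_j)_{ab}f_ae_b^*$, I would compute both sides in coordinates. From $\mathrm{Vec}(f_ae_b^*)=f_a\otimes\overline{e_b}$ one obtains the expansion $\mathrm{Vec}(T_j)\mathrm{Vec}(T_j)^*=\sum_{abcd}(T_j)_{ab}\overline{(T_j)_{cd}}\,(f_af_c^*)\otimes(\overline{e_b}\,\overline{e_d}^{*})$. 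On the other side, using $C(\Phi_j)=\sum_{il}\Phi_j(E_{il})\otimes E_{il}$ with $E_{il}=e_ie_l^*$, a direct computation yields $T_jE_{il}T_j^*=\sum_{ac}(T_j)_{ai}\overline{(T_j)_{cl}}\,f_af_c^*$. Reading the second tensor factor in $End(\overline H)$ and using $\overline{e_ie_l^*}=\overline{e_i}\,\overline{e_l}^{*}$, the substitution $b=i$, $d=l$ makes the two expansions agree term by term.

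The algebra is light throughout; the one point that needs care is the passage between $End(H)$ and its conjugate $End(\overline H)$ in part (2)---specifically, checking that the conjugation map acts by $\overline{e_ie_l^*}=\overline{e_i}\,\overline{e_l}^{*}$, so that the second slot of $C(\Phi_j)$ genuinely pairs with the factors $\overline{e_b}\,\overline{e_d}^{*}$ produced by $\mathrm{Vec}$. A coordinate-free alternative is to prove the identity $\mathrm{Vec}(XAY)=(X\otimes\overline{Y^*})\mathrm{Vec}(A)$ on the generators $A=xy^*$ and feed it through Remark~\ref{rem:1.6}, but the explicit index computation seems the most self-contained route given the tools already in hand.
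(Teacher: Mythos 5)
Your proof is correct, and it is essentially the standard argument: the paper itself gives no proof of Proposition \ref{pro:1.9} (it cites Watrous's lecture notes), and your verification --- extracting Kraus operators from the Stinespring dilation via the fibrewise formula $Tr_{E}(B)=\sum_{j}(I_{K}\otimes e_{j}^{*})B(I_{K}\otimes e_{j})$, then computing the Choi matrix term by term through $\mathrm{Vec}$ --- is exactly the route taken in that reference. The one genuinely delicate point, that the second tensor factor of $C(\Phi)$ lives in $End(\overline{H})$ and that conjugation acts on rank-one operators by $\overline{e_{i}e_{l}^{*}}=\overline{e_{i}}\,\overline{e_{l}}^{*}$, is handled correctly in your index computation.
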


\begin{rem}
\label{rem:1.10} As a corollary to the last proposition and Theorem
5.3 in \cite[p.51]{key-18}, the rank of the Choi matrix of $\Phi$
gives an achievable lower bound for both the number of any Kraus operators,
and of the dimension of any environment space. 
\end{rem}

\begin{rem}
The three representations of a quantum channel are exist in general
for any completely positive map, see \cite[ch.4]{key-11} and \cite{key-3}.
In the case of quantum channels, more conditions come up as a consequence
of being trace preserving map. \end{rem}
\begin{defn}
\label{def:1.12} Let $G$ be a group, and $\pi_{{\scriptscriptstyle H}},\pi_{{\scriptscriptstyle K}}$
be representations of $G$ on the Hilbert spaces $H,K$ respectively.
A quantum channel $\Phi:End(H)\longrightarrow End(K)$ is said to
be $G$-covariant if it is a $G$-equivariant map from $End(H)$ into
$End(K)$. i.e\begin{equation}
\Phi(\pi_{{\scriptscriptstyle H}}{\scriptstyle (g)}A\pi_{{\scriptscriptstyle H}}^{*}{\scriptstyle (g)})=\pi_{{\scriptscriptstyle K}}{\scriptstyle (g)}\Phi(A)\pi_{{\scriptscriptstyle K}}^{*}{\scriptstyle (g)}\label{eq:def ofG covariant channel}\end{equation}
for each $A\in End(H)$ and $g\in G$. If both $\pi_{{\scriptscriptstyle H}},\pi_{{\scriptscriptstyle K}}$
are irreducible representations, then $\Phi$ is called $G$-irreducibly
covariant channel.

The set of all $G$-covariant channels from $End(H)$ to $End(K)$
is denoted by $\underset{{\scriptstyle G}}{QC}(\pi_{{\scriptscriptstyle H}},\pi_{{\scriptscriptstyle K}})$. \end{defn}
\begin{prop}
\label{pro:1.13}Let $G$ be a group, and $\pi_{{\scriptscriptstyle H}},\pi_{{\scriptscriptstyle K}}$
be representations of $G$ on the Hilbert spaces $H,K$. Then the
set $\underset{{\scriptstyle G}}{QC}(\pi_{{\scriptscriptstyle H}},\pi_{{\scriptscriptstyle K}})$
is a convex set.
\end{prop}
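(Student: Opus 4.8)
The plan is to exhibit $\underset{{\scriptstyle G}}{QC}(\pi_{{\scriptscriptstyle H}},\pi_{{\scriptscriptstyle K}})$ as the intersection of two convex subsets of $End(End(H),End(K))$, namely the set of all quantum channels from $End(H)$ into $End(K)$ and the set $End(End(H),End(K))^{G}$ of $G$-equivariant maps. By Definition \ref{def:1.12}, a map belongs to $\underset{{\scriptstyle G}}{QC}(\pi_{{\scriptscriptstyle H}},\pi_{{\scriptscriptstyle K}})$ precisely when it lies in both of these sets, so once each factor is known to be convex the claim follows, since an intersection of convex sets is convex. Proposition \ref{pro:1.7} already gives that $End(End(H),End(K))^{G}$ is a subspace, hence convex, so the only remaining work is to show that the set of quantum channels is convex.

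For that, I would fix quantum channels $\Phi_{1},\Phi_{2}:End(H)\longrightarrow End(K)$ and a scalar $t\in[0,1]$, and set $\Phi=t\Phi_{1}+(1-t)\Phi_{2}$. Using the Choi representation from Proposition \ref{pro:1.8}(3), it suffices to check that $C(\Phi)$ is a positive matrix satisfying $Tr_{{\scriptscriptstyle K}}(C(\Phi))=I_{{\scriptscriptstyle \overline{H}}}$. Since the Choi--Jamiolkowski map $C$ of Example \ref{exa:1.5}(5) is linear, we have $C(\Phi)=tC(\Phi_{1})+(1-t)C(\Phi_{2})$. Positivity is then immediate from the fact that the positive matrices form a convex cone, and applying the (linear) partial trace $Tr_{{\scriptscriptstyle K}}$ together with the trace condition for $\Phi_{1},\Phi_{2}$ gives $Tr_{{\scriptscriptstyle K}}(C(\Phi))=tI_{{\scriptscriptstyle \overline{H}}}+(1-t)I_{{\scriptscriptstyle \overline{H}}}=I_{{\scriptscriptstyle \overline{H}}}$, as required.

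Combining the two steps, $\Phi$ is again a quantum channel and, being a convex combination of members of the subspace $End(End(H),End(K))^{G}$, is again $G$-equivariant; hence $\Phi\in\underset{{\scriptstyle G}}{QC}(\pi_{{\scriptscriptstyle H}},\pi_{{\scriptscriptstyle K}})$. I do not expect a genuine obstacle here: the statement is a routine stability check, and the only point deserving care is that complete positivity is preserved under convex combinations, which the Choi-matrix formulation reduces to the convexity of the cone of positive matrices.
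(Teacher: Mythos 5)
Your proposal is correct, and it is essentially the argument the paper intends: the paper states Proposition \ref{pro:1.13} without proof, treating it as the routine observation that $\underset{{\scriptstyle G}}{QC}(\pi_{{\scriptscriptstyle H}},\pi_{{\scriptscriptstyle K}})$ is the intersection of the convex set of quantum channels with the set $End(End(H),End(K))^{G}$, which is a subspace (hence convex) by Proposition \ref{pro:1.7}. Your verification that the channels themselves form a convex set, via linearity of the Choi--Jamiolkowski map together with the criterion of Proposition \ref{pro:1.8}(3), is a clean way to fill in the one step the paper leaves implicit.
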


\begin{prop}
\label{pro:1.14} Let $G$ be a group, and $\pi_{{\scriptscriptstyle H}},\pi_{{\scriptscriptstyle K}}$
be representations of $G$ on the Hilbert spaces $H,K$. Let $\Phi:End(H)\longrightarrow End(K)$
be a quantum channel given by a Stinespring representation $(E,\alpha)$.
If $\alpha:H\longrightarrow K\otimes E$ is a $G$-equivariant map
then $\Phi$ is a $G$-covariant channel.
\end{prop}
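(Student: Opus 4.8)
The plan is to verify the covariance identity \eqref{eq:def ofG covariant channel} directly from the Stinespring formula $\Phi(A)=Tr_{E}(\alpha A\alpha^{*})$. Since $\alpha:H\longrightarrow K\otimes E$ is assumed $G$-equivariant, $E$ carries a representation $\pi_{E}$ and $K\otimes E$ carries the tensor product representation $\pi_{K}\otimes\pi_{E}$ (one of the standard constructions recalled after Definition \ref{def:1.4}); equivariance then reads $(\pi_{K}(g)\otimes\pi_{E}(g))\alpha=\alpha\pi_{H}(g)$ for all $g\in G$. First I would substitute $\pi_{H}(g)A\pi_{H}^{*}(g)$ for $A$ and push the representations through $\alpha$ on the left and through $\alpha^{*}$ on the right.

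For the $\alpha^{*}$ factor I would invoke Example \ref{exa:1.5}(1), that the adjoint of a $G$-equivariant map is again $G$-equivariant, so that $\pi_{H}^{*}(g)\alpha^{*}=\alpha^{*}(\pi_{K}^{*}(g)\otimes\pi_{E}^{*}(g))$. Combining both sides, the expression becomes
\[
\Phi(\pi_{H}(g)A\pi_{H}^{*}(g))=Tr_{E}\big((\pi_{K}(g)\otimes\pi_{E}(g))\,\alpha A\alpha^{*}\,(\pi_{K}^{*}(g)\otimes\pi_{E}^{*}(g))\big).
\]
I would then factor $\pi_{K}(g)\otimes\pi_{E}(g)=(\pi_{K}(g)\otimes I_{E})(I_{K}\otimes\pi_{E}(g))$ and similarly for the adjoint, separating the action on the $K$-factor from that on the $E$-factor.

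Next I would use two properties of the partial trace. The first, $Tr_{E}((X\otimes I_{E})N(Y\otimes I_{E}))=X\,Tr_{E}(N)\,Y$ for $X,Y\in End(K)$, lets me pull $\pi_{K}(g)$ and $\pi_{K}^{*}(g)$ outside $Tr_{E}$, producing precisely the conjugation $\pi_{K}(g)(\cdots)\pi_{K}^{*}(g)$ demanded by the right-hand side of \eqref{eq:def ofG covariant channel}. The remaining inner term is $Tr_{E}((I_{K}\otimes\pi_{E}(g))\,\alpha A\alpha^{*}\,(I_{K}\otimes\pi_{E}^{*}(g)))$, and the heart of the argument is to show it equals $Tr_{E}(\alpha A\alpha^{*})=\Phi(A)$. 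Expanding $Tr_{E}$ along an orthonormal basis $\{e_{j}\}$ of $E$ (as in Proposition \ref{pro:1.9}), this inner term becomes $\sum_{j}(I_{K}\otimes f_{j}^{*})\,\alpha A\alpha^{*}\,(I_{K}\otimes f_{j})$ with $f_{j}=\pi_{E}^{*}(g)e_{j}$; since $\pi_{E}(g)$ is unitary the $f_{j}$ again form an orthonormal basis of $E$, and basis-independence of the partial trace collapses the sum back to $Tr_{E}(\alpha A\alpha^{*})$.

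The step I expect to require the most care is this last invariance, namely that conjugation by $I_{K}\otimes\pi_{E}(g)$ is invisible to $Tr_{E}$. It rests squarely on $\pi_{E}$ being unitary, so that $\pi_{E}^{*}(g)=\pi_{E}(g)^{-1}$ carries an orthonormal basis to an orthonormal basis; this is exactly the setting in which the adjoint notation of Definition \ref{def:1.12} is meaningful. Once it is established the computation closes immediately, and since $\Phi$ is assumed to be a quantum channel, the resulting $G$-equivariant map is a $G$-covariant channel.
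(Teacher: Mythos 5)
Your proof is correct. Note that the paper itself states Proposition \ref{pro:1.14} as a background result and gives no proof of it, so there is nothing to compare against; your argument is the standard one that the paper implicitly relies on. The chain of steps is sound: equivariance of $\alpha$ (and, by taking adjoints, of $\alpha^{*}$) converts $\Phi(\pi_{H}(g)A\pi_{H}^{*}(g))$ into $Tr_{E}\bigl((\pi_{K}(g)\otimes\pi_{E}(g))\,\alpha A\alpha^{*}\,(\pi_{K}^{*}(g)\otimes\pi_{E}^{*}(g))\bigr)$, the factorization $\pi_{K}(g)\otimes\pi_{E}(g)=(\pi_{K}(g)\otimes I_{E})(I_{K}\otimes\pi_{E}(g))$ together with $Tr_{E}((X\otimes I_{E})N(Y\otimes I_{E}))=X\,Tr_{E}(N)\,Y$ extracts the conjugation by $\pi_{K}(g)$, and the residual conjugation by $I_{K}\otimes\pi_{E}(g)$ disappears under $Tr_{E}$ by basis-independence of the partial trace. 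You also correctly identified the one hypothesis this last step genuinely needs, namely that $\pi_{E}(g)$ is unitary so that $\{\pi_{E}^{*}(g)e_{j}\}$ is again an orthonormal basis; this is consistent with the paper's conventions (Definition \ref{def:1.12} is written with adjoints, and all representations used in the paper, e.g.\ $\rho_{m}$ on $P_{m}$ and the tensor products thereof, are unitary), and without it the claim can actually fail, so flagging it is not pedantry but a necessary part of a complete proof.
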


Recall that for a Hilbert space $H$, the set of all density operators
on $H$, denoted by $D(H)$, is $\{\varrho\in End(H):\varrho\geq0,tr(\varrho)=1\}$.
\begin{prop}
\cite{key-4}\label{pro:1.15} If $G$ is a group, and $\pi_{{\scriptscriptstyle H}},\pi_{{\scriptscriptstyle K}}$
are representations of $G$ on the Hilbert spaces $H,K$. Let $\Phi:End(H)\longrightarrow End(K)$
be a linear map. Then\textup{ }
\begin{enumerate}
\item The map $\Phi$ is $G$-equivariant map if and only if $C(\Phi)$$\in$
$\left(\pi_{{\scriptscriptstyle K}}\otimes\check{\pi}_{{\scriptscriptstyle H}}(G)\right)^{\prime}$.\textup{ }
\item If $\pi_{{\scriptscriptstyle H}}$ is an irreducible representation,
then $\Phi$ is $G$-covariant channel if and only if $\frac{1}{d_{H}}C(\Phi)$$\in$
$\left(\pi_{{\scriptscriptstyle K}}\otimes\check{\pi}_{{\scriptscriptstyle H}}(G)\right)^{\prime}\bigcap D(K\otimes\overline{H})$.
\end{enumerate}
\end{prop}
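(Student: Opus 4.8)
The plan is to deduce both statements from a single structural fact recorded in Example~\ref{exa:1.5}(5): the Choi-Jamiolkowski map
\[
C:End(End(H),End(K))\longrightarrow End(K\otimes\overline{H})
\]
is a \emph{unitary} $G$-equivariant isomorphism. The guiding principle is that a $G$-equivariant map is nothing but a fixed point of the natural conjugation $G$-action on the relevant $End$-space, and that a unitary intertwiner carries fixed points bijectively onto fixed points. Part (1) is then a transport of fixed-point subspaces across $C$; part (2) grafts onto this the Choi test for channels from Proposition~\ref{pro:1.8}(3), with Schur's lemma supplying the one genuinely new ingredient.

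For part (1), I would first rewrite $G$-equivariance as a fixed-point condition. Equip $End(H)$ and $End(K)$ with the conjugation representations $l_{H}(g)(A)=\pi_{H}(g)A\pi_{H}(g^{-1})$ and $l_{K}(g)(B)=\pi_{K}(g)B\pi_{K}(g^{-1})$, and $End(End(H),End(K))$ with the induced representation $\mathcal{L}(g)(\Phi)=l_{K}(g)\circ\Phi\circ l_{H}(g^{-1})$. Unwinding Definition~\ref{def:1.4}, $\Phi$ is $G$-equivariant exactly when $\mathcal{L}(g)(\Phi)=\Phi$ for all $g$, i.e.\ $\Phi$ is $\mathcal{L}$-fixed. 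On the target side, since $G$ is compact we may take $\pi_{K}\otimes\check{\pi}_{H}$ unitary, so $(\pi_{K}\otimes\check{\pi}_{H})(g^{-1})=(\pi_{K}\otimes\check{\pi}_{H})(g)^{*}$; hence $M$ lies in the commutant $(\pi_{K}\otimes\check{\pi}_{H}(G))'$ if and only if $(\pi_{K}\otimes\check{\pi}_{H})(g)\,M\,(\pi_{K}\otimes\check{\pi}_{H})(g^{-1})=M$ for all $g$, that is, $M$ is fixed by the conjugation action on $End(K\otimes\overline{H})$. Because $C$ intertwines $\mathcal{L}$ with this conjugation action (Remark~\ref{rem:1.6} factors $C=T\circ\mathrm{Vec}$, both factors being $G$-equivariant) and is bijective, $\Phi$ is $\mathcal{L}$-fixed if and only if $C(\Phi)$ is fixed, which is precisely the asserted equivalence.

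For part (2), I would combine part (1) with the Choi characterization. By Definition~\ref{def:1.12}, $\Phi$ is a $G$-covariant channel iff it is a channel and $G$-equivariant, and by Proposition~\ref{pro:1.8}(3) the channel condition reads $C(\Phi)\geq0$ together with $Tr_{K}(C(\Phi))=I_{\overline{H}}$. In the forward direction, $Tr_{K}(C(\Phi))=I_{\overline{H}}$ forces $tr(C(\Phi))=d_{H}$, so $\frac{1}{d_{H}}C(\Phi)$ is positive of unit trace, hence lies in $D(K\otimes\overline{H})$; equivariance and part~(1) place it in the commutant. The reverse direction is where irreducibility enters: assuming $\frac{1}{d_{H}}C(\Phi)\in(\pi_{K}\otimes\check{\pi}_{H}(G))'\cap D(K\otimes\overline{H})$, part~(1) yields equivariance and $C(\Phi)\geq0$ is immediate, so it remains only to recover $Tr_{K}(C(\Phi))=I_{\overline{H}}$ from $tr(C(\Phi))=d_{H}$. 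Here I would use that the partial trace $Tr_{K}:End(K\otimes\overline{H})\to End(\overline{H})$ is $G$-equivariant, so that $C(\Phi)$ in the commutant forces $Tr_{K}(C(\Phi))$ to commute with $\check{\pi}_{H}(G)$; since $\pi_{H}$, hence $\check{\pi}_{H}$, is irreducible, Schur's lemma gives $Tr_{K}(C(\Phi))=\lambda I_{\overline{H}}$, and taking traces yields $\lambda d_{H}=tr(C(\Phi))=d_{H}$, so $\lambda=1$.

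The hard part is precisely this last step: seeing why the global trace constraint $tr(C(\Phi))=d_{H}$ can be upgraded to the partial-trace constraint $Tr_{K}(C(\Phi))=I_{\overline{H}}$. This upgrade fails for general $\pi_{H}$ and rests on two ingredients I would check carefully---the $G$-equivariance of the partial trace over $K$ (so that $Tr_{K}$ maps the commutant into the commutant of $\check{\pi}_{H}(G)$), and the irreducibility of $\check{\pi}_{H}$, which via Schur's lemma collapses that commutant to the scalars. Everything else is formal bookkeeping resting on the unitarity and equivariance of $C$.
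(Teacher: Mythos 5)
Your proof is correct, but note that the paper itself gives no proof of this proposition: it is quoted from \cite{key-4}, so there is no internal argument to compare against. Your write-up therefore fills a gap rather than parallels one, and it does so with exactly the right mechanism. Part (1) as a transport of fixed points across the unitary equivariant bijection $C$ (Example~\ref{exa:1.5}(5), Remark~\ref{rem:1.6}) is sound: equivariance of $\Phi$ is fixedness under $\mathcal{L}$, commutant membership of $C(\Phi)$ is fixedness under conjugation by $\pi_{K}\otimes\check{\pi}_{H}$, and a bijective intertwiner identifies the two fixed-point sets. Part (2) correctly isolates the one nontrivial implication: recovering $Tr_{K}(C(\Phi))=I_{\overline{H}}$ from the trace normalization, via $G$-equivariance of the partial trace (so $Tr_{K}(C(\Phi))$ commutes with $\check{\pi}_{H}(G)$) plus Schur's lemma applied to the irreducible $\check{\pi}_{H}$; this is where irreducibility is genuinely used, and your trace count $\lambda d_{H}=d_{H}$ closes it. Two small remarks. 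First, your appeal to compactness and unitarity in part (1) is misplaced where you put it: the equivalence ``$M$ commutes with $\pi(G)$ iff $M$ is fixed under conjugation by $\pi(G)$'' needs no unitarity at all, only $\pi(g)^{-1}=\pi(g^{-1})$. Where unitarity \emph{is} actually needed is in identifying the representation on $\overline{H}$ for which $C$ is equivariant: the natural computation produces the conjugate representation $\overline{\pi}_{H}$, and one needs $\pi_{H}$ unitary to identify $\overline{\pi}_{H}$ with the contragredient $\check{\pi}_{H}$ appearing in the statement (the paper's standing convention). Second, the statement's hypothesis ``$G$ a group'' is looser than what either you or \cite{key-4} really use; the intended setting throughout the paper is unitary representations, under which your argument is complete.
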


For further information on these results, we refer the reader to \cite{key-1},\cite{key-18},\cite{key-11},\cite{key-6},
\cite{key-4} and \cite{key-13}.

\section{eposic channels}

The goal of this section is to introduce our example of a class of
$SU(2)$-irreducibly covariant channels. The first two subsections
contain all the results that are needed to construct the channels.

\subsection{$SU(2)$-irreducible representations and $SU(2)$-equivariant maps.}

$\,$

\subsubsection{The irreducible representations of $SU(2)$}

For $m\in\mathbb{N}$, let $P_{{\scriptscriptstyle m}}$ denote the
space of homogeneous polynomials of degree $m$ in the two variables
$x_{1},x_{2}$. It is a complex vector space of dimension $m+1$ with
a basis consists of  $\left\{ x_{1}^{i}x_{2}^{m-i}:0\leq i\leq m\right\} $.
The space $P_{-1}$ will denote the zero vector space. For any $m\in\mathbb{N}$,
the compact group $SU(2)=\left\{ \tiny\left[{\scriptstyle \begin{array}{cc}
a & b\\
-\bar{b} & \bar{a}\end{array}}\right]:a,b\in\mathbb{C},\left|a\right|^{2}+\left|b\right|^{2}=1\right\} $ has a representation $\rho_{{\scriptscriptstyle m}}$ on $P_{{\scriptscriptstyle m}}$
given by \begin{equation}
\left(\rho_{{\scriptscriptstyle m}}{\scriptstyle (g)}f\right){\scriptstyle \left({\scriptstyle x_{1},x_{2}}\right)}=f{\scriptstyle \left(\left({\scriptstyle x_{1},x_{2}}\right){\textstyle g}\right)}=f(ax_{1}-\bar{b}x_{2},bx_{1}+\bar{a}x_{2})\label{eq:rho mg for any f}\end{equation}
for $f\in P_{{\scriptscriptstyle m}}$ and $g\in SU(2)$. 
\begin{prop}
$\,$ \cite[p.181]{key-14}, \cite[p.85-p.86]{key-2}, and \cite[p.276-p.279]{key-16}.
\begin{enumerate}
\item For $m\in\mathbb{N}$, $\rho_{{\scriptscriptstyle m}}$ is a unitary
representation with respect to the inner product on $P_{{\scriptscriptstyle m}}$
given by\begin{equation}
\left\langle x_{1}^{l}x_{2}^{m-l},\, x_{1}^{k}x_{2}^{m-k}\right\rangle _{P_{m}}=l!\,(m-l)!\,\delta_{lk}\label{eq:the inner prodct}\end{equation}

\item The set $\{\rho_{{\scriptscriptstyle m}}:m\in\mathbb{N}\}$ constitutes
the full list of the irreducible representations of $SU(2)$. 
\end{enumerate}
\end{prop}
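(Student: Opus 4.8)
The plan is to treat the two claims separately: unitarity by a symmetric-power argument, and the classification by characters together with the Weyl integration formula.

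For the first claim I would begin by identifying $P_{m}$ with the $m$-th symmetric power $\mathrm{Sym}^{m}(P_{1})$ of the defining representation $\rho_{1}$. Writing the action in the form $f\mapsto f(\,\cdot\,g)$ already makes $\rho_{m}(g_{1}g_{2})=\rho_{m}(g_{1})\rho_{m}(g_{2})$ immediate from associativity of matrix multiplication, so $\rho_{m}$ is genuinely a homomorphism. A direct check on $P_{1}$ shows that $\rho_{1}(g)$ has matrix $g$ in the basis $(x_{1},x_{2})$, so $\rho_{1}$ is just the inclusion $SU(2)\subset U(2)$ and is unitary for the inner product making $x_{1},x_{2}$ orthonormal -- exactly the $m=1$ case of the stated formula. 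The given inner product on $P_{m}$ is the Fischer inner product $\langle x^{\alpha},x^{\beta}\rangle=\alpha!\,\delta_{\alpha\beta}$, and this agrees, up to the single positive scalar $m!$ on all of $P_{m}$, with the inner product induced on $\mathrm{Sym}^{m}$ from the orthonormal tensor basis of $P_{1}^{\otimes m}$. Since rescaling an inner product by a positive constant does not affect whether a linear map preserves it, unitarity of $\rho_{1}$ propagates to $\rho_{m}=\mathrm{Sym}^{m}(\rho_{1})$. Equivalently one may differentiate and verify that each element of $\mathfrak{su}(2)$ acts on $P_{m}$ by a skew-Hermitian first-order differential operator and then invoke connectedness of $SU(2)$; the only real bookkeeping here is tracking the normalization constant relating the Fischer and symmetric-tensor inner products.

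For the second claim I would use characters. Every $g\in SU(2)$ is conjugate to $\mathrm{diag}(e^{i\theta},e^{-i\theta})$, and on such an element $\rho_{m}$ acts diagonally in the monomial basis, sending $x_{1}^{l}x_{2}^{m-l}$ to $e^{i(2l-m)\theta}x_{1}^{l}x_{2}^{m-l}$. Summing gives the character
\[
\chi_{m}(\theta)=\sum_{l=0}^{m}e^{i(2l-m)\theta}=\frac{\sin\big((m+1)\theta\big)}{\sin\theta}.
\]
Using the Weyl integration formula $\int_{SU(2)}f\,dg=\frac{2}{\pi}\int_{0}^{\pi}f(\theta)\sin^{2}\theta\,d\theta$ for class functions $f$, the factor $\sin^{2}\theta$ cancels the denominators and reduces the character inner product to
\[
\langle\chi_{m},\chi_{n}\rangle=\frac{2}{\pi}\int_{0}^{\pi}\sin\big((m+1)\theta\big)\sin\big((n+1)\theta\big)\,d\theta=\delta_{mn}.
\]
The case $m=n$ gives $\|\chi_{m}\|=1$, so each $\rho_{m}$ is irreducible by the standard norm criterion, and the off-diagonal vanishing shows the $\rho_{m}$ are pairwise non-isomorphic.

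It remains to prove completeness, which I expect to be the main obstacle. Here I would invoke Peter--Weyl: the irreducible characters form a complete orthonormal system in the space of $L^{2}$ class functions. Via the Weyl integration formula this space is identified with $L^{2}$ of $[0,\pi]$ against $\sin^{2}\theta\,d\theta$, and multiplication by $\sin\theta$ carries the family $\{\chi_{m}\}$ to $\{\sin((m+1)\theta)\}=\{\sin(k\theta):k\ge 1\}$, the classical Fourier sine basis, which is already complete in $L^{2}([0,\pi])$. Hence $\{\chi_{m}\}$ is itself a complete orthonormal set, leaving no room for further irreducibles: a hypothetical irreducible $\pi$ not isomorphic to any $\rho_{m}$ would have character $\chi_{\pi}$ orthogonal to every $\chi_{m}$, so $\chi_{\pi}(\theta)\sin\theta$ would be orthogonal to every $\sin(k\theta)$, forcing $\chi_{\pi}\equiv 0$ and contradicting $\chi_{\pi}(I)=\dim\pi>0$. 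This shows $\{\rho_{m}:m\in\mathbb{N}\}$ exhausts the irreducibles. The delicate points are getting the normalization in the Weyl integration formula right and the appeal to completeness of the sine basis; everything else is a short trigonometric computation.
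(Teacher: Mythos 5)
Your proof is correct. The paper does not actually prove this proposition---it is stated as background with citations to Sternberg, Br\"ocker--tom Dieck, and Vilenkin--Klimyk---and your argument (identifying $P_{m}$ with $\mathrm{Sym}^{m}(P_{1})$ and rescaling the Fischer inner product for unitarity; then the character computation $\chi_{m}(\theta)=\sin((m+1)\theta)/\sin\theta$, the Weyl integration formula, the norm-one irreducibility criterion, and completeness of the Fourier sine system for exhaustion) is precisely the standard proof given in those references, with all normalizations (the factor $m!$, the constant $2/\pi$) handled correctly; the only cosmetic remark is that your explicit sine-basis argument already establishes completeness of $\{\chi_{m}\}$, so the separate appeal to Peter--Weyl is redundant rather than load-bearing.
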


To facilitate the computations, we choose the orthonormal basis of
$P_{{\scriptscriptstyle m}}$ given by the functions $\left\{ f_{{\scriptscriptstyle l}}^{{\scriptscriptstyle m}}=a_{m}^{l}x_{1}^{l}x_{2}^{m-l}:\,0\leq l\leq m\right\} $,
where $a_{m}^{l}=\dfrac{{\scriptstyle 1}}{\sqrt{{\scriptstyle l!(m-l)!}}}$.
This basis is called the canonical basis of the $SU(2)$-irreducible
space $P_{{\scriptscriptstyle m}}$. The corresponding standard basis
of $End(P_{{\scriptscriptstyle m}})$ will be $\{E_{lk}=f_{{\scriptscriptstyle l-1}}^{{\scriptscriptstyle m}}f_{{\scriptscriptstyle k-1}}^{{\scriptscriptstyle m^{*}}}:1\leq l,k\leq m+1\}$.

It follows directly from the definition that the action of $g=\tiny\left[{\scriptstyle \begin{array}{cc}
a & b\\
-\bar{b} & \bar{a}\end{array}}\right]\in SU(2)$ on the canonical basis for $P_{{\scriptscriptstyle m}}$ will be
given by\begin{equation}
\rho_{{\scriptscriptstyle m}}{\scriptstyle (g)}(f_{{\scriptscriptstyle l}}^{{\scriptscriptstyle m}})=a_{m}^{l}(ax_{1}-\bar{b}x_{2})^{l}(bx_{1}+\bar{a}x_{2})^{m-l}\label{eq:action of rhom on basis}\end{equation}

In particular, for $g_{{\scriptscriptstyle 0}}=\tiny\left[{\scriptstyle \begin{array}{cc}
0 & 1\\
-1 & 0\end{array}}\right]$, we have $\rho_{{\scriptscriptstyle m}}{\scriptstyle (g_{{\scriptscriptstyle 0}})}(f_{{\scriptscriptstyle l}}^{{\scriptscriptstyle m}})=\left({\scriptstyle -1}\right){}^{{\scriptscriptstyle l}}f_{{\scriptscriptstyle m-l}}^{{\scriptscriptstyle m}}$
for any $0\leq l\leq m$. The element $g_{{\scriptscriptstyle 0}}$
will play a special role in constructing an $SU(2)$-equivariant unitary
map of $P_{{\scriptscriptstyle m}}$ onto $\overline{P}_{{\scriptscriptstyle m}}$.
We fix our notation for $g_{{\scriptscriptstyle 0}}$ to be $\tiny\left[{\scriptstyle \begin{array}{cc}
0 & 1\\
-1 & 0\end{array}}\right]$.
\begin{defn}
\label{def:2.2}For $m\in\mathbb{N}$, define the endomorphisms
\begin{enumerate}
\item $\Theta_{{\scriptscriptstyle m}}:P_{{\scriptscriptstyle m}}\longrightarrow\overline{P}_{{\scriptscriptstyle m}}$
by $\Theta_{{\scriptscriptstyle m}}\left(\overset{{\scriptscriptstyle m}}{\underset{{\scriptscriptstyle l=0}}{\sum}}\lambda_{l}f_{{\scriptscriptstyle l}}^{{\scriptscriptstyle m}}\right)=\overset{{\scriptscriptstyle m}}{\underset{{\scriptscriptstyle l=0}}{\sum}}\lambda_{l}\cdot f_{{\scriptscriptstyle l}}^{{\scriptscriptstyle m}}$,$\,$
where $\cdot$ is the multiplication in $\overline{P}_{{\scriptscriptstyle m}}$.
\item $J_{{\scriptscriptstyle m}}:P_{{\scriptscriptstyle m}}\longrightarrow\overline{P}_{{\scriptscriptstyle m}}$
by $J_{{\scriptscriptstyle m}}=\Theta_{{\scriptscriptstyle m}}\rho_{{\scriptscriptstyle m}}{\scriptstyle (g_{{\scriptscriptstyle 0}})}$.
\end{enumerate}
\end{defn}
\begin{prop}
\label{pro:2.3}For $m\in\mathbb{N}$, 
\begin{enumerate}
\item $\Theta_{{\scriptscriptstyle m}}$ is a unitary isomorphism that satisfies
$\overline{\rho_{{\scriptscriptstyle m}}{\scriptstyle (g)}}\Theta_{{\scriptscriptstyle m}}=\Theta_{{\scriptscriptstyle m}}\rho_{{\scriptscriptstyle m}}{\scriptstyle (\overline{g})}$
for any $g\in SU(2)$ . 
\item $J_{{\scriptscriptstyle m}}$ is an $SU(2)$-equivariant unitary isomorphism
from $P_{{\scriptscriptstyle m}}$ onto $\overline{P}_{{\scriptscriptstyle m}}$.
\end{enumerate}
\end{prop}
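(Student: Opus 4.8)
The plan is to dispatch both parts by reducing everything to the explicit action \eqref{eq:action of rhom on basis} together with a single matrix identity in $SU(2)$. For part (1) I would first record that, on the canonical basis, $\Theta_m$ is coefficient conjugation: if a vector of $P_m$ is written as a polynomial $p$, then $\Theta_m(p)$ is, as an element of the common underlying space, the polynomial $\overline{p}$ obtained by conjugating the coefficients of $p$ in the basis $\{f_l^m\}$. In particular $\Theta_m$ is a linear bijection carrying the orthonormal basis $\{f_l^m\}$ of $P_m$ onto $\{f_l^m\}$, which is orthonormal in $\overline{P}_m$ because $\langle f_k^m\,|\,f_l^m\rangle_{\overline{P}_m}=\langle f_l^m\,|\,f_k^m\rangle_{P_m}=\delta_{kl}$; hence $\Theta_m$ is a unitary isomorphism.

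For the intertwining relation I would evaluate both sides on $f_l^m$. Recalling that the conjugate operator $\overline{\rho_m(g)}$ acts as $\rho_m(g)$ on the underlying space (the conjugate operator being the same map of sets), and that $\Theta_m$ fixes $f_l^m$ since its coefficient is real, the left-hand side $\overline{\rho_m(g)}\,\Theta_m(f_l^m)$ equals $\rho_m(g)(f_l^m)$. For the right-hand side, \eqref{eq:action of rhom on basis} gives $\rho_m(\overline{g})(f_l^m)=a_m^l(\overline{a}x_1-bx_2)^l(\overline{b}x_1+ax_2)^{m-l}$, and applying $\Theta_m$, i.e. conjugating coefficients with $a_m^l\in\mathbb{R}$, returns $a_m^l(ax_1-\overline{b}x_2)^l(bx_1+\overline{a}x_2)^{m-l}=\rho_m(g)(f_l^m)$. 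The two linear maps $P_m\to\overline{P}_m$ thus agree on a basis, so $\overline{\rho_m(g)}\,\Theta_m=\Theta_m\,\rho_m(\overline{g})$. The one point requiring care is not to conflate the conjugate operator $\overline{\rho_m(g)}$ with $\rho_m(\overline{g})$: the displayed relation is precisely the assertion that these two differ by the coordinate conjugation $\Theta_m$.

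For part (2), unitarity and bijectivity of $J_m=\Theta_m\rho_m(g_0)$ are immediate, since $\rho_m(g_0)$ is unitary ($\rho_m$ being a unitary representation) and $\Theta_m$ is a unitary isomorphism by part (1), so their composite is a unitary isomorphism $P_m\to\overline{P}_m$. It then remains to verify the equivariance $\overline{\rho_m(g)}\,J_m=J_m\,\rho_m(g)$ for all $g\in SU(2)$. Inserting $J_m=\Theta_m\rho_m(g_0)$ and using part (1), the left-hand side equals $\overline{\rho_m(g)}\,\Theta_m\,\rho_m(g_0)=\Theta_m\,\rho_m(\overline{g})\rho_m(g_0)=\Theta_m\,\rho_m(\overline{g}\,g_0)$, while the right-hand side equals $\Theta_m\,\rho_m(g_0)\rho_m(g)=\Theta_m\,\rho_m(g_0\,g)$. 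As $\Theta_m$ is injective and $\rho_m$ is a homomorphism, the claim reduces to the group identity $\overline{g}\,g_0=g_0\,g$.

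This last relation is the only place where the particular structure of $SU(2)$ enters, and I would check it by a direct $2\times2$ computation: for $g=\begin{pmatrix}a&b\\-\overline{b}&\overline{a}\end{pmatrix}$ one gets $g_0\,g=\begin{pmatrix}-\overline{b}&\overline{a}\\-a&-b\end{pmatrix}=\overline{g}\,g_0$, equivalently $g_0\,g\,g_0^{-1}=\overline{g}$, which is just the self-conjugacy of the defining representation of $SU(2)$. Everything else is routine bookkeeping with the conjugate space $\overline{P}_m$; the main obstacle is keeping that bookkeeping consistent rather than any substantive difficulty.
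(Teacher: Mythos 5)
Your proposal is correct and follows essentially the same route as the paper: unitarity of $J_m$ as a composite of unitaries, and equivariance via the chain $\overline{\rho_m(g)}\,J_m=\Theta_m\rho_m(\overline{g})\rho_m(g_0)=\Theta_m\rho_m(g_0)\rho_m(g)=J_m\rho_m(g)$, resting on part (1) and the identity $g_0g=\overline{g}\,g_0$. The only difference is that you fill in the details the paper leaves implicit -- the basis computation for part (1), which the paper calls ``straightforward,'' and the explicit $2\times 2$ verification of $g_0g=\overline{g}\,g_0$ -- and both of these check out.
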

\begin{proof}
$\,$

The first assertion is straightforward. For the second one, note that
$J_{{\scriptscriptstyle m}}$ is a composition of unitary maps, so
is unitary. As $g_{{\scriptscriptstyle 0}}g$$=\bar{g}g_{{\scriptscriptstyle 0}}$
for any $g\in$$SU(2)$, we have 

$J_{{\scriptscriptstyle m}}\rho_{{\scriptscriptstyle m}}{\scriptstyle (g)}$$=\Theta_{{\scriptscriptstyle m}}\rho_{{\scriptscriptstyle m}}{\scriptstyle (g_{{\scriptscriptstyle 0}})}\rho_{{\scriptscriptstyle m}}{\scriptstyle (g)}$$=\Theta_{{\scriptscriptstyle m}}\rho_{{\scriptscriptstyle m}}{\scriptstyle (\overline{g})}\rho_{{\scriptscriptstyle m}}{\scriptstyle (g_{{\scriptscriptstyle 0}})}$$=\overline{\rho_{{\scriptscriptstyle m}}{\scriptstyle (g)}}\Theta_{{\scriptscriptstyle m}}\rho_{{\scriptscriptstyle m}}{\scriptstyle (g_{{\scriptscriptstyle 0}})}=\overline{\rho_{{\scriptscriptstyle m}}{\scriptstyle (g)}}J_{{\scriptscriptstyle m}}$.
\end{proof}

Note that on a basis element $f_{{\scriptscriptstyle l}}^{{\scriptscriptstyle m}}$
of $P_{{\scriptscriptstyle m}}$, $0\leq l\leq m$ , we have \begin{equation}
\begin{array}{ccc}
J_{{\scriptscriptstyle m}}(f_{{\scriptscriptstyle l}}^{{\scriptscriptstyle m}})={\scriptstyle \left(-1\right){}^{l}}f_{{\scriptscriptstyle m-l}}^{{\scriptscriptstyle m}} & , & J_{{\scriptscriptstyle m}}^{*}(f_{{\scriptscriptstyle l}}^{{\scriptscriptstyle m}})={\scriptstyle \left(-1\right){}^{m-l}}f_{{\scriptscriptstyle m-l}}^{{\scriptscriptstyle m}}\end{array}\label{eq:jm on the basis}\end{equation}

Recall the definition of the $flip$ map in Example \ref{exa:1.5}.
By direct computations on the basic elements $f_{{\scriptscriptstyle l}}^{{\scriptscriptstyle m}}\otimes f_{{\scriptscriptstyle j}}^{{\scriptscriptstyle n}}$
of $P_{{\scriptscriptstyle m}}\otimes P_{{\scriptscriptstyle n}}$,
we obtain 
\begin{prop}
\label{pro:2.4} For $m,n\in\mathbb{N}$, the map $flip_{{\scriptscriptstyle P_{n}}}^{{\scriptscriptstyle \overline{P}_{m}}}(J_{{\scriptscriptstyle m}}\otimes I_{P_{n}}):P_{{\scriptscriptstyle m}}\otimes P_{{\scriptscriptstyle n}}\longrightarrow P_{{\scriptscriptstyle n}}\otimes\overline{P}_{{\scriptscriptstyle m}}$
is an $SU(2)$-equivariant isomorphism that satisfies\[
flip_{{\scriptscriptstyle P_{n}}}^{{\scriptscriptstyle \overline{P}_{m}}}(J_{{\scriptscriptstyle m}}\otimes I_{P_{n}})=(I_{P_{n}}\otimes J_{{\scriptscriptstyle m}})flip_{{\scriptscriptstyle P_{n}}}^{{\scriptscriptstyle P_{m}}}\]

\end{prop}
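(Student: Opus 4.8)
The plan is to verify the displayed identity on a basis and then assemble the two structural claims (equivariance and invertibility) from the closure properties already recorded in the excerpt. First I would prove the equality of the two composites by evaluating both sides on a generic basis vector $f_{{\scriptscriptstyle l}}^{{\scriptscriptstyle m}}\otimes f_{{\scriptscriptstyle j}}^{{\scriptscriptstyle n}}$ of $P_{{\scriptscriptstyle m}}\otimes P_{{\scriptscriptstyle n}}$ (equivalently, on a decomposable tensor $h\otimes k$ with $h\in P_{{\scriptscriptstyle m}}$, $k\in P_{{\scriptscriptstyle n}}$). On the left, $J_{{\scriptscriptstyle m}}\otimes I_{P_{n}}$ sends $f_{{\scriptscriptstyle l}}^{{\scriptscriptstyle m}}\otimes f_{{\scriptscriptstyle j}}^{{\scriptscriptstyle n}}$ to $J_{{\scriptscriptstyle m}}(f_{{\scriptscriptstyle l}}^{{\scriptscriptstyle m}})\otimes f_{{\scriptscriptstyle j}}^{{\scriptscriptstyle n}}\in\overline{P}_{{\scriptscriptstyle m}}\otimes P_{{\scriptscriptstyle n}}$, and then $flip_{{\scriptscriptstyle P_{n}}}^{{\scriptscriptstyle \overline{P}_{m}}}$ swaps the factors to yield $f_{{\scriptscriptstyle j}}^{{\scriptscriptstyle n}}\otimes J_{{\scriptscriptstyle m}}(f_{{\scriptscriptstyle l}}^{{\scriptscriptstyle m}})$. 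On the right, $flip_{{\scriptscriptstyle P_{n}}}^{{\scriptscriptstyle P_{m}}}$ first produces $f_{{\scriptscriptstyle j}}^{{\scriptscriptstyle n}}\otimes f_{{\scriptscriptstyle l}}^{{\scriptscriptstyle m}}$, after which $I_{P_{n}}\otimes J_{{\scriptscriptstyle m}}$ gives $f_{{\scriptscriptstyle j}}^{{\scriptscriptstyle n}}\otimes J_{{\scriptscriptstyle m}}(f_{{\scriptscriptstyle l}}^{{\scriptscriptstyle m}})$. The two results agree on every basis vector, and both maps are linear, so they coincide. If one wants the computation to be fully explicit, one substitutes $J_{{\scriptscriptstyle m}}(f_{{\scriptscriptstyle l}}^{{\scriptscriptstyle m}})=(-1)^{l}f_{{\scriptscriptstyle m-l}}^{{\scriptscriptstyle m}}$ from Definition \ref{def:2.2} and its consequence, but this is not really necessary: the identity is just the naturality of the flip with respect to the morphism $J_{{\scriptscriptstyle m}}$ placed in the first tensor slot.

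Next I would establish $SU(2)$-equivariance by noting that each constituent map is equivariant and that equivariance is preserved under tensoring and composition. Concretely, $J_{{\scriptscriptstyle m}}$ is $SU(2)$-equivariant by Proposition \ref{pro:2.3}, $I_{P_{n}}$ is trivially equivariant, their tensor product $J_{{\scriptscriptstyle m}}\otimes I_{P_{n}}$ is equivariant because the tensor representation acts factorwise, and $flip_{{\scriptscriptstyle P_{n}}}^{{\scriptscriptstyle \overline{P}_{m}}}$ is equivariant by Example \ref{exa:1.5}. Since a composite of equivariant maps is again equivariant, the map $flip_{{\scriptscriptstyle P_{n}}}^{{\scriptscriptstyle \overline{P}_{m}}}(J_{{\scriptscriptstyle m}}\otimes I_{P_{n}})$ is $SU(2)$-equivariant. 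For invertibility, $J_{{\scriptscriptstyle m}}$ is a unitary isomorphism and $I_{P_{n}}$ is the identity, so $J_{{\scriptscriptstyle m}}\otimes I_{P_{n}}$ is an isomorphism; the flip is unitary, hence invertible; and a composite of isomorphisms is an isomorphism. Thus the map is an $SU(2)$-equivariant isomorphism (in fact a composite of unitaries, hence unitary).

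I expect essentially no genuine obstacle here, since the mathematical content is only the naturality of the braiding together with the formal closure properties of equivariant isomorphisms. The only mild bookkeeping, should one insist on the purely computational route, is to track which tensor factor lands in $P_{{\scriptscriptstyle n}}$ and which in $\overline{P}_{{\scriptscriptstyle m}}$, and to carry the index shift and sign $f_{{\scriptscriptstyle l}}^{{\scriptscriptstyle m}}\mapsto(-1)^{l}f_{{\scriptscriptstyle m-l}}^{{\scriptscriptstyle m}}$ coming from $J_{{\scriptscriptstyle m}}$; but because the \emph{same} $J_{{\scriptscriptstyle m}}$ occurs on both sides, all of this data is common to the two composites and cancels, so no real difficulty emerges.
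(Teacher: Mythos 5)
Your proof is correct and follows essentially the same route as the paper, which simply asserts the result ``by direct computations on the basic elements $f_{l}^{m}\otimes f_{j}^{n}$ of $P_{m}\otimes P_{n}$'' together with the standard equivariance and unitarity facts from Example \ref{exa:1.5} and Proposition \ref{pro:2.3}. Your observation that the displayed identity is just naturality of the flip, so the explicit sign $(-1)^{l}$ and index shift from $J_{m}$ never need to be tracked, is a clean way to organize exactly that computation.
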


\subsubsection{Clebsch-Gordan decomposition and $SU(2)$-equivariant maps on $P_{{\scriptscriptstyle m}}\otimes P_{{\scriptscriptstyle n}}$}

$\,$

For $m,n\in\mathbb{N}$, let $\rho_{{\scriptscriptstyle m}}$, $\rho_{{\scriptscriptstyle n}}$
be the corresponding irreducible representations of $SU(2)$ on $P_{{\scriptscriptstyle m}}$
and $P_{{\scriptscriptstyle n}}$ respectively. The new constructed
representation by taking the tensor product of $\rho_{{\scriptscriptstyle m}}$
and $\rho_{{\scriptscriptstyle n}}$ is not necessarily irreducible.
Clebsch Gordan Decomposition formula \cite[p.87]{key-2}, gives the
decomposition of $\rho_{{\scriptscriptstyle m}}\otimes\rho_{{\scriptscriptstyle n}}$
into irreducible representations, namely $\rho_{{\scriptscriptstyle m}}\otimes\rho_{{\scriptscriptstyle n}}=\overset{{\scriptstyle {\scriptscriptstyle \min\left\{ m,n\right\} }}}{\underset{{\scriptstyle {\scriptscriptstyle h=0}}}{\bigoplus}}\rho_{{\scriptscriptstyle m+n-2h}}$.
Note that the corresponding decomposition for $P_{{\scriptscriptstyle m}}\otimes P_{{\scriptscriptstyle n}}$
will be given by the formula\begin{equation}
P_{{\scriptscriptstyle m}}\otimes P_{{\scriptscriptstyle n}}\approxeq\overset{{\scriptstyle {\scriptscriptstyle \min\left\{ m,n\right\} }}}{\underset{{\scriptstyle {\scriptscriptstyle h=0}}}{\bigoplus}}P_{{\scriptscriptstyle m+n-2h}}\label{eq:cleb for spaces}\end{equation}
Since the representations $\rho_{{\scriptscriptstyle m+n-2h}}$ are
irreducible, by Schur Lemma \cite[p.13]{key-13}, the direct sum in
the last formula will be an orthogonal direct sum. 

To obtain a concrete representation of $P_{{\scriptscriptstyle m}}\otimes P_{{\scriptscriptstyle n}}$,
let $x:=(x_{1},x_{2})$, $y:=(y_{1},y_{2})$, $P_{{\scriptscriptstyle m}}:=P_{{\scriptscriptstyle m}}(x)$,
and $P_{{\scriptscriptstyle n}}:=P_{{\scriptscriptstyle n}}(y)$.
We embed the tensor product $P_{{\scriptscriptstyle m}}(x)\otimes P_{{\scriptscriptstyle n}}(y)$
into $\mathbb{C}[x,y]$ as follows. Define the map $\,:P_{{\scriptscriptstyle m}}(x)\times P_{{\scriptscriptstyle n}}(y)\longrightarrow\mathbb{C}[x,y]$$\,$
by$\,$ $(f(x),g(y))\longmapsto\, f(x)g(y)$, it is a bilinear map
hence extends to a linear $\, T:P_{{\scriptscriptstyle m}}(x)\otimes P_{{\scriptscriptstyle n}}(y)\longrightarrow\mathbb{C}[x,y]$
taking $f(x)\otimes g(y)$ to $f(x)g(y)$. Let $P_{{\scriptscriptstyle m,n}}$
denote the vector space of polynomials in $x$ and $y$ of bi-degree
$(m,n)$ (homogeneous polynomials of degree $m$ in $x=(x_{1},x_{2})$
and of degree $n$ in $y=(y_{1},y_{2})$). The space $P_{{\scriptscriptstyle m,n}}$
has a basis consist of $\{x_{1}^{{\scriptscriptstyle s}}x_{2}^{{\scriptscriptstyle m-s}}y_{1}^{{\scriptscriptstyle t}}y_{2}^{{\scriptscriptstyle n-t}}=\frac{1}{a_{s}^{m}a_{t}^{n}}T(f_{s}^{{\scriptscriptstyle m}}\otimes f_{t}^{{\scriptscriptstyle n}}):0\leq s\leq m,0\leq t\leq n\}$.
Since the map $T$ takes the basis of $P_{{\scriptscriptstyle m}}\otimes P_{{\scriptscriptstyle n}}$
to a basis in $P_{{\scriptscriptstyle m,n}}$, it is an isomorphism.
Henceforth, we will use $P_{{\scriptscriptstyle m,n}}$ as a concrete
representation of $P_{{\scriptscriptstyle m}}\otimes P_{{\scriptscriptstyle n}}$.
Using this identification, we define the following polynomial maps

\begin{tabular}{lccccclcc}
$\Delta_{xy}:P_{{\scriptscriptstyle m}}\otimes P_{{\scriptscriptstyle n}}\longrightarrow P_{{\scriptscriptstyle m+1}}\otimes P_{{\scriptscriptstyle n-1}}$ &  &  &  &  &  & $\Delta_{yx}:P_{{\scriptscriptstyle m}}\otimes P_{{\scriptscriptstyle n}}\longrightarrow P_{{\scriptscriptstyle m-1}}\otimes P_{{\scriptscriptstyle n+1}}$ &  & \tabularnewline
$\Delta_{xy}(f(x,y))=\left(x_{1}\frac{\partial}{\partial y_{1}}+x_{2}\frac{\partial}{\partial y_{2}}\right)f(x,y)$ &  &  &  &  &  & $\Delta_{yx}(f(x,y))=\left(y_{1}\frac{\partial}{\partial x_{1}}+y_{2}\frac{\partial}{\partial x_{2}}\right)f(x,y)$ &  & \tabularnewline
\end{tabular}

\begin{flushleft}
$\,$,
\par\end{flushleft}

\begin{tabular}{lccccccclcc}
$\Gamma_{xy}:P_{{\scriptscriptstyle m}}\otimes P_{{\scriptscriptstyle n}}\longrightarrow P_{{\scriptscriptstyle m+1}}\otimes P_{{\scriptscriptstyle n+1}}$ &  &  &  &  &  &  &  & $\Omega_{xy}:P_{{\scriptscriptstyle m}}\otimes P_{{\scriptscriptstyle n}}\longrightarrow P_{{\scriptscriptstyle m-1}}\otimes P_{{\scriptscriptstyle n-1}}$ &  & \tabularnewline
$\Gamma_{xy}f(x,y)=\left(x_{1}y_{2}-y_{1}x_{2}\right)f(x,y)$ &  &  &  &  &  &  &  & $\Omega_{xy}f(x,y)=\left(\frac{\partial}{\partial x_{1}}\frac{\partial}{\partial y_{2}}-\frac{\partial}{\partial x_{2}}\frac{\partial}{\partial y_{1}}\right)f(x,y)$ &  & \tabularnewline
\end{tabular}

\begin{flushleft}
for $f(x,y)\in P_{{\scriptscriptstyle m}}\otimes P_{{\scriptscriptstyle n}}$.
\par\end{flushleft}

\begin{rem}
$\,$ Let $f(x,y)\in P_{{\scriptscriptstyle m}}\otimes P_{{\scriptscriptstyle n}}$,
then by direct computations, we have:
\begin{enumerate}
\item For $g(x,y)\in P_{{\scriptscriptstyle m+1}}\otimes P_{{\scriptscriptstyle n-1}}$,

$\left\langle \Delta_{xy}\left(f(x,y)\right)\left|g(x,y)\right.\right\rangle _{{\scriptscriptstyle P_{{\scriptscriptstyle m+1}}\otimes P_{{\scriptscriptstyle n-1}}}}=\left\langle f(x,y)\left|\Delta_{yx}\left(g(x,y)\right)\right.\right\rangle _{{\scriptscriptstyle P_{{\scriptscriptstyle m}}\otimes P_{{\scriptscriptstyle n}}}}$.

\item For $g(x,y)\in P_{{\scriptscriptstyle m+1}}\otimes P_{{\scriptscriptstyle n+1}}$,

$\left\langle \Gamma_{xy}\left(f(x,y)\right)\left|g(x,y)\right.\right\rangle _{{\scriptscriptstyle P_{{\scriptscriptstyle m+1}}\otimes P_{{\scriptscriptstyle n+1}}}}=\left\langle f(x,y)\left|\Omega_{xy}\left(g(x,y)\right)\right.\right\rangle _{{\scriptscriptstyle P_{{\scriptscriptstyle m}}\otimes P_{{\scriptscriptstyle n}}}}$.

\end{enumerate}
\end{rem}

By the remark above and \cite[p.47]{key-12}, we have:
\begin{prop}
\label{pro:2.6} The operators $\Delta_{xy}$ , $\Delta_{yx}$ , $\Gamma_{xy}$
and $\Omega_{xy}$ are $SU(2)$-equivariant maps that satisfy $\Delta_{xy}^{*}=\Delta_{yx}$
and $\Gamma_{xy}^{*}=\Omega_{xy}$. 
\end{prop}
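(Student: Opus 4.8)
The plan is to separate the statement into the two adjoint identities and the four equivariance claims, and to observe that once a couple of these are checked by hand the remainder follow formally.

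The two adjoint relations are essentially a reformulation of the Remark preceding the proposition. By definition of the Hilbert--Schmidt adjoint, $\Delta_{xy}^{*}$ is the unique operator satisfying $\langle \Delta_{xy}(f)\,|\,g\rangle = \langle f\,|\,\Delta_{xy}^{*}(g)\rangle$ for all $f\in P_{m}\otimes P_{n}$ and $g\in P_{m+1}\otimes P_{n-1}$. Part (1) of the Remark exhibits $\Delta_{yx}$ as an operator with exactly this property, so uniqueness of the adjoint forces $\Delta_{xy}^{*}=\Delta_{yx}$; the identity $\Gamma_{xy}^{*}=\Omega_{xy}$ follows in the same way from part (2). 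This disposes of both adjoint statements with no further work.

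For the equivariance I would next reduce the four claims to two direct computations. Since every $\rho_{m}$ is unitary, each tensor product $\rho_{m}\otimes\rho_{n}$ is a unitary representation of the compact group $SU(2)$, so by Example \ref{exa:1.5}(1) the adjoint of an $SU(2)$-equivariant map between these spaces is again $SU(2)$-equivariant. Combined with the adjoint identities just established, it therefore suffices to prove equivariance directly for $\Gamma_{xy}$ and for $\Delta_{xy}$: the equivariance of $\Gamma_{xy}$ yields that of $\Omega_{xy}=\Gamma_{xy}^{*}$, and the equivariance of $\Delta_{xy}$ yields that of $\Delta_{yx}=\Delta_{xy}^{*}$. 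The operator $\Gamma_{xy}$ is multiplication by the bracket $x_{1}y_{2}-x_{2}y_{1}=\det\!\left[\begin{smallmatrix}x_{1}&x_{2}\\ y_{1}&y_{2}\end{smallmatrix}\right]$; under the substitution $x\mapsto xg$, $y\mapsto yg$ that defines $\rho_{m}\otimes\rho_{n}$, this determinant is multiplied by $\det g=|a|^{2}+|b|^{2}=1$ for $g\in SU(2)$, so the bracket is invariant and multiplication by it commutes with the action. For $\Delta_{xy}=x_{1}\partial_{y_{1}}+x_{2}\partial_{y_{2}}$ I would apply the chain rule to $f(xg,yg)$: the derivatives $\partial_{y_{1}},\partial_{y_{2}}$ reproduce the columns of $g$, and after contracting against $x_{1},x_{2}$ they reassemble into $\tilde{x}_{1}\partial_{\tilde{y}_{1}}+\tilde{x}_{2}\partial_{\tilde{y}_{2}}$ evaluated at $(xg,yg)$, which is precisely $(\Delta_{xy}f)(xg,yg)$. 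Conceptually this is the $SU(2)$-invariance of the polarization operator $\langle x,\partial_{y}\rangle$, and both facts are instances of the classical covariance of the transvectant and Cayley-$\Omega$ operators recorded in \cite[p.47]{key-12}.

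The only genuine obstacle is the chain-rule bookkeeping for $\Delta_{xy}$: one must track how each $\partial_{y_{i}}$ transforms under $y\mapsto yg$ and verify that the resulting coefficients recombine into the transformed variables $\tilde{x}_{i}$. The clean cancellation rests on the specific entries of $g\in SU(2)$ -- equivalently, on $x$ and $\partial_{y}$ transforming contragradiently so that their contraction is invariant. The $\Gamma_{xy}$ computation and the two adjoint deductions are routine by comparison.
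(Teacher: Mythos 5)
Your proof is correct, and it is more self-contained than the paper's. The paper disposes of this proposition in a single line: the two adjoint identities are attributed to the preceding Remark exactly as you do (uniqueness of the Hilbert--Schmidt adjoint applied to the two pairing identities), but the $SU(2)$-equivariance of the four operators is simply cited from \cite[p.47]{key-12}, with no computation given. You instead prove equivariance: first a formal reduction via Example \ref{exa:1.5}(1) --- legitimate here because each $\rho_{m}$ is unitary, hence so is $\rho_{m}\otimes\rho_{n}$ --- so that only $\Gamma_{xy}$ and $\Delta_{xy}$ need direct verification; then the determinant argument for $\Gamma_{xy}$ and the chain rule for $\Delta_{xy}$. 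Both computations are right under the paper's convention $(\rho(g)f)(x,y)=f(xg,yg)$. One small correction to your closing commentary: the cancellation for $\Delta_{xy}$ does not rest on the specific entries of $g\in SU(2)$ at all; since $x$ and $y$ undergo the same substitution $x\mapsto xg$, $y\mapsto yg$, the old derivatives expand as $\partial_{y_{i}}=\sum_{j}g_{ij}\partial_{\tilde{y}_{j}}$, and the contraction $x_{1}\partial_{y_{1}}+x_{2}\partial_{y_{2}}$ is therefore invariant for \emph{every} invertible $g$. It is only $\Gamma_{xy}$ (and dually $\Omega_{xy}$) whose invariance genuinely uses $\det g=1$. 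What your route buys is a proof readable without consulting \cite{key-12} and a precise record of where unimodularity enters; what the paper's citation buys is brevity, these being the classical polarization and Cayley operators of invariant theory.
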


\begin{thm}
\cite[chp.3]{key-12}\textbf{\label{thm:(Clebsch-Gordan-expansion)}}
For a polynomial $f(x,y)$ of bi-degree $(m,n)$ we have\[
f(x,y)=\sum_{{\scriptscriptstyle h=0}}^{{\scriptscriptstyle \min\{m,n\}}}c_{m,n,h}\ \Gamma_{xy}^{h}\Delta_{yx}^{n-h}\Delta_{xy}^{n-h}\Omega_{xy}^{h}(f(x,y))\]
where the coefficients $c_{m,n,h}$ are determined by induction as
follows: $c_{m,0,0}=1$ and for $n\geq1$
\end{thm}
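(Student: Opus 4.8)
The plan is to recognize each summand on the right-hand side as a rescaled orthogonal projection onto a single irreducible constituent of $P_{m}\otimes P_{n}$, so that the full sum is a resolution of the identity on $P_{m,n}$. Set $T_{h}=\Delta_{xy}^{n-h}\Omega_{xy}^{h}$. A degree count shows that $\Omega_{xy}^{h}$ sends $P_{m,n}$ into $P_{m-h,n-h}$ and that $\Delta_{xy}^{n-h}$ then sends this into $P_{m+n-2h,0}$, which is isomorphic to the $SU(2)$-irreducible space $P_{m+n-2h}(x)$ of type $\rho_{m+n-2h}$. By Proposition \ref{pro:2.6} the maps $\Delta_{xy},\Omega_{xy}$ are $SU(2)$-equivariant with $\Delta_{xy}^{*}=\Delta_{yx}$ and $\Omega_{xy}^{*}=\Gamma_{xy}$, so $T_{h}$ is equivariant and $T_{h}^{*}=\Gamma_{xy}^{h}\Delta_{yx}^{n-h}$. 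Hence the $h$-th summand is exactly $c_{m,n,h}\,T_{h}^{*}T_{h}$, a positive equivariant endomorphism of $P_{m,n}$.

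Next I would feed in the Clebsch-Gordan decomposition \eqref{eq:cleb for spaces}, $P_{m}\otimes P_{n}\cong\bigoplus_{k=0}^{\min\{m,n\}}P_{m+n-2k}$, which is multiplicity free. By Schur's Lemma, the equivariant map $T_{h}$ restricted to the summand $P_{m+n-2k}$ takes values in an irreducible of type $\rho_{m+n-2h}$ and therefore vanishes unless $k=h$; consequently $T_{h}^{*}T_{h}$ acts as a scalar $\lambda_{k}$ on each $P_{m+n-2k}$ with $\lambda_{k}=0$ for $k\neq h$. Provided $\lambda_{h}\neq0$, the choice $c_{m,n,h}=\lambda_{h}^{-1}$ turns $c_{m,n,h}\,T_{h}^{*}T_{h}$ into the orthogonal projection $\pi_{h}$ onto the constituent $P_{m+n-2h}$, and $\sum_{h}\pi_{h}=I_{P_{m,n}}$ is the claimed identity applied to $f$. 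The base case $n=0$ is immediate: the sum then has only the term $h=0$, all operators appear to the zeroth power, and $c_{m,0,0}=1$ gives $f=f$.

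The real content lies in two computations. First, the non-vanishing $\lambda_{h}\neq0$: it is enough to show $T_{h}$ is not identically zero on $P_{m+n-2h}$, which I would obtain from the commutation relation $\Omega_{xy}\Gamma_{xy}-\Gamma_{xy}\Omega_{xy}=(\deg_{x}+\deg_{y}+2)\,I$, a direct differential-operator computation with $\Gamma_{xy}=x_{1}y_{2}-x_{2}y_{1}$ and $\Omega_{xy}=\partial_{x_{1}}\partial_{y_{2}}-\partial_{x_{2}}\partial_{y_{1}}$. Realizing the $h$-th constituent as $\Gamma_{xy}^{h}$ applied to a harmonic polynomial $g\in\ker\Omega_{xy}$ gives $\Omega_{xy}^{h}\Gamma_{xy}^{h}g=\bigl(\prod_{j}(\cdots)\bigr)g$ with a strictly positive product, and the companion relation $\Delta_{xy}\Delta_{yx}-\Delta_{yx}\Delta_{xy}=(\deg_{x}-\deg_{y})\,I$ shows that the $n-h$ applications of $\Delta_{xy}$ carry $g$ up to the nonzero highest-weight state in $P_{m+n-2h,0}$. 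Second, the explicit recursion: computing $\lambda_{h}$ by iterating these same two relations to lower the exponents expresses $\lambda_{h}$ at bidegree $(m,n)$ through the analogous scalars at smaller $n$, which is precisely the induction defining the $c_{m,n,h}$.

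I expect the main obstacle to be this second computation. The decomposition into projections is soft, resting only on Schur's Lemma and the multiplicity-freeness of \eqref{eq:cleb for spaces}; but pinning down the exact constants $c_{m,n,h}$ and matching them to the stated recursion requires careful iteration of the two commutation relations and honest tracking of the resulting products of eigenvalues, which is the computational heart of the argument.
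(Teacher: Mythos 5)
First, a caveat about the comparison you were asked to survive: the paper contains no proof of this theorem at all---it is quoted from Procesi \cite[Ch.~3]{key-12}---so your proposal can only be judged on its own merits. Its structural half is correct, and would in fact make the paper self-contained at this point. With $T_{h}=\Delta_{xy}^{n-h}\Omega_{xy}^{h}$, Proposition \ref{pro:2.6} does show the $h$-th summand equals $c_{m,n,h}T_{h}^{*}T_{h}$; Schur's Lemma plus the multiplicity-freeness of (\ref{eq:cleb for spaces}) does force $T_{h}^{*}T_{h}$ to vanish on every irreducible constituent except the one of type $\rho_{m+n-2h}$, where it is a scalar $\lambda_{h}\geq 0$; and both commutation relations you invoke are correct. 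Evaluating on $\Gamma_{xy}^{h}\,x_{1}^{m-h}y_{1}^{n-h}$ then shows $\lambda_{h}\neq 0$ (this single evaluation even certifies that this vector lies in the right constituent, since $T_{h}^{*}T_{h}$ annihilates all the others); carried out, it gives the closed form
\[
\lambda_{h}=\frac{h!\,(n-h)!\,(m+n-h+1)!}{(m-h)!\,(m+n-2h+1)},
\]
so that the unique admissible coefficients are $c_{m,n,h}=\lambda_{h}^{-1}=\frac{(m-h)!\,(m+n-2h+1)}{h!\,(n-h)!\,(m+n-h+1)!}$. So your argument, completed, proves that the expansion holds with \emph{some} explicit positive constants.

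The genuine gap is that this is weaker than the theorem, which asserts the expansion with the \emph{specific} numbers defined by the three-case recursion, and your proposal never identifies $\lambda_{h}^{-1}$ with them---you explicitly defer this step, and your sketch of it hides where the real content sits. Iterating your two commutators does not express $\lambda_{h}$ at bidegree $(m,n)$ "through the analogous scalars at smaller $n$" in the additive form the recursion demands; what it yields are \emph{multiplicative} relations, $\lambda_{h}(m,n)=(n-h)(m+1-h)\,\lambda_{h}(m+1,n-1)=h(m+n-h+1)\,\lambda_{h-1}(m-1,n-1)$. To recover the stated recursion for $c_{m,n,h}=\lambda_{h}^{-1}$ one needs, in addition, the identity $h(m+n-h+1)+(n-h)(m+1-h)=n(m+1)$, which is precisely what makes $\frac{1}{(m+1)n}\bigl[c_{m-1,n-1,h-1}+c_{m+1,n-1,h}\bigr]$ collapse to $c_{m,n,h}$ (its degenerate forms give the $h=0$ and $h=n$ cases). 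None of this appears in your proposal; as written, it establishes the existence of coefficients, not the theorem as stated. The repair is routine given your setup---derive the closed form above by the highest-weight evaluation and verify the recursion algebraically---but until it is written out, the proof is incomplete.
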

\begin{center}
$c_{m,n,h}=\left\{ \begin{array}{ccc}
\frac{1}{(m+1)n}\ c_{m+1,n-1,h}\  &  & \:{\scriptstyle h=0}\\
\frac{1}{(m+1)n}\ [c_{m-1,n-1,h-1}+\ c_{m+1,n-1,h}] &  & \:\:\:{\scriptstyle 0<h<n}\\
\frac{1}{(m+1)n}\ c_{m-1,n-1,h-1} &  & \:{\scriptstyle h=n}\ \end{array}\right.$
\par\end{center}

\subsection{Forming the $SU(2)$-equivariant isometry $\alpha_{m,n,h}$}

$\,$

We now define an isometry $\alpha_{m,n,h}$ that will be used in constructing
our examples of $SU(2)$-covariant channels. We also find a computational
formula for this isometry.

\subsubsection{The isometry $\alpha_{m,n,h}$.}
\begin{defn}
\label{def:2.8}For $m,n,h\in\mathbb{N}$, with $0\leq h\leq\min\left\{ n,m\right\} $,
let ${\displaystyle \alpha_{m,n,h}:P_{{\scriptscriptstyle m+n-2h}}\longrightarrow P_{{\scriptscriptstyle m}}\otimes}P_{{\scriptscriptstyle n}}$
be the map defined by \[
\alpha_{m,n,h}(f(x_{1},x_{2}))=\sqrt{c_{m,n,h}}\ \Gamma_{xy}^{h}\Delta_{yx}^{n-h}(f(x_{1},x_{2}))\]
where $f(x_{1},x_{2})$ is a homogeneous polynomial in $x_{1},x_{2}$
of degree $m+n-2h$.
\end{defn}

As a direct result of Proposition \ref{pro:2.6} and Theorem \ref{thm:(Clebsch-Gordan-expansion)},
we have the following lemma:
\begin{lem}
\label{lem:2.9}For $m,n,h\in\mathbb{N}$ with $0\leq h\leq\min\left\{ n,m\right\} $, 
\begin{enumerate}
\item The adjoint map of $\alpha_{m,n,h}$ is given by $\alpha_{m,n,h}^{*}:{\displaystyle P_{{\scriptscriptstyle m}}\otimes}P_{{\scriptscriptstyle n}}\longrightarrow P_{{\scriptscriptstyle m+n-2h}}$
where\[
\alpha_{m,n,h}^{*}(f(x,y))=\sqrt{c_{m,n,h}}\,\:\Delta_{xy}^{n-h}\Omega_{xy}^{h}(f(x,y))\]
for any $f(x,y)\in P_{{\scriptscriptstyle m}}\otimes P_{{\scriptscriptstyle n}}$.
\item $\overset{{\scriptscriptstyle \min\{m,n\}}}{\underset{{\scriptscriptstyle h=0}}{\sum}}\alpha_{m,n,h}\alpha_{m,n,h}^{*}=I_{{\scriptscriptstyle P_{m}\otimes P_{n}}}$.
\end{enumerate}
\end{lem}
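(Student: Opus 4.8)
The plan is to treat the two assertions separately: the first follows from the adjoint relations of Proposition~\ref{pro:2.6}, and the second is essentially a restatement of the Clebsch--Gordan expansion of Theorem~\ref{thm:(Clebsch-Gordan-expansion)}.

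For assertion (1) I would compute the adjoint of $\alpha_{m,n,h}=\sqrt{c_{m,n,h}}\,\Gamma_{xy}^{h}\Delta_{yx}^{n-h}$ directly. The coefficients $c_{m,n,h}$ are positive reals, as one sees by an immediate induction on $n$ from the recursion of Theorem~\ref{thm:(Clebsch-Gordan-expansion)} (the base case $c_{m,0,0}=1$ is positive, and each inductive step is a positive multiple of a sum of previously defined positive coefficients), so $\sqrt{c_{m,n,h}}$ is real and equal to its own conjugate. Since taking adjoints reverses the order of a composition, $\alpha_{m,n,h}^{*}=\sqrt{c_{m,n,h}}\,(\Delta_{yx}^{n-h})^{*}(\Gamma_{xy}^{h})^{*}$. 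Proposition~\ref{pro:2.6} gives $\Delta_{xy}^{*}=\Delta_{yx}$ and $\Gamma_{xy}^{*}=\Omega_{xy}$, hence $(\Delta_{yx})^{*}=\Delta_{xy}$ and $(\Gamma_{xy})^{*}=\Omega_{xy}$; applying these factor by factor gives $(\Delta_{yx}^{n-h})^{*}=\Delta_{xy}^{n-h}$ and $(\Gamma_{xy}^{h})^{*}=\Omega_{xy}^{h}$, so that $\alpha_{m,n,h}^{*}=\sqrt{c_{m,n,h}}\,\Delta_{xy}^{n-h}\Omega_{xy}^{h}$, as claimed.

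For assertion (2) I would compose the two maps just obtained. For any $f(x,y)\in P_{m}\otimes P_{n}$, collecting the two factors of $\sqrt{c_{m,n,h}}$ gives
\[
\alpha_{m,n,h}\alpha_{m,n,h}^{*}(f(x,y))=c_{m,n,h}\,\Gamma_{xy}^{h}\Delta_{yx}^{n-h}\Delta_{xy}^{n-h}\Omega_{xy}^{h}(f(x,y)).
\]
Summing over $0\leq h\leq\min\{m,n\}$, the right-hand side is precisely the Clebsch--Gordan expansion of $f(x,y)$ furnished by Theorem~\ref{thm:(Clebsch-Gordan-expansion)}, which equals $f(x,y)$. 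As $f(x,y)$ was arbitrary, this yields $\sum_{h=0}^{\min\{m,n\}}\alpha_{m,n,h}\alpha_{m,n,h}^{*}=I_{P_{m}\otimes P_{n}}$.

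Both computations are short, so the only point requiring care is the bookkeeping of bi-degrees: the symbols $\Gamma_{xy},\Delta_{yx},\Delta_{xy},\Omega_{xy}$ denote families of maps indexed by the bi-degree of their domain, and one must verify that the intermediate spaces chain correctly --- for $\alpha_{m,n,h}^{*}$ the composition runs $P_{m,n}\to P_{m-h,n-h}\to P_{m+n-2h,0}\cong P_{m+n-2h}$ --- so that the relations of Proposition~\ref{pro:2.6} may be applied at each stage and the adjoint of the composite really is the reversed composite of the adjoints. This degree-tracking is the main thing to get right, but it is routine rather than a genuine obstacle; once it is settled, both parts are immediate from the two quoted results.
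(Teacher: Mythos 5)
Your proof is correct and takes exactly the route the paper intends: Lemma \ref{lem:2.9} is stated there as a direct consequence of Proposition \ref{pro:2.6} and Theorem \ref{thm:(Clebsch-Gordan-expansion)} with no written argument, and your derivation --- reversing the composition and applying the adjoint relations factor by factor for part (1), then matching the summed compositions $c_{m,n,h}\,\Gamma_{xy}^{h}\Delta_{yx}^{n-h}\Delta_{xy}^{n-h}\Omega_{xy}^{h}$ against the Clebsch--Gordan expansion for part (2) --- is precisely that omitted argument. Your added checks (positivity of $c_{m,n,h}$ so that $\sqrt{c_{m,n,h}}$ is a real scalar, and the bi-degree chaining of the intermediate spaces) fill in details the paper leaves implicit.
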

\begin{prop}
\label{pro:2.10}For $m,n,h\in\mathbb{N}$ with $0\leq h\leq\min\{m,n\}$,
the map $\alpha_{m,n,h}$ is an $SU(2)$-equivariant isometry.\end{prop}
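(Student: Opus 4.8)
The plan is to establish the two assertions in turn: first that $\alpha_{m,n,h}$ is $SU(2)$-equivariant, which is immediate, and then that it is an isometry, which is where the work lies and for which I would invoke Schur's lemma together with Lemma~\ref{lem:2.9}. For equivariance, note that up to the positive scalar $\sqrt{c_{m,n,h}}$ the map $\alpha_{m,n,h}$ is the composition $\Gamma_{xy}^{h}\Delta_{yx}^{n-h}$. By Proposition~\ref{pro:2.6} both $\Gamma_{xy}$ and $\Delta_{yx}$ are $SU(2)$-equivariant, and equivariant maps are closed under composition and scalar multiplication (Proposition~\ref{pro:1.7}), so $\alpha_{m,n,h}$ is $SU(2)$-equivariant; by Example~\ref{exa:1.5}(1) its adjoint $\alpha_{m,n,h}^{*}$ is equivariant as well.

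To prove the isometry property I must show $\alpha_{m,n,h}^{*}\alpha_{m,n,h}=I_{P_{m+n-2h}}$. Since this is an $SU(2)$-equivariant endomorphism of the irreducible space $P_{m+n-2h}$, Schur's lemma yields $\alpha_{m,n,h}^{*}\alpha_{m,n,h}=\lambda_{h}I_{P_{m+n-2h}}$ for a scalar $\lambda_{h}$, which is real and nonnegative because the operator is positive. To pin down $\lambda_{h}$ I would conjugate the resolution of the identity in Lemma~\ref{lem:2.9}(2) by $\alpha_{m,n,h}$. Writing $\alpha_{k}:=\alpha_{m,n,k}$, this gives
\[
\lambda_{h}I=\alpha_{h}^{*}\alpha_{h}=\alpha_{h}^{*}\Big(\sum_{k=0}^{\min\{m,n\}}\alpha_{k}\alpha_{k}^{*}\Big)\alpha_{h}=\sum_{k}(\alpha_{h}^{*}\alpha_{k})(\alpha_{k}^{*}\alpha_{h}).
\]
For $k\neq h$ the map $\alpha_{k}^{*}\alpha_{h}\colon P_{m+n-2h}\to P_{m+n-2k}$ is equivariant between non-isomorphic irreducibles (the indices $m+n-2h$ and $m+n-2k$ differ), hence vanishes by Schur's lemma, so only the $k=h$ term survives and $(\alpha_{h}^{*}\alpha_{h})^{2}=\lambda_{h}^{2}I$. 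Comparing with $\lambda_{h}I$ gives $\lambda_{h}^{2}=\lambda_{h}$, i.e. $\lambda_{h}\in\{0,1\}$.

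It remains to rule out $\lambda_{h}=0$, which I would do by a dimension count. Taking the trace of the identity in Lemma~\ref{lem:2.9}(2) and using $tr(\alpha_{k}\alpha_{k}^{*})=tr(\alpha_{k}^{*}\alpha_{k})=\lambda_{k}\dim P_{m+n-2k}$ yields $\sum_{k}\lambda_{k}(m+n-2k+1)=\dim(P_{m}\otimes P_{n})=(m+1)(n+1)$. The Clebsch--Gordan decomposition \eqref{eq:cleb for spaces} also gives $\sum_{k}(m+n-2k+1)=(m+1)(n+1)$, so the two sums agree; since every $\lambda_{k}\in\{0,1\}$ and every dimension $m+n-2k+1$ is strictly positive, this forces $\lambda_{k}=1$ for all $k$. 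In particular $\lambda_{h}=1$, so $\alpha_{m,n,h}^{*}\alpha_{m,n,h}=I_{P_{m+n-2h}}$ and $\alpha_{m,n,h}$ is an isometry.

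The genuine subtlety is not any single computation but the logical economy of the argument: Schur's lemma alone only gives a scalar, and the trace identity alone only constrains a weighted sum of those scalars, so it is precisely the idempotency $\lambda_{h}^{2}=\lambda_{h}$ extracted from the conjugation step that lets the dimension count conclude each $\lambda_{h}=1$. I would also flag that this route is non-circular: Lemma~\ref{lem:2.9}(2) rests on the Clebsch--Gordan expansion of Theorem~\ref{thm:(Clebsch-Gordan-expansion)} with the coefficients $c_{m,n,h}$ and makes no use of the isometry claim, so it may legitimately be used here. A more computational alternative would be to evaluate $\alpha_{m,n,h}^{*}\alpha_{m,n,h}$ on a single highest-weight vector and unwind the recursion defining $c_{m,n,h}$, but that is considerably more laborious and I would avoid it in favour of the representation-theoretic argument above.
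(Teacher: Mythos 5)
Your proof is correct, and it follows the paper's strategy almost exactly: equivariance via Proposition \ref{pro:2.6}, Schur's lemma to reduce $\alpha_{m,n,h}^{*}\alpha_{m,n,s}$ to $\delta_{hs}\,\lambda I$, and the resolution of the identity from Lemma \ref{lem:2.9}(2) to determine $\lambda$. The only divergence is the final step. The paper multiplies the resolution of the identity on the left by $\alpha_{m,n,h}^{*}$, obtaining $\alpha_{m,n,h}^{*}=\lambda\,\alpha_{m,n,h}^{*}$, and concludes $\lambda=1$ from the single observation that $\alpha_{m,n,h}\neq 0$ (it does not kill $x_{1}^{m+n-2h}$). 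You instead sandwich the identity between $\alpha_{m,n,h}^{*}$ and $\alpha_{m,n,h}$ to get the idempotency $\lambda_{h}^{2}=\lambda_{h}$, and then rule out $\lambda_{h}=0$ by a trace count against the dimension identity $\sum_{k}(m+n-2k+1)=(m+1)(n+1)$ coming from the Clebsch--Gordan decomposition \eqref{eq:cleb for spaces}. Both routes are valid, and your non-circularity remark is accurate (Lemma \ref{lem:2.9} rests only on Proposition \ref{pro:2.6} and Theorem \ref{thm:(Clebsch-Gordan-expansion)}). The paper's ending is more economical: one nonvanishing check and no trace computation. Yours buys a global statement — all the $\lambda_{k}$ are forced to equal $1$ simultaneously, with no evaluation of any $\alpha_{m,n,k}$ on a vector — but note that your own idempotency step could also be closed the paper's way, since $\alpha_{m,n,h}\neq0$ gives $tr(\alpha_{m,n,h}^{*}\alpha_{m,n,h})=\left\Vert \alpha_{m,n,h}\right\Vert _{{\scriptscriptstyle End(P_{m+n-2h},P_{m}\otimes P_{n})}}^{2}>0$, hence $\lambda_{h}\neq0$ directly.
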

\begin{proof}
$\,$

By Proposition \ref{pro:2.6}, the maps $\Gamma_{xy}^{h},\Delta_{yx}^{n-h}$
are $SU(2)$-equivariant, so $\alpha_{m,n,h}$ is. To show that $\alpha_{m,n,h}$
is an isometry, let $0\leq h,s\leq\min\{m,n\}$. the map $\alpha_{m,n,h}^{*}\alpha_{m,n,s}:P_{{\scriptscriptstyle m+n-2s}}\longrightarrow P_{{\scriptscriptstyle m+n-2h}}$
is an $SU(2)$-equivariant map, thus by Schur Lemma \cite[p.13]{key-13},
we have \begin{equation}
\alpha_{m,n,h}^{*}\alpha_{m,n,s}\equiv\left\{ \begin{array}{cccccc}
0 &  & if &  & {\scriptstyle h\neq s}\\
\lambda I_{P_{{\scriptscriptstyle m+n-2h}}} &  & if &  & {\scriptstyle h=s}\end{array}\right.\label{eq:alphstaralph}\end{equation}

It remains to show that $\lambda=1$. By Lemma \ref{lem:2.9} we have:

$\overset{{\scriptscriptstyle \min\{m,n\}}}{\underset{{\scriptscriptstyle s=0}}{\sum}}\alpha_{m,n,s}\alpha_{m,n,s}^{*}=I_{{\scriptscriptstyle P_{m}\otimes P_{n}}}$.
Thus, for any $0\leq h\leq\min\{m,n\}$\[
\alpha_{m,n,h}^{*}=\alpha_{m,n,h}^{*}\, I_{P_{m}\otimes P_{n}}=\alpha_{m,n,h}^{*}\overset{{\scriptstyle {\scriptscriptstyle \min\{m,n\}}}}{\underset{{\scriptstyle {\scriptscriptstyle s=0}}}{\sum}}\alpha_{m,n,s}\alpha_{m,n,s}^{*}=\overset{{\scriptstyle {\scriptscriptstyle \min\{m,n\}}}}{\underset{{\scriptstyle {\scriptscriptstyle s=0}}}{\sum}}\alpha_{m,n,h}^{*}\alpha_{m,n,s}\alpha_{m,n,s}^{*}\]
By Equation (\ref{eq:alphstaralph})above, we get\[
\alpha_{m,n,h}^{*}=\alpha_{m,n,h}^{*}\alpha_{m,n,h}\alpha_{m,n,h}^{*}=\lambda\alpha_{m,n,h}^{*}\]
Since $\alpha_{m,n,h}\neq{\displaystyle 0}$ ($\alpha_{m,n,h}(x_{1}^{{\scriptscriptstyle m+n-2h}})\neq0$),
we have $\alpha_{m,n,h}^{*}\neq{\displaystyle 0}$ and $\lambda=1$.
\end{proof}

The following straightforward corollary describes the $SU(2)$-equivariant
projections of ${\displaystyle P_{{\scriptscriptstyle m}}\otimes}P_{{\scriptscriptstyle n}}$
onto the $SU(2)$- subspaces $W_{{\scriptscriptstyle m+n-2h}}=\alpha_{m,n,h}(P_{{\scriptscriptstyle m+n-2h}})$.
Note that by the equivariance of $\alpha_{m,n,h}$ and by Schur Lemma
\cite[p.13]{key-13}, the space $W_{{\scriptscriptstyle m+n-2h}}$
is an irreducible.

\begin{cor}
\label{cor:2.11}Let $m,n,h\in\mathbb{N}$ with $0\leq h\leq\min\{m,n\}$,
and $\alpha_{m,n,h}$ be as in Definition \ref{def:2.8}. Then 
\begin{enumerate}
\item $\alpha_{m,n,h}\alpha_{m,n,h}^{*}$ is the orthogonal projection of
$P_{{\scriptscriptstyle m}}\otimes P_{{\scriptscriptstyle n}}$ onto
the  $SU(2)$-irreducible subspace $W_{{\scriptscriptstyle m+n-2h}}=\alpha_{m,n,h}(P_{{\scriptscriptstyle m+n-2h}})\approxeq P_{{\scriptscriptstyle m+n-2h}}$. 
\item $P_{{\scriptscriptstyle m}}\otimes P_{{\scriptscriptstyle n}}=\overset{{\scriptstyle {\scriptscriptstyle \min\left\{ m,n\right\} }}}{\underset{{\scriptstyle {\scriptscriptstyle h=0}}}{\bigoplus}}W_{{\scriptscriptstyle m+n-2h}}$,
where the subspaces $W_{{\scriptscriptstyle m+n-2h}}$ are mutually
orthogonal.
\end{enumerate}
\end{cor}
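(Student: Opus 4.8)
The plan is to treat both assertions as direct consequences of the orthogonality relations for the $\alpha_{m,n,h}$ established in the proof of Proposition~\ref{pro:2.10} together with the completeness relation in Lemma~\ref{lem:2.9}(2). For brevity set $P_{h}:=\alpha_{m,n,h}\alpha_{m,n,h}^{*}$. Recall from Proposition~\ref{pro:2.10} that relation (\ref{eq:alphstaralph}) holds with $\lambda=1$, i.e. $\alpha_{m,n,h}^{*}\alpha_{m,n,s}=\delta_{hs}\,I_{P_{m+n-2h}}$.

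First I would prove that each $P_{h}$ is the orthogonal projection onto $W_{m+n-2h}$. Self-adjointness is immediate, since $P_{h}^{*}=(\alpha_{m,n,h}\alpha_{m,n,h}^{*})^{*}=\alpha_{m,n,h}\alpha_{m,n,h}^{*}=P_{h}$. For idempotency I would use the isometry relation $\alpha_{m,n,h}^{*}\alpha_{m,n,h}=I_{P_{m+n-2h}}$, which gives
\[
P_{h}^{2}=\alpha_{m,n,h}(\alpha_{m,n,h}^{*}\alpha_{m,n,h})\alpha_{m,n,h}^{*}=\alpha_{m,n,h}\alpha_{m,n,h}^{*}=P_{h}.
\]
A self-adjoint idempotent is an orthogonal projection, and its range is exactly the range of $\alpha_{m,n,h}$, namely $\alpha_{m,n,h}(P_{m+n-2h})=W_{m+n-2h}$: indeed $\alpha_{m,n,h}$ is injective (it is an isometry) while $\alpha_{m,n,h}^{*}$ is surjective (again by $\alpha_{m,n,h}^{*}\alpha_{m,n,h}=I$), so $P_{h}$ and $\alpha_{m,n,h}$ share the same image. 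The $SU(2)$-equivariance of $\alpha_{m,n,h}$ together with its injectivity makes it an isomorphism of $SU(2)$-spaces onto $W_{m+n-2h}$, whence $W_{m+n-2h}\approxeq P_{m+n-2h}$ is irreducible, as already noted before the statement. This settles~(1).

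For (2) I would first record the mutual orthogonality of the subspaces. For $h\neq s$, any $u=\alpha_{m,n,h}(f)\in W_{m+n-2h}$ and $v=\alpha_{m,n,s}(g)\in W_{m+n-2s}$ satisfy
\[
\langle u\,|\,v\rangle=\langle \alpha_{m,n,h}(f)\,|\,\alpha_{m,n,s}(g)\rangle=\langle f\,|\,\alpha_{m,n,h}^{*}\alpha_{m,n,s}(g)\rangle=0
\]
by relation (\ref{eq:alphstaralph}) for $h\neq s$; equivalently $P_{h}P_{s}=\alpha_{m,n,h}(\alpha_{m,n,h}^{*}\alpha_{m,n,s})\alpha_{m,n,s}^{*}=0$, so the $P_{h}$ are pairwise orthogonal projections. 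Finally, Lemma~\ref{lem:2.9}(2) reads $\sum_{h=0}^{\min\{m,n\}}P_{h}=I_{P_{m}\otimes P_{n}}$. A family of pairwise orthogonal projections summing to the identity yields an orthogonal direct sum decomposition of the ambient space into their ranges, i.e. $P_{m}\otimes P_{n}=\bigoplus_{h=0}^{\min\{m,n\}}W_{m+n-2h}$ with the $W_{m+n-2h}$ mutually orthogonal.

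I do not expect any genuine obstacle here: every ingredient (the isometry and Schur-type orthogonality relation (\ref{eq:alphstaralph}), and the resolution of the identity in Lemma~\ref{lem:2.9}) has already been proved, and what remains is the standard passage from a resolution of the identity by pairwise orthogonal projections to an orthogonal direct sum. The one point deserving a word of care is the identification of the range of $P_{h}$ with $W_{m+n-2h}$, which rests on the surjectivity of $\alpha_{m,n,h}^{*}$; this is precisely the isometry property $\alpha_{m,n,h}^{*}\alpha_{m,n,h}=I$ from Proposition~\ref{pro:2.10}.
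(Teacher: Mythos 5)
Your proof is correct and takes essentially the approach the paper intends: Corollary \ref{cor:2.11} is stated there as a straightforward consequence of Proposition \ref{pro:2.10}, the Schur-orthogonality relation (\ref{eq:alphstaralph}), Lemma \ref{lem:2.9}(2), and the equivariance remark preceding the corollary, which are precisely the ingredients you assemble. The details you fill in (self-adjointness and idempotency of $\alpha_{m,n,h}\alpha_{m,n,h}^{*}$, identification of its range with $W_{m+n-2h}$ via surjectivity of $\alpha_{m,n,h}^{*}$, and the passage from a resolution of the identity by mutually orthogonal projections to the orthogonal direct sum) are all handled correctly.
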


A similar result can be proved for the $SU(2)$-space ${\displaystyle P_{{\scriptscriptstyle m}}\otimes}\overline{P}_{{\scriptscriptstyle n}}$.
We start by constructing isometric embedding in the following lemma.
Recall that $\overline{P}_{{\scriptscriptstyle n}}$ is an $SU(2)$-irreducible
space under the contragredient representation $\check{\rho_{{\scriptscriptstyle n}}}$. 
\begin{lem}
\label{lem:2.12 (Ipm tensor Jn)}For $m,n\in\mathbb{N}$,
\begin{enumerate}
\item The map $I_{{\scriptscriptstyle P_{m}}}\otimes J_{{\scriptscriptstyle n}}:P_{{\scriptscriptstyle m}}\otimes P_{{\scriptscriptstyle n}}\longrightarrow P_{{\scriptscriptstyle m}}\otimes\overline{P}_{{\scriptscriptstyle n}}$
is an $SU(2)$-equivariant unitary isomorphism whose inverse is $I_{{\scriptscriptstyle P_{m}}}\otimes J_{{\scriptscriptstyle n}}^{*}$.
\item The map $\eta_{m,n,h}=\left(I_{{\scriptscriptstyle P_{m}}}\otimes J_{{\scriptscriptstyle n}}\right)\alpha_{m,n,h}$
is an $SU(2)$-equivariant isometry from $P_{{\scriptscriptstyle m+n-2h}}$
into $P_{{\scriptscriptstyle m}}\otimes\overline{P}_{{\scriptscriptstyle n}}$.
\end{enumerate}
\end{lem}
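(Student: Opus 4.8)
The plan is to obtain both assertions by composing maps already shown to be $SU(2)$-equivariant and by invoking standard facts about tensor products of such maps; no new computation on polynomials is required, since everything reduces to Propositions \ref{pro:2.3} and \ref{pro:2.10}.

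For the first assertion I would start from Proposition \ref{pro:2.3}, which gives that $J_{{\scriptscriptstyle n}}:P_{{\scriptscriptstyle n}}\longrightarrow\overline{P}_{{\scriptscriptstyle n}}$ is an $SU(2)$-equivariant unitary isomorphism, while $I_{{\scriptscriptstyle P_{m}}}$ is trivially one. Since the tensor product of two equivariant maps is equivariant with respect to the tensor-product representations, $I_{{\scriptscriptstyle P_{m}}}\otimes J_{{\scriptscriptstyle n}}$ intertwines $\rho_{{\scriptscriptstyle m}}\otimes\rho_{{\scriptscriptstyle n}}$ on $P_{{\scriptscriptstyle m}}\otimes P_{{\scriptscriptstyle n}}$ with $\rho_{{\scriptscriptstyle m}}\otimes\check{\rho_{{\scriptscriptstyle n}}}$ on $P_{{\scriptscriptstyle m}}\otimes\overline{P}_{{\scriptscriptstyle n}}$. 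Concretely, using that $\rho_{{\scriptscriptstyle n}}$ is unitary so that $\check{\rho_{{\scriptscriptstyle n}}}=\overline{\rho}_{{\scriptscriptstyle n}}$, together with the relation $J_{{\scriptscriptstyle n}}\rho_{{\scriptscriptstyle n}}{\scriptstyle (g)}=\overline{\rho_{{\scriptscriptstyle n}}{\scriptstyle (g)}}J_{{\scriptscriptstyle n}}$ from Proposition \ref{pro:2.3}, one checks the identity $(\rho_{{\scriptscriptstyle m}}{\scriptstyle (g)}\otimes\check{\rho_{{\scriptscriptstyle n}}}{\scriptstyle (g)})(I_{{\scriptscriptstyle P_{m}}}\otimes J_{{\scriptscriptstyle n}})=(I_{{\scriptscriptstyle P_{m}}}\otimes J_{{\scriptscriptstyle n}})(\rho_{{\scriptscriptstyle m}}{\scriptstyle (g)}\otimes\rho_{{\scriptscriptstyle n}}{\scriptstyle (g)})$ for all $g\in SU(2)$. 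Unitarity and invertibility are then inherited from the factors: the tensor product of unitaries is unitary, and $(I_{{\scriptscriptstyle P_{m}}}\otimes J_{{\scriptscriptstyle n}})^{-1}=I_{{\scriptscriptstyle P_{m}}}\otimes J_{{\scriptscriptstyle n}}^{-1}=I_{{\scriptscriptstyle P_{m}}}\otimes J_{{\scriptscriptstyle n}}^{*}$, the last equality holding because $J_{{\scriptscriptstyle n}}$ is unitary.

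For the second assertion I would simply compose. By Proposition \ref{pro:2.10} the map $\alpha_{m,n,h}:P_{{\scriptscriptstyle m+n-2h}}\longrightarrow P_{{\scriptscriptstyle m}}\otimes P_{{\scriptscriptstyle n}}$ is an $SU(2)$-equivariant isometry, and by part (1) the map $U:=I_{{\scriptscriptstyle P_{m}}}\otimes J_{{\scriptscriptstyle n}}$ is an $SU(2)$-equivariant unitary. A composition of $SU(2)$-equivariant maps is again $SU(2)$-equivariant, so $\eta_{m,n,h}=U\alpha_{m,n,h}$ is equivariant; and it is an isometry because $\eta_{m,n,h}^{*}\eta_{m,n,h}=\alpha_{m,n,h}^{*}U^{*}U\alpha_{m,n,h}=\alpha_{m,n,h}^{*}\alpha_{m,n,h}=I_{{\scriptscriptstyle P_{m+n-2h}}}$, using $U^{*}U=I_{{\scriptscriptstyle P_{m}\otimes P_{n}}}$ and that $\alpha_{m,n,h}$ is an isometry.

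There is essentially no genuine obstacle here, as the substance of the lemma is already contained in the earlier propositions. The only point that deserves care is the bookkeeping of representations on the target space $P_{{\scriptscriptstyle m}}\otimes\overline{P}_{{\scriptscriptstyle n}}$: one must record that $\overline{P}_{{\scriptscriptstyle n}}$ carries the contragredient representation $\check{\rho_{{\scriptscriptstyle n}}}$, which coincides with the conjugate $\overline{\rho}_{{\scriptscriptstyle n}}$ precisely because $\rho_{{\scriptscriptstyle n}}$ is unitary, so that the equivariance of $J_{{\scriptscriptstyle n}}$ as stated in Proposition \ref{pro:2.3} is exactly what is needed for the intertwining relation of $I_{{\scriptscriptstyle P_{m}}}\otimes J_{{\scriptscriptstyle n}}$.
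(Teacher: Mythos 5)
Your proof is correct, and it uses precisely the ingredients the paper sets up for this purpose: Proposition \ref{pro:2.3} for the equivariant unitarity of $J_{n}$ (together with the fact that the contragredient and conjugate representations coincide for the unitary representation $\rho_{n}$), and Proposition \ref{pro:2.10} for $\alpha_{m,n,h}$. The paper itself states this lemma without proof, treating it as a routine consequence of those propositions, so your argument supplies exactly the intended reasoning with no gaps.
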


Using the embedding above, we obtain the decomposition of $P_{{\scriptscriptstyle m}}\otimes\overline{P}_{{\scriptscriptstyle n}}$
as follows:
\begin{cor}
\label{cor:2.13}Let $m,n,h\in\mathbb{N}$ with $0\leq h\leq\min\{m,n\}$.
Let $\eta_{m,n,h}$ be defined as in Lemma \ref{lem:2.12 (Ipm tensor Jn)}.
Then $\eta_{m,n,h}\eta_{m,n,h}^{*}$ is the orthogonal projection
of $P_{{\scriptscriptstyle m}}\otimes\overline{P}_{{\scriptscriptstyle n}}$
onto the $SU(2)$-subspace $V_{{\scriptscriptstyle m+n-2h}}=\eta_{m,n,h}(P_{{\scriptscriptstyle m+n-2h}})\approxeq P_{{\scriptscriptstyle m+n-2h}}$.
Moreover, $P_{{\scriptscriptstyle m}}\otimes\overline{P}_{{\scriptscriptstyle n}}=\overset{{\scriptstyle {\scriptscriptstyle \min\left\{ m,n\right\} }}}{\underset{{\scriptstyle {\scriptscriptstyle h=0}}}{\bigoplus}}V_{{\scriptscriptstyle m+n-2h}}$,
where $V_{{\scriptscriptstyle m+n-2h}}$, $0\leq h\leq\min\{m,n\}$
are mutually orthogonal $SU(2)$-irreducible subspaces.
\end{cor}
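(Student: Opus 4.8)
The plan is to transport the orthogonal decomposition of $P_{{\scriptscriptstyle m}}\otimes P_{{\scriptscriptstyle n}}$ obtained in Corollary \ref{cor:2.11} across the unitary equivalence furnished by Lemma \ref{lem:2.12 (Ipm tensor Jn)}, rather than redo any representation theory. Write $U:=I_{{\scriptscriptstyle P_{m}}}\otimes J_{{\scriptscriptstyle n}}$. By the first part of Lemma \ref{lem:2.12 (Ipm tensor Jn)} this is an $SU(2)$-equivariant unitary with $U^{*}=U^{-1}=I_{{\scriptscriptstyle P_{m}}}\otimes J_{{\scriptscriptstyle n}}^{*}$, and by the definition of $\eta_{m,n,h}$ we have $\eta_{m,n,h}=U\alpha_{m,n,h}$, hence $\eta_{m,n,h}^{*}=\alpha_{m,n,h}^{*}U^{*}$.

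First I would establish that $\eta_{m,n,h}\eta_{m,n,h}^{*}$ is the orthogonal projection onto $V_{{\scriptscriptstyle m+n-2h}}$. The operator $\eta_{m,n,h}\eta_{m,n,h}^{*}$ is self-adjoint, and it is idempotent because $\eta_{m,n,h}$ is an isometry (Lemma \ref{lem:2.12 (Ipm tensor Jn)}), so $\eta_{m,n,h}^{*}\eta_{m,n,h}=I_{P_{{\scriptscriptstyle m+n-2h}}}$ and $(\eta_{m,n,h}\eta_{m,n,h}^{*})(\eta_{m,n,h}\eta_{m,n,h}^{*})=\eta_{m,n,h}(\eta_{m,n,h}^{*}\eta_{m,n,h})\eta_{m,n,h}^{*}=\eta_{m,n,h}\eta_{m,n,h}^{*}$; its range is $\eta_{m,n,h}(P_{{\scriptscriptstyle m+n-2h}})=V_{{\scriptscriptstyle m+n-2h}}$. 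Equivalently, one may simply compute $\eta_{m,n,h}\eta_{m,n,h}^{*}=U(\alpha_{m,n,h}\alpha_{m,n,h}^{*})U^{*}$ and invoke Corollary \ref{cor:2.11}: conjugating the orthogonal projection $\alpha_{m,n,h}\alpha_{m,n,h}^{*}$ onto $W_{{\scriptscriptstyle m+n-2h}}$ by the unitary $U$ gives the orthogonal projection onto $U(W_{{\scriptscriptstyle m+n-2h}})=V_{{\scriptscriptstyle m+n-2h}}$.

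For the decomposition I would sum the projections and use part (2) of Lemma \ref{lem:2.9}:
\[
\overset{{\scriptscriptstyle \min\{m,n\}}}{\underset{{\scriptscriptstyle h=0}}{\sum}}\eta_{m,n,h}\eta_{m,n,h}^{*}=U\left(\overset{{\scriptscriptstyle \min\{m,n\}}}{\underset{{\scriptscriptstyle h=0}}{\sum}}\alpha_{m,n,h}\alpha_{m,n,h}^{*}\right)U^{*}=U\,I_{{\scriptscriptstyle P_{m}\otimes P_{n}}}\,U^{*}=I_{{\scriptscriptstyle P_{m}\otimes\overline{P}_{n}}},
\]
so the subspaces $V_{{\scriptscriptstyle m+n-2h}}$ span $P_{{\scriptscriptstyle m}}\otimes\overline{P}_{{\scriptscriptstyle n}}$. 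Mutual orthogonality is immediate since $U$ is unitary and the $W_{{\scriptscriptstyle m+n-2h}}$ are mutually orthogonal by Corollary \ref{cor:2.11}; alternatively, the relation $\alpha_{m,n,h}^{*}\alpha_{m,n,s}=0$ for $h\neq s$ from the proof of Proposition \ref{pro:2.10} gives $(\eta_{m,n,h}\eta_{m,n,h}^{*})(\eta_{m,n,s}\eta_{m,n,s}^{*})=U\alpha_{m,n,h}(\alpha_{m,n,h}^{*}\alpha_{m,n,s})\alpha_{m,n,s}^{*}U^{*}=0$. Finally each $V_{{\scriptscriptstyle m+n-2h}}$ is $SU(2)$-irreducible: $\eta_{m,n,h}$ is an $SU(2)$-equivariant isometric embedding of the irreducible space $P_{{\scriptscriptstyle m+n-2h}}$, so its image is an irreducible invariant subspace (equivalently $V_{{\scriptscriptstyle m+n-2h}}=U(W_{{\scriptscriptstyle m+n-2h}})$ with $U$ equivariant and $W_{{\scriptscriptstyle m+n-2h}}\approxeq P_{{\scriptscriptstyle m+n-2h}}$).

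I do not expect a genuine obstacle: this is a transport-of-structure corollary, and the only point requiring care is the bookkeeping identifying the range of the conjugated projection $U(\alpha_{m,n,h}\alpha_{m,n,h}^{*})U^{*}$ with $V_{{\scriptscriptstyle m+n-2h}}=\eta_{m,n,h}(P_{{\scriptscriptstyle m+n-2h}})$, which follows at once from $\eta_{m,n,h}=U\alpha_{m,n,h}$.
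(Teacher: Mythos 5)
Your proposal is correct and matches the paper's intent exactly: the paper states Corollary \ref{cor:2.13} without a written proof, presenting it as an immediate consequence of the embedding $\eta_{m,n,h}=\left(I_{{\scriptscriptstyle P_{m}}}\otimes J_{{\scriptscriptstyle n}}\right)\alpha_{m,n,h}$ from Lemma \ref{lem:2.12 (Ipm tensor Jn)} together with Corollary \ref{cor:2.11} and Lemma \ref{lem:2.9}, which is precisely the transport-of-structure argument you carry out. Your filled-in details (conjugating the projections $\alpha_{m,n,h}\alpha_{m,n,h}^{*}$ by the equivariant unitary, summing to the identity, and inheriting orthogonality and irreducibility) are exactly the bookkeeping the paper leaves to the reader.
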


\subsubsection{$\,$A computational formula for $\alpha_{m,n,h}$.\label{sub:2.2.3 B(i)}}

$\,$

In this subsection, we derive a computational formula for the isometry
$\alpha_{m,n,h}$, and prove some related corollaries. For the rest
of this paper, we will systematically use the following notations
without further mention. 

For $m,n,h,i,j\in\mathbb{N}$ with $0\leq h\leq\min\{m,n\}$, $0\leq i\leq m+n-2h$,
and $0\leq j\leq n$. Let 
\begin{itemize}
\item $k_{1}(i):={\scriptstyle {\textstyle \max\{0,-m+i+h\}}}$.
\item $k_{2}(i):=\min\{i,\, n-h\}$.
\item $B(i):=\{j:k_{1}(i)\leq j\leq k_{2}(i)+h\}$.
\item $l_{ij}:=i-j+h$.
\item $c_{m,n,h}$ is as defined in Theorem \ref{thm:(Clebsch-Gordan-expansion)}.
\end{itemize}

\begin{lem}
$\,$\label{lem:2.14} For $m,n,h\in\mathbb{N}$ with $0\leq h\leq\min\{m,n\}$,
let $r=m+n-2h$, and $\{f_{{\scriptscriptstyle i}}^{{\scriptscriptstyle r}}:0\leq i\leq r\}$
be the canonical basis for $P_{{\scriptscriptstyle r}}$. Then for
$0\leq i\leq r$,\[
\alpha_{m,n,h}\left(f_{{\scriptscriptstyle i}}^{{\scriptscriptstyle r}}\right)=\overset{{\scriptstyle {\scriptscriptstyle h}}}{\underset{{\scriptscriptstyle s=0}}{\sum}}\:\overset{{\scriptscriptstyle {\scriptstyle {\scriptscriptstyle k_{2}(i)+s}}}}{\underset{{\scriptstyle {\scriptstyle {\scriptscriptstyle j=k_{1}(i)+s}}}}{\sum}}\beta_{i,s,j}^{m,n,h}\: f_{l_{ij}}^{{\scriptscriptstyle m}}\otimes f_{{\scriptscriptstyle j}}^{{\scriptscriptstyle n}}\]

where $\beta_{i,s,j}^{m,n,h}={\scriptstyle (-1)}^{{\scriptstyle {\scriptscriptstyle s}}}\:\sqrt{\tfrac{c_{m,n,h}\, r!\ m!\ n!}{\binom{r}{i}\,\binom{m}{i-j+h}\,\binom{n}{j}}}\:\,\tfrac{\tbinom{h}{s}\,\tbinom{n-h}{j-s}\,\tbinom{m-h}{i-j+s}}{(m-h)!}$.\end{lem}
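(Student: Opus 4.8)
The plan is to prove the formula by a direct computation, starting from the definition $\alpha_{m,n,h}(f)=\sqrt{c_{m,n,h}}\,\Gamma_{xy}^{h}\Delta_{yx}^{n-h}(f)$ of Definition~\ref{def:2.8} and evaluating it on the canonical basis vector $f_{i}^{r}=a_{r}^{i}\,x_{1}^{i}x_{2}^{r-i}$, where $r=m+n-2h$. By linearity it suffices to apply the operator to the monomial $x_{1}^{i}x_{2}^{r-i}$ and restore the factor $a_{r}^{i}=1/\sqrt{i!(r-i)!}$ at the end. Before computing I would record the bidegree bookkeeping under the identification $P_{m}\otimes P_{n}\cong P_{m,n}$: the operator $\Delta_{yx}^{n-h}$ sends bidegree $(r,0)$ to $(m-h,n-h)$ and $\Gamma_{xy}^{h}$ raises it to $(m,n)$, so the image lands in $P_{m}\otimes P_{n}$ and the surviving monomials all have the shape $x_{1}^{l}x_{2}^{m-l}y_{1}^{j}y_{2}^{n-j}$.

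First I would expand $\Delta_{yx}^{n-h}=(y_{1}\partial_{x_{1}}+y_{2}\partial_{x_{2}})^{n-h}$. Since $y_{1}\partial_{x_{1}}$ and $y_{2}\partial_{x_{2}}$ involve disjoint variables they commute, so the binomial theorem gives $\Delta_{yx}^{n-h}=\sum_{p}\binom{n-h}{p}y_{1}^{p}y_{2}^{n-h-p}\partial_{x_{1}}^{p}\partial_{x_{2}}^{n-h-p}$. Applying this to $x_{1}^{i}x_{2}^{r-i}$ produces the falling factorials $\tfrac{i!}{(i-p)!}\tfrac{(r-i)!}{(r-i-n+h+p)!}$ times $x_{1}^{i-p}x_{2}^{r-i-n+h+p}y_{1}^{p}y_{2}^{n-h-p}$. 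Next I would expand $\Gamma_{xy}^{h}=(x_{1}y_{2}-x_{2}y_{1})^{h}=\sum_{s}(-1)^{s}\binom{h}{s}x_{1}^{h-s}x_{2}^{s}y_{1}^{s}y_{2}^{h-s}$ (pure multiplication, so no ordering issue) and multiply the two expansions. Substituting $j:=p+s$ for the exponent of $y_{1}$ collapses the product into monomials $x_{1}^{l_{ij}}x_{2}^{m-l_{ij}}y_{1}^{j}y_{2}^{n-j}$ with $l_{ij}=i-j+h$, exactly as claimed.

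To reach the stated basis expansion I would convert each monomial back using $x_{1}^{l}x_{2}^{m-l}=(a_{m}^{l})^{-1}f_{l}^{m}$ and the analogue for $P_{n}$, turning it into $(a_{m}^{l_{ij}}a_{n}^{j})^{-1}f_{l_{ij}}^{m}\otimes f_{j}^{n}$. Collecting the prefactors $\sqrt{c_{m,n,h}}$, $a_{r}^{i}$, the binomials $\binom{h}{s},\binom{n-h}{j-s}$, the two surviving factorials, and the normalizations, I would simplify. Two algebraic identities make everything collapse to $\beta_{i,s,j}^{m,n,h}$: first, $\sqrt{\,r!\,m!\,n!/(\binom{r}{i}\binom{m}{i-j+h}\binom{n}{j})}=\sqrt{i!(r-i)!\,l_{ij}!(m-l_{ij})!\,j!(n-j)!}$, which matches the product of normalization square roots once $i!(r-i)!$ is merged with $a_{r}^{i}$; and second, $\binom{m-h}{i-j+s}/(m-h)!=1/\big((i-j+s)!\,(m-l_{ij}-s)!\big)$, which absorbs the surviving derivative factorials, using $r-i-n+h+p=m-l_{ij}-s$ after the substitution $p=j-s$.

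The step I expect to be the main obstacle is pinning down the summation ranges rather than the coefficient itself. Nonvanishing of $\partial_{x_{1}}^{p}x_{1}^{i}$ and $\partial_{x_{2}}^{n-h-p}x_{2}^{r-i}$ forces $0\le p\le i$ and $p\ge i-m+h$, while the two binomial expansions force $0\le p\le n-h$ and $0\le s\le h$; rewriting these four constraints in terms of $j=p+s$ gives $s+\max\{0,i-m+h\}\le j\le s+\min\{i,n-h\}$, that is $k_{1}(i)+s\le j\le k_{2}(i)+s$, which is precisely the inner summation bound in the statement. I would then check that these ranges are nonempty and non-overlapping across $s$ and that their endpoints agree with the definitions of $k_{1}(i)$ and $k_{2}(i)$. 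Keeping the two indices $s$ and $p$ straight through the substitution $j=p+s$, and confirming no spurious or repeated terms arise, is where the care is required.
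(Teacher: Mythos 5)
Your proposal is correct and follows essentially the same route as the paper's proof: expand $\Delta_{yx}^{n-h}$ and $\Gamma_{xy}^{h}$ by the binomial theorem, apply them to the monomial $x_{1}^{i}x_{2}^{r-i}$, substitute $j=p+s$ (the paper calls your $p$ by $t$), read off the summation bounds $k_{1}(i)+s\le j\le k_{2}(i)+s$ from the vanishing of the derivatives, and recombine the normalization constants $a_{r}^{i},a_{m}^{l_{ij}},a_{n}^{j}$ into the stated coefficient $\beta_{i,s,j}^{m,n,h}$. The two algebraic identities you isolate are exactly the simplifications the paper performs implicitly in its final step, so there is nothing to add.
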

\begin{proof}
$\,$

Recall that the canonical basis element of $P_{{\scriptscriptstyle r}}$
is $f_{{\scriptscriptstyle i}}^{{\scriptscriptstyle r}}=a_{r}^{i}x_{1}^{i}x_{2}^{r-i}$
where $a_{r}^{i}=\dfrac{{\scriptstyle 1}}{\sqrt{{\scriptstyle i!(r-i)!}}}$.
As \begin{eqnarray*}
\Delta_{yx}^{n-h} & =\left(y_{1}\frac{{\scriptstyle \partial}}{{\scriptstyle \partial}x_{1}}+y_{2}\frac{{\scriptstyle \partial}}{{\scriptstyle \partial}x_{2}}\right)^{n-h} & =\overset{{\scriptstyle {\scriptscriptstyle n-h}}}{\underset{{\scriptscriptstyle t=0}}{\sum}}{\scriptstyle \binom{n-h}{t}\,}y_{1}^{t}y_{2}^{{\scriptscriptstyle n-h-t}}\dfrac{{\scriptstyle \partial}^{{\scriptscriptstyle t}}}{{\scriptstyle \partial}x_{1}^{{\scriptscriptstyle t}}}\ \frac{{\scriptstyle \partial}^{{\scriptscriptstyle n-h-t}}}{{\scriptstyle \partial}x_{2}^{{\scriptscriptstyle n-h-t}}}\,,\end{eqnarray*}
and\begin{eqnarray*}
\Gamma_{xy}^{h}= & \left(x_{1}y_{2}-y_{1}x_{2}\right)^{h} & =\overset{{\scriptstyle {\scriptscriptstyle h}}}{\underset{{\scriptscriptstyle s=0}}{\sum}}{\scriptstyle (-1)^{s}\binom{h}{s}}\,(x_{1}y_{2})^{{\scriptscriptstyle h-s}}\ (x_{2}y_{1})^{{\scriptscriptstyle s}}\,.\end{eqnarray*}
We have by the definition of $\alpha_{m,n,h}$ (see Definition \ref{def:2.8}),
that for $0\leq i\leq r$ \[
\alpha_{m,n,h}\left(f_{i}^{{\scriptscriptstyle r}}\right)=\sqrt{c_{m,n,h}}a_{r}^{i}\overset{{\scriptstyle {\scriptscriptstyle h}}}{\underset{{\scriptscriptstyle s=0}}{\sum}}\overset{{\scriptstyle {\scriptscriptstyle n-h}}}{\underset{{\scriptscriptstyle t=0}}{\sum}}{\scriptstyle (-1)^{s}\binom{h}{s}\binom{n-h}{t}}x_{1}^{{\scriptscriptstyle h-s}}x_{2}^{{\scriptscriptstyle s}}y_{1}^{{\scriptscriptstyle (t+s)}}y_{2}^{{\scriptscriptstyle n-(s+t)}}\dfrac{{\scriptstyle \partial}^{{\scriptscriptstyle t}}}{{\scriptstyle \partial}x_{1}^{{\scriptscriptstyle t}}}x_{1}^{{\scriptscriptstyle i}}\dfrac{{\scriptstyle \partial}^{{\scriptstyle {\scriptscriptstyle n-h-t}}}}{{\scriptstyle \partial}x_{2}^{{\scriptstyle {\scriptscriptstyle n-h-t}}}}x_{2}^{{\scriptscriptstyle r-i}}\]
But\[
\dfrac{{\scriptstyle \partial}^{{\scriptscriptstyle t}}}{{\scriptstyle \partial}x_{1}^{{\scriptscriptstyle t}}}x_{1}^{i}\dfrac{{\scriptstyle \partial}^{{\scriptstyle {\scriptscriptstyle n-h-t}}}}{{\scriptstyle \partial}x_{2}^{{\scriptstyle {\scriptscriptstyle n-h-t}}}}x_{2}^{r-i}=\left\{ \begin{array}{cccc}
\tfrac{i!(r-i)!}{(i-t)!(m-h-i+t)!}\: x_{1}^{i-t}x_{2}^{m+t-h-i} &  & if & -m+i+h\leq t\leq i\\
\\0 &  & otherwise\end{array}\right.\]
Thus, we can rewrite the above sum as \[
\alpha_{m,n,h}\left(f_{i}^{{\scriptscriptstyle r}}\right)=\sqrt{c_{m,n,h}}a_{r}^{i}\overset{{\scriptstyle {\scriptscriptstyle h}}}{\underset{{\scriptscriptstyle s=0}}{\sum}}\,\overset{{\scriptstyle {\scriptscriptstyle \min\{i,\, n-h\}}}}{\underset{{\scriptstyle {\scriptscriptstyle t=\max\{0,-m+i+h\}}}}{\sum}}{\scriptstyle \left(-1\right)^{s}\,\tbinom{h}{s}\,\tbinom{n-h}{t}}\,\tfrac{i!(r-i)!}{(i-t)!(m-h-i+t)!}\: x_{1}^{{\scriptscriptstyle h-(s+t)+i}}x_{2}^{{\scriptscriptstyle m+(s+t)-h-i}}y_{1}^{{\scriptscriptstyle (t+s)}}y_{2}^{{\scriptscriptstyle n-(s+t)}}\]
Changing the summation variable in the inner sum to $j=s+t$, we obtain
\[
\alpha_{m,n,h}\left(f_{i}^{{\scriptscriptstyle r}}\right)=\sqrt{c_{m,n,h}}a_{r}^{i}\overset{{\scriptstyle {\scriptscriptstyle h}}}{\underset{{\scriptscriptstyle s=0}}{\sum}}\,\overset{{\scriptstyle {\scriptscriptstyle k_{2}(i)+s}}}{\underset{{\scriptscriptstyle {\scriptstyle {\scriptscriptstyle j=k_{1(i)}+s}}}}{\sum}}{\scriptstyle \left(-1\right)}^{{\scriptstyle {\scriptscriptstyle s}}}\tbinom{h}{s}\,\tbinom{n-h}{j-s}\,\tfrac{i!(r-i)!}{(i-j+s)!(m-h-i+j-s)!}\: x_{1}^{l_{ij}}x_{2}^{m-l_{ij}}y_{1}^{j}y_{2}^{n-j}\]
Finally, as $\dfrac{a_{r}^{i}}{a_{m}^{l_{ij}}a_{n}^{j}}=\sqrt{\tfrac{j!\, l_{ij}!(m-l_{ij})!(n-j)!}{i!(r-i)!}}$,
we have \[
\alpha_{m,n,h}\left(f_{i}^{{\scriptscriptstyle r}}\right)=\sqrt{c_{m,n,h}}\overset{{\scriptstyle {\scriptscriptstyle h}}}{\underset{{\scriptscriptstyle s=0}}{\sum}}\,\overset{{\scriptscriptstyle {\scriptstyle {\scriptscriptstyle k_{2(i)+s}}}}}{\underset{{\scriptstyle {\scriptscriptstyle j=k_{1}(i)+s}}}{\sum}}{\scriptstyle \left(-1\right)}^{{\scriptstyle {\scriptscriptstyle s}}}\tfrac{\tbinom{h}{s}\,\tbinom{n-h}{j-s}\,\tbinom{m-h}{i-j+s}}{(m-h)!}\:\sqrt{\tfrac{r!\ m!\ n!}{\binom{r}{i}\,\binom{m}{l_{ij}}\,\binom{n}{j}}}\; f_{l_{ij}}^{{\scriptscriptstyle m}}\otimes\ f_{j}^{{\scriptscriptstyle n}}\]
\[
=\overset{{\scriptstyle {\scriptscriptstyle h}}}{\underset{{\scriptstyle {\scriptscriptstyle s=0}}}{\sum}}\,\overset{{\scriptscriptstyle {\scriptstyle {\scriptscriptstyle k_{2}(i)+s}}}}{\underset{{\scriptstyle {\scriptstyle {\scriptscriptstyle j=k_{1}(i)+s}}}}{\sum}}\beta_{i,s,j}^{m,n,h}\: f_{l_{ij}}^{{\scriptscriptstyle m}}\otimes f_{j}^{{\scriptscriptstyle n}}\qquad\qquad\qquad\qquad\qquad\qquad\]

\end{proof}

The following corollary gives a more compact form of the above formula
for $\alpha_{m,n,h}(f_{{\scriptscriptstyle i}}^{{\scriptscriptstyle r}})$. 
\begin{cor}
\label{cor:2.15 formula for alphm,n,h}

For $m,n,h\in\mathbb{N}$ with $0\leq h\leq\min\{m,n\}$, let $r=m+n-2h$.
Then \[
\alpha_{m,n,h}\left(f_{{\scriptscriptstyle i}}^{{\scriptscriptstyle r}}\right)=\underset{{\scriptstyle j\in B(i)}}{\sum}\varepsilon_{i}^{j}{\scriptscriptstyle (m,n,h)}\: f_{{\scriptscriptstyle l_{ij}}}^{{\scriptscriptstyle m}}\otimes f_{{\scriptscriptstyle j}}^{{\scriptscriptstyle n}}\]
where $\varepsilon_{i}^{j}=\overset{{\scriptstyle {\scriptscriptstyle \min\{h,j,j+m-i-h\}}}}{\underset{{\scriptscriptstyle {\scriptscriptstyle s=\max\{0,j-i,j+h-n\}}}}{\sum}}\beta_{{\scriptscriptstyle i},s,j}^{m,n,h}$.\end{cor}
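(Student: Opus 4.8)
The plan is to obtain the compact form entirely by interchanging the order of summation in the double sum furnished by Lemma \ref{lem:2.14}; there is no new analytic content here, only careful bookkeeping of the index ranges. Starting from
\[
\alpha_{m,n,h}\left(f_{{\scriptscriptstyle i}}^{{\scriptscriptstyle r}}\right)=\overset{{\scriptscriptstyle h}}{\underset{{\scriptscriptstyle s=0}}{\sum}}\:\overset{{\scriptscriptstyle k_{2}(i)+s}}{\underset{{\scriptscriptstyle j=k_{1}(i)+s}}{\sum}}\beta_{i,s,j}^{m,n,h}\: f_{l_{ij}}^{{\scriptscriptstyle m}}\otimes f_{{\scriptscriptstyle j}}^{{\scriptscriptstyle n}},
\]
I observe that each summand is indexed by a pair $(s,j)$ subject to the constraints $0\le s\le h$ and $k_{1}(i)+s\le j\le k_{2}(i)+s$, and that the basis vector $f_{l_{ij}}^{{\scriptscriptstyle m}}\otimes f_{{\scriptscriptstyle j}}^{{\scriptscriptstyle n}}$ depends only on $j$, since $l_{ij}=i-j+h$ is a function of $j$ alone once $i,h$ are fixed. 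Consequently, grouping the terms according to the value of $j$ and summing the scalars $\beta_{i,s,j}^{m,n,h}$ over the admissible $s$ will produce exactly the asserted coefficient $\varepsilon_{i}^{j}(m,n,h)$.

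First I would pin down the range of $j$ over all admissible pairs. As $s$ runs from $0$ to $h$, the lower limit $k_{1}(i)+s$ attains its minimum $k_{1}(i)$ at $s=0$ and the upper limit $k_{2}(i)+s$ attains its maximum $k_{2}(i)+h$ at $s=h$; moreover $k_{1}(i)\le k_{2}(i)$ (this uses $i\ge 0$, $h\le n$, $h\le m$, and $i\le r$, which force each of the four competing inequalities), so every integer in between is hit. Hence $j$ ranges over $\{j:k_{1}(i)\le j\le k_{2}(i)+h\}=B(i)$, the outer index set of the corollary.

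Next, for a fixed $j\in B(i)$ I would read off the contributing values of $s$. The inequality $k_{1}(i)+s\le j\le k_{2}(i)+s$ is equivalent to $j-k_{2}(i)\le s\le j-k_{1}(i)$, which, combined with $0\le s\le h$, yields
\[
\max\{0,\,j-k_{2}(i)\}\le s\le\min\{h,\,j-k_{1}(i)\}.
\]
It then remains to substitute $k_{1}(i)=\max\{0,-m+i+h\}$ and $k_{2}(i)=\min\{i,n-h\}$ and distribute the extrema across the subtraction: one gets $j-k_{2}(i)=\max\{j-i,\,j+h-n\}$ and $j-k_{1}(i)=\min\{j,\,j+m-i-h\}$, so the bounds become $\max\{0,j-i,j+h-n\}\le s\le\min\{h,j,j+m-i-h\}$, precisely the limits in the definition of $\varepsilon_{i}^{j}(m,n,h)$. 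Interchanging the two finite sums and collecting the coefficient of $f_{l_{ij}}^{{\scriptscriptstyle m}}\otimes f_{{\scriptscriptstyle j}}^{{\scriptscriptstyle n}}$ then gives the stated identity. The only step requiring genuine care, and hence the main (though minor) obstacle, is the correct distribution of the $\max$ and $\min$ through the subtraction when converting the $s$-bounds into closed form; the interchange of summation is otherwise immediate for these finite sums.
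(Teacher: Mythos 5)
Your proposal is correct and follows essentially the same route as the paper's own proof: both interchange the double sum from Lemma \ref{lem:2.14}, identify the set of admissible pairs $(s,j)$, and for fixed $j\in B(i)$ convert the $s$-range to $\max\{0,j-i,j+h-n\}\le s\le\min\{h,j,j+m-i-h\}$ by substituting $k_{1}(i)$, $k_{2}(i)$ and distributing the extrema through the subtraction. Your explicit verification that $k_{1}(i)\le k_{2}(i)$ (so that $j$ sweeps all of $B(i)$) is a point the paper leaves implicit, but it is the same argument.
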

\begin{proof}
$\,$

Let $A=\{(s,j):0\leq s\leq h,\, k_{1}(i)+s\leq j\leq k_{2}(i)+s\}$,
and $A_{j}=\{s:(s,j)\in A\}$.

Then $A_{j}=\{s:0\leq s\leq h,j-k_{2}(i)\leq s\leq j-k_{1}(i)\}$

$\qquad\qquad=\{s:\max\{0,j-k_{2}(i)\}\leq s\leq\min\{h,j-k_{1}(i)\}\}$

$\qquad\qquad=\{s:\max\{0,j-i,j+h-n\}\leq s\leq\min\{h,j,j+m-i-h\}$.
\\Observe that $A=\underset{j}{\cup}\{(s,j):s\in A_{j}\}$, where
the index $j$ range from $k_{1}(i)$ at $s=0$ to $k_{2}(i)+h$ at
$s=h$. Thus, by Lemma \ref{lem:2.14}, we have \[
\alpha_{m,n,h}(f_{{\scriptscriptstyle i}}^{{\scriptscriptstyle r}})=\underset{{\scriptstyle (s,j)\in A}}{\sum}\beta_{{\scriptscriptstyle i},s,j}^{m,n,h}f_{{\scriptscriptstyle l_{ij}}}^{{\scriptscriptstyle m}}\otimes f_{{\scriptscriptstyle j}}^{{\scriptscriptstyle n}}=\overset{{\scriptscriptstyle k_{2}(i)+h}}{\underset{{\scriptscriptstyle j=k_{1}(i)}}{\sum}}\:\underset{{\scriptstyle s\in A_{j}}}{\sum}\beta_{{\scriptscriptstyle i},s,j}^{m,n,h}f_{{\scriptscriptstyle l_{ij}}}^{{\scriptscriptstyle m}}\otimes f_{{\scriptscriptstyle j}}^{{\scriptscriptstyle n}}\qquad\]
\begin{align*}
\qquad\qquad\quad\qquad=\overset{{\scriptscriptstyle k_{2}(i)+h}}{\underset{{\scriptscriptstyle j=k_{1}(i)}}{\sum}}\:\overset{{\scriptstyle {\scriptscriptstyle \min\{h,j,j+m-i-h\}}}}{\underset{{\scriptscriptstyle {\scriptscriptstyle s=\max\{0,j-i,j+h-n\}}}}{\sum}}\beta_{{\scriptscriptstyle i},s,j}^{m,n,h}f_{{\scriptscriptstyle l_{ij}}}^{{\scriptscriptstyle m}}\otimes f_{{\scriptscriptstyle j}}^{{\scriptscriptstyle n}} & =\underset{{\scriptstyle j\in B(i)}}{\sum}\varepsilon_{i}^{j}{\scriptscriptstyle (m,n,h)}\: f_{{\scriptscriptstyle l_{ij}}}^{{\scriptscriptstyle m}}\otimes f_{{\scriptscriptstyle j}}^{{\scriptscriptstyle n}}\end{align*}

\end{proof}

We now state and prove identities for the coefficients $\varepsilon_{i}^{j}{\scriptscriptstyle (m,n,h)}$.
When $m,n,h$ are clear from the context, we write $\varepsilon_{i}^{j}{\scriptscriptstyle (m,n,h)}$
as $\varepsilon_{i}^{j}$.
\begin{lem}
\label{lem:2.16}For $m,n,h\in\mathbb{N}$ with $0\leq h\leq\min\{m,n\}$,
let $r=m+n-2h$. Then, the coefficients $\varepsilon_{i}^{j}:=\varepsilon_{i}^{j}{\scriptscriptstyle (m,n,h)}$
satisfy
\begin{enumerate}
\item $\varepsilon_{i}^{j}={\scriptstyle \left(-1\right)^{h}}\varepsilon_{{\scriptscriptstyle r-i}}^{{\scriptscriptstyle n-j}}$$\,$,$\quad$for
$0\leq i\leq r$ and $j\in B(i)$. 
\item $\varepsilon_{i}^{{\scriptscriptstyle i+h}}=\beta_{i,h,h+i}^{m,n,h}$$\,$,$\quad$
for any $i\leq n-h$.
\item For $n-h\leq i\leq r$, we have $\varepsilon_{i}^{{\scriptscriptstyle n}}=\beta_{{\scriptscriptstyle i,h,n}}^{m,n,h}={\scriptstyle (-1)}^{h}\left|\beta_{{\scriptscriptstyle i,h,n}}^{m,n,h}\right|$$\neq0$.
\item For $j\in B(0)$, we have $\varepsilon_{0}^{j}=\beta_{{\scriptscriptstyle 0,j,j}}^{m,n,h}={\scriptstyle \left(-1\right)^{j}}\left|\beta_{{\scriptscriptstyle 0,j,j}}^{m,n,h}\right|\neq0$.
\end{enumerate}
\end{lem}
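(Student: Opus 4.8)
The plan is to treat assertion (1) separately from (2)--(4): the first is a symmetry that follows cleanly from equivariance, while the last three are ``boundary'' statements in which the inner sum defining $\varepsilon_i^j$ collapses to a single term.

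For (1), I would exploit the $SU(2)$-equivariance of $\alpha_{m,n,h}$ evaluated at the special element $g_0$. Since $\rho_r(g_0)(f_i^r)=(-1)^i f_{r-i}^r$ with $r=m+n-2h$, equivariance gives $\alpha_{m,n,h}(f_{r-i}^r)=(-1)^i\bigl(\rho_m(g_0)\otimes\rho_n(g_0)\bigr)\alpha_{m,n,h}(f_i^r)$. Expanding the right-hand side with Corollary \ref{cor:2.15 formula for alphm,n,h} and using $\rho_m(g_0)(f_{l_{ij}}^m)=(-1)^{l_{ij}}f_{m-l_{ij}}^m$ and $\rho_n(g_0)(f_j^n)=(-1)^j f_{n-j}^n$, the sign collected on each term is $(-1)^{i}(-1)^{l_{ij}+j}=(-1)^{2i+h}=(-1)^h$, because $l_{ij}+j=i+h$. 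Since $m-l_{ij}=l_{r-i,\,n-j}$, each term becomes $(-1)^h\varepsilon_i^j\, f_{l_{r-i,n-j}}^m\otimes f_{n-j}^n$; comparing with the direct expansion of $\alpha_{m,n,h}(f_{r-i}^r)$ and matching coefficients of the canonical basis yields $\varepsilon_{r-i}^{n-j}=(-1)^h\varepsilon_i^{j}$, which is (1). The one point needing care is that $j\mapsto n-j$ is an order-reversing bijection of $B(i)$ onto $B(r-i)$; this I would verify directly from $k_1(r-i)=\max\{0,n-i-h\}=n-h-k_2(i)$ and $k_2(r-i)+h=\min\{n,m+n-h-i\}=n-k_1(i)$, so that after the change of summation variable the two expansions are indexed identically.

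For (2)--(4), I would return to the index set $A_j=\{s:\max\{0,j-i,j+h-n\}\le s\le\min\{h,j,j+m-i-h\}\}$ from the proof of Corollary \ref{cor:2.15 formula for alphm,n,h}, so that $\varepsilon_i^j=\sum_{s\in A_j}\beta_{i,s,j}^{m,n,h}$, and show that under each hypothesis $A_j$ is a singleton. For (2), put $j=i+h$ with $i\le n-h$: the lower bound is $\max\{0,h,2h+i-n\}=h$ (using $i\le n-h$) and the upper bound is $\min\{h,i+h,m\}=h$ (using $i\ge 0$ and $h\le m$), so $A_{i+h}=\{h\}$ and $\varepsilon_i^{i+h}=\beta_{i,h,i+h}^{m,n,h}$. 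For (3), put $j=n$ with $n-h\le i\le r$: the bounds become $\max\{0,n-i,h\}=h$ and $\min\{h,n,m+n-i-h\}=h$ (the latter using $i\le r$), so again $A_n=\{h\}$ and $\varepsilon_i^n=\beta_{i,h,n}^{m,n,h}$. For (4), note $B(0)=\{0,\dots,h\}$ since $h\le m$, and for $0\le j\le h$ the bounds at $i=0$ are $\max\{0,j,j+h-n\}=j$ and $\min\{h,j,j+m-h\}=j$, so $A_j=\{j\}$ and $\varepsilon_0^j=\beta_{0,j,j}^{m,n,h}$. In each case the sign is then read off the explicit formula for $\beta_{i,s,j}^{m,n,h}$: the radical is positive, and at the surviving value of $s$ the three binomial coefficients are strictly positive, so the sign is exactly the factor $(-1)^s$. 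This gives $(-1)^h$ in (3) (where $s=h$, and $\binom{m-h}{i-n+h}>0$ because $n-h\le i\le r$) and $(-1)^j$ in (4) (where $s=j$ and $\binom{h}{j}>0$), together with nonvanishing.

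The min/max bookkeeping in (2)--(4) is routine and uses only $0\le h\le\min\{m,n\}$ together with the stated ranges of $i$ and $j$. The one genuinely structural step is (1): the main thing to get right is the bijection $B(i)\to B(r-i)$, $j\mapsto n-j$, so that the two expressions for $\alpha_{m,n,h}(f_{r-i}^r)$ may legitimately be compared term by term. Beyond careful sign and index accounting I expect no real obstacle.
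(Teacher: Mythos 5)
Your proposal is correct and follows essentially the same route as the paper: part (1) is proved exactly as in the paper, by applying the $SU(2)$-equivariance of $\alpha_{m,n,h}$ at the element $g_{0}$, expanding both sides via Corollary \ref{cor:2.15 formula for alphm,n,h}, and matching coefficients after verifying that $j\mapsto n-j$ carries $B(i)$ onto $B(r-i)$ (the paper's identities $h+k_{2}(r-i)=n-k_{1}(i)$ and $h+k_{1}(r-i)=n-k_{2}(i)$ are precisely your bookkeeping). For (2)--(4) the paper simply says ``direct computations,'' and your singleton analysis of the index set $A_{j}$, together with reading the sign off the factor $(-1)^{s}$ in $\beta_{i,s,j}^{m,n,h}$, is a correct execution of exactly those computations.
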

\begin{proof}
$\,$

(1) Let $g_{0}=\tiny\left[{\scriptstyle \begin{array}{cc}
0 & 1\\
-1 & 0\end{array}}\right]$. By the $SU(2)$-equivariance of $\alpha_{m,n,h}$, we have \[
(\rho_{{\scriptscriptstyle m}}{\scriptstyle (g_{0})}\otimes\rho_{{\scriptscriptstyle n}}{\scriptstyle (g_{0})})\alpha_{m,n,h}\rho_{r}^{*}{\scriptstyle (g_{0})}(f_{{\scriptscriptstyle i}}^{{\scriptscriptstyle r}})=\alpha_{m,n,h}(f_{{\scriptscriptstyle i}}^{{\scriptscriptstyle r}})\]
for $0\leq i\leq r$. As $\rho_{{\scriptscriptstyle r}}^{*}{\scriptstyle \left(g_{0}\right)}\left(f_{{\scriptscriptstyle i}}^{{\scriptscriptstyle r}}\right)={\scriptstyle (-1)}{}^{{\scriptscriptstyle r-i}}f_{{\scriptscriptstyle r-i}}^{{\scriptscriptstyle r}}$,
$\rho_{{\scriptscriptstyle m}}{\scriptstyle \left(g_{0}\right)}\left(f_{{\scriptscriptstyle l}}^{{\scriptscriptstyle m}}\right)={\scriptstyle (-1)}{}^{{\scriptscriptstyle l}}f_{{\scriptscriptstyle m-l}}^{{\scriptscriptstyle m}}$,
and $\rho_{{\scriptscriptstyle n}}{\scriptstyle \left(g_{0}\right)}\left(f_{{\scriptscriptstyle j}}^{{\scriptscriptstyle n}}\right)={\scriptstyle (-1)}{}^{{\scriptscriptstyle j}}f_{{\scriptscriptstyle n-j}}^{{\scriptscriptstyle n}}$,
by Corollary \ref{cor:2.15 formula for alphm,n,h}, the above equation
can be written as \[
{\scriptstyle \left(-1\right)^{h}}\overset{{\scriptscriptstyle {\scriptstyle {\scriptscriptstyle k_{2}(r-i)+h}}}}{\underset{{\scriptstyle {\scriptstyle {\scriptscriptstyle s=k_{1}(r-i)}}}}{\sum}}\varepsilon_{{\scriptscriptstyle r-i}}^{s}\: f_{l_{i(n-s)}}^{{\scriptscriptstyle m}}\otimes f_{{\scriptscriptstyle n-s}}^{{\scriptscriptstyle n}}=\overset{{\scriptscriptstyle {\scriptstyle {\scriptscriptstyle k_{2}(i)+h}}}}{\underset{{\scriptstyle {\scriptstyle {\scriptscriptstyle j=k_{1}(i)}}}}{\sum}}\varepsilon_{i}^{j}\: f_{l_{ij}}^{{\scriptscriptstyle m}}\otimes f_{{\scriptscriptstyle j}}^{{\scriptscriptstyle n}}\]
 Let $t=n-s$. Since $h+k_{2}(r-i)=n-k_{1}(i)$ and $h+k_{1}(r-i)=n-k_{2}(i)$,
we have \[
{\scriptstyle \left(-1\right)^{h}}\overset{{\scriptscriptstyle {\scriptstyle {\scriptscriptstyle k_{2}(i)+h}}}}{\underset{{\scriptstyle {\scriptstyle {\scriptscriptstyle t=k_{1}(i)}}}}{\sum}}\varepsilon_{{\scriptscriptstyle r-i}}^{{\scriptscriptstyle n-t}}\: f_{l_{it}}^{{\scriptscriptstyle m}}\otimes f_{{\scriptscriptstyle t}}^{{\scriptscriptstyle n}}=\overset{{\scriptscriptstyle {\scriptstyle {\scriptscriptstyle k_{2}(i)+h}}}}{\underset{{\scriptstyle {\scriptstyle {\scriptscriptstyle j=k_{1}(i)}}}}{\sum}}\varepsilon_{i}^{j}\: f_{l_{ij}}^{{\scriptscriptstyle m}}\otimes f_{{\scriptscriptstyle j}}^{{\scriptscriptstyle n}}\]
 Since the vectors $f_{{\scriptscriptstyle l}}^{{\scriptscriptstyle m}}\otimes f_{{\scriptscriptstyle j}}^{{\scriptscriptstyle n}}$
are linearly independent, the last equality implies that ${\scriptstyle \left(-1\right)^{h}}\varepsilon_{{\scriptscriptstyle r-i}}^{{\scriptscriptstyle n-j}}=\varepsilon_{i}^{j}$
for each $k_{1}(i)\leq j\leq k_{2}(i)+h$.

The claims (2),(3), and (4) follow by direct computations.
\end{proof}

Using the fact that $\alpha_{m,n,h}$ is an isometry, one can prove
that the coefficients $c_{m,n,h}$ in Theorem \ref{thm:(Clebsch-Gordan-expansion)}
are given by \[
c_{m,n,h}=\dfrac{{\scriptstyle \left((m-h)!\right)^{2}}}{{\scriptstyle (m+n-2h)!\ m!\ n!}\left(\overset{{\scriptstyle {\scriptscriptstyle h}}}{\underset{{\scriptstyle {\scriptscriptstyle k=0}}}{\sum}}\tfrac{\tbinom{h}{k}^{2}}{\binom{m}{h-k}\,\binom{n}{k}}\right)}\]

Consequently, we have the following corollary
\begin{cor}
\label{cor:2.17}$\,$Let $m,n,h\in\mathbb{N}$ be such that $0\leq h\leq\min\{m,n\}$.
Then 
\begin{enumerate}
\item The coefficients $c_{m,n,h}$ in Theorem \ref{thm:(Clebsch-Gordan-expansion)}
satisfy $c_{m,n,h}=\frac{\left((m-h)!\right)^{2}}{\left((n-h)!\right)^{2}}\, c_{n,m,h}$.
\item For $0\leq i\leq m+n-2h$ and $j\in B(i)$, we have

\begin{enumerate}
\item $\beta_{i,h-s,l_{ij}}^{n,m,h}={\scriptstyle (-1)}^{{\scriptscriptstyle h}}\beta_{i,s,j}^{m,n,h}$,
for any $0\leq s\leq h$.
\item $\varepsilon_{i}^{j}{\scriptstyle (m,n,h)}={\scriptstyle \left(-1\right)}^{{\scriptscriptstyle h}}\varepsilon_{i}^{l_{ij}}{\scriptstyle (n,m,h)}$.
\end{enumerate}
\end{enumerate}
\end{cor}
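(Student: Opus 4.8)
The plan is to derive (1) directly from the closed form for $c_{m,n,h}$ displayed just before the statement, and then to bootstrap (2a) and (2b) from that identity together with the explicit formula for $\beta_{i,s,j}^{m,n,h}$ in Lemma \ref{lem:2.14} and the definition of $\varepsilon_{i}^{j}$ in Corollary \ref{cor:2.15 formula for alphm,n,h}. For (1), write $S_{m,n,h}=\sum_{k=0}^{h}\binom{h}{k}^{2}/(\binom{m}{h-k}\binom{n}{k})$ for the sum occurring in the denominator of that closed form. Reindexing by $k\mapsto h-k$ and using $\binom{h}{k}=\binom{h}{h-k}$ shows $S_{m,n,h}=S_{n,m,h}$; since the remaining factor $(m+n-2h)!\,m!\,n!$ in the denominator is symmetric in $m$ and $n$, the quotient $c_{m,n,h}/c_{n,m,h}$ collapses to $((m-h)!)^{2}/((n-h)!)^{2}$, which is exactly (1).

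For (2a), I would substitute $m\leftrightarrow n$, $s\mapsto h-s$, and $j\mapsto l_{ij}=i-j+h$ into the formula of Lemma \ref{lem:2.14} and track each factor. The radicand is unchanged except that $c_{m,n,h}$ becomes $c_{n,m,h}$: indeed $r=m+n-2h$ and $r!\,m!\,n!$ are symmetric, while the denominators $\binom{r}{i}$, $\binom{m}{i-j+h}$, $\binom{n}{j}$ are merely permuted among themselves (using $i-l_{ij}+h=j$). By (1) this contributes a factor $\sqrt{c_{n,m,h}/c_{m,n,h}}=(n-h)!/(m-h)!$. In the combinatorial prefactor the three binomials $\binom{h}{s}$, $\binom{n-h}{j-s}$, $\binom{m-h}{i-j+s}$ map to $\binom{h}{s}$, $\binom{m-h}{i-j+s}$, $\binom{n-h}{j-s}$ respectively (using $\binom{h}{h-s}=\binom{h}{s}$, $l_{ij}-(h-s)=i-j+s$, and $i-l_{ij}+(h-s)=j-s$), so the product of binomials is unchanged while the denominator becomes $(n-h)!$ instead of $(m-h)!$. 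The radicand factor $(n-h)!/(m-h)!$ then cancels against this new denominator, restoring $1/(m-h)!$; only the sign changes, from $(-1)^{s}$ to $(-1)^{h-s}=(-1)^{h}(-1)^{s}$, which yields the net factor $(-1)^{h}$ and proves (2a).

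For (2b), I would sum the identity (2a) over $s$. By Corollary \ref{cor:2.15 formula for alphm,n,h}, $\varepsilon_{i}^{j}(m,n,h)$ is the sum of $\beta_{i,s,j}^{m,n,h}$ over $s\in[\max\{0,j-i,j+h-n\},\ \min\{h,j,j+m-i-h\}]$, and $\varepsilon_{i}^{l_{ij}}(n,m,h)$ is the analogous sum of $\beta_{i,s',l_{ij}}^{n,m,h}$ over its own range. Setting $s'=h-s$ and invoking (2a) converts each summand of $\varepsilon_{i}^{l_{ij}}(n,m,h)$ into $(-1)^{h}\beta_{i,s,j}^{m,n,h}$; it then remains to verify that $s'=h-s$ carries the summation range for $\varepsilon_{i}^{l_{ij}}(n,m,h)$ (with $m,n$ swapped and $j$ replaced by $l_{ij}$) bijectively onto the range for $\varepsilon_{i}^{j}(m,n,h)$. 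Substituting $l_{ij}=i-j+h$ into the bounds and distributing $h-(\,\cdot\,)$ through the inner $\min$ and $\max$, one checks that the limits $\max\{0,h-j,i-j+2h-m\}$ and $\min\{h,i-j+h,n-j\}$ transform precisely into $\max\{0,j-i,j+h-n\}$ and $\min\{h,j,j+m-i-h\}$. I expect this range-matching, together with the simultaneous bookkeeping of the substitution in (2a), to be the main obstacle; everything else is routine once the correspondence of indices is pinned down.
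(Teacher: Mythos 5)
Your proposal is correct and follows exactly the route the paper intends: the paper states this corollary without proof as a direct consequence of the closed formula for $c_{m,n,h}$, and your computation (symmetry of the sum $S_{m,n,h}$ under $k\mapsto h-k$ for (1), the index bookkeeping $i-l_{ij}+h=j$, $l_{ij}-(h-s)=i-j+s$ with the $\sqrt{c_{n,m,h}/c_{m,n,h}}=(n-h)!/(m-h)!$ cancellation for (2a), and the bijective range-matching $s'=h-s$ for (2b)) is precisely the omitted verification. All steps check out, including the endpoint correspondence $h-\min\{h,i-j+h,n-j\}=\max\{0,j-i,j+h-n\}$ and $h-\max\{0,h-j,i-j+2h-m\}=\min\{h,j,j+m-i-h\}$.
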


\subsection{EPOSIC channels}

$\,$

The $SU(2)$-equivariant isometry $\alpha_{m,n,h}$ in Proposition
\ref{pro:2.10}, will induce an $SU(2)$-covariant quantum channel
$\Phi_{m,n,h}:End(P_{{\scriptscriptstyle m+n-2h}})\rightarrow End(P_{{\scriptscriptstyle m}})$
that has a Stinespring representation $(P_{{\scriptscriptstyle n}},\alpha_{m,n,h})$,
see Proposition \ref{pro:1.8}. According to Definition \ref{def:1.12},
$\Phi_{m,n,h}$ is $SU(2)$-irreducibly covariant channel. The following
proposition records this result.
\begin{prop}
$\,$ \label{pro:2.18} For $m,n,h\in\mathbb{N}$ with $0\leq h\leq\min\{m,n\}$,
let $r=m+n-2h$ and $\alpha_{m,n,h}$ be as in Definition \ref{def:2.8}.
The map $\Phi_{m,n,h}:End(P_{{\scriptscriptstyle r}})\rightarrow End(P_{{\scriptscriptstyle m}})$
defined by \[
\Phi_{m,n,h}(A)=Tr_{{\scriptscriptstyle P_{n}}}(\alpha_{m,n,h}A\alpha_{m,n,h}^{*})\qquad\qquad A\in End(P_{{\scriptscriptstyle r}})\]
is an $SU(2)$-irreducibly covariant channel.
\end{prop}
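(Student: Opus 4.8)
The plan is to recognise $\Phi_{m,n,h}$ as the channel canonically attached to the Stinespring pair $(P_n,\alpha_{m,n,h})$, and then to read off each required property from the general results of Section 1. Virtually all of the substantive work is already contained in Proposition \ref{pro:2.10}, which tells us that $\alpha_{m,n,h}:P_{m+n-2h}\longrightarrow P_m\otimes P_n$ is an $SU(2)$-equivariant isometry; the statement is essentially an assembly of earlier propositions.

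First I would note that, because $\alpha_{m,n,h}$ is an isometry, the pair $(P_n,\alpha_{m,n,h})$ is a genuine Stinespring representation in the sense of Proposition \ref{pro:1.8}(1), with environment $E=P_n$, input space $P_r=P_{m+n-2h}$, and output space $P_m$. Consequently
\[
\Phi_{m,n,h}(A)=Tr_{P_n}(\alpha_{m,n,h}A\alpha_{m,n,h}^{*})
\]
is a quantum channel: complete positivity is immediate from the Stinespring form $A\mapsto Tr_E(\alpha A\alpha^{*})$, and trace preservation follows from the isometry identity $\alpha_{m,n,h}^{*}\alpha_{m,n,h}=I_{P_r}$, since $tr(\Phi_{m,n,h}(A))=tr(\alpha_{m,n,h}A\alpha_{m,n,h}^{*})=tr(\alpha_{m,n,h}^{*}\alpha_{m,n,h}A)=tr(A)$.

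Next, Proposition \ref{pro:2.10} also furnishes the $SU(2)$-equivariance of $\alpha_{m,n,h}$, so Proposition \ref{pro:1.14} applies verbatim: a quantum channel presented by a Stinespring pair whose isometry is $G$-equivariant is $G$-covariant. Taking $G=SU(2)$ shows that $\Phi_{m,n,h}$ is an $SU(2)$-covariant channel. To promote this to \emph{irreducibly} covariant, I would only observe that the input space $P_{m+n-2h}$ and the output space $P_m$ both carry irreducible representations of $SU(2)$ (namely $\rho_{m+n-2h}$ and $\rho_m$); by Definition \ref{def:1.12} a $G$-covariant channel between irreducible representation spaces is $G$-irreducibly covariant, and the claim follows.

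I do not anticipate any genuine obstacle in this argument. The only steps requiring real effort are the isometry and equivariance of $\alpha_{m,n,h}$, and both are already established in Proposition \ref{pro:2.10}; everything else is a direct invocation of Propositions \ref{pro:1.8} and \ref{pro:1.14} together with Definition \ref{def:1.12}.
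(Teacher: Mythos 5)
Your proposal is correct and follows essentially the same route as the paper: the paper's (inline) justification is precisely that the $SU(2)$-equivariant isometry $\alpha_{m,n,h}$ from Proposition \ref{pro:2.10} gives a Stinespring pair $(P_n,\alpha_{m,n,h})$, hence a channel by Proposition \ref{pro:1.8}, which is $SU(2)$-covariant by Proposition \ref{pro:1.14} and irreducibly covariant by Definition \ref{def:1.12} since $\rho_r$ and $\rho_m$ are irreducible. Your only additions are the explicit trace-preservation computation and the complete-positivity remark, which the paper leaves implicit in Proposition \ref{pro:1.8}.
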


\begin{defn}
\label{def:2.19}Let $r,m\in\mathbb{N}$, and $\mathcal{E}(r,m)=\{(n,h)\in\mathbb{N}^{2}:r={\textstyle m+n-2h,\,0\leq h\leq\min\{m,n\}}\}$.
For $(n,h)\in\mathcal{E}(r,m)$, We call the quantum channel $\Phi_{m,n,h}:End(P_{{\scriptscriptstyle r}})\longrightarrow End(P_{{\scriptscriptstyle m}})$,
defined in Proposition \ref{pro:2.18}, an EPOSIC channel.
\end{defn}

Recall that by Proposition \ref{pro:1.13}, the set $\underset{{\scriptstyle {\scriptscriptstyle SU(2)}}}{QC}(\rho_{{\scriptscriptstyle r}},\rho_{{\scriptscriptstyle m}})$
of all $SU(2)$-irreducibly covariant channels from $End(P_{{\scriptscriptstyle r}})$
into $End(P_{{\scriptscriptstyle m}})$, is a convex set. As we will
show in Section \ref{sec:4}, the set of all EPOSIC channels form
$End(P_{{\scriptscriptstyle r}})$ into $End(P_{{\scriptscriptstyle m}})$
forms the extreme points of $\underset{{\scriptstyle {\scriptscriptstyle SU(2)}}}{QC}(\rho_{{\scriptscriptstyle r}},\rho_{{\scriptscriptstyle m}})$,
justifying the nomenclature EPOSIC. 

We denote by $EC(r,m)$ the set of all EPOSIC channels form $End(P_{{\scriptscriptstyle r}})$
into $End(P_{{\scriptscriptstyle m}})$, and abbreviate $EC(m,m)$
to $EC(m)$.

\begin{lem}
\label{lem:2.20}Let $r,m\in\mathbb{N}$, then\[
\mathcal{E}(r,m)=\{(r+m-2l,m-l)\in\mathbb{N}^{2}:\,0\leq l\leq\min\{r,m\}\}\]

\end{lem}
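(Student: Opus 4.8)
The plan is to establish the set equality $\mathcal{E}(r,m) = \{(r+m-2l, m-l) : 0 \le l \le \min\{r,m\}\}$ by showing a bijective correspondence between the two descriptions. Recall that $\mathcal{E}(r,m)$ consists of pairs $(n,h) \in \mathbb{N}^2$ satisfying $r = m+n-2h$ together with the constraint $0 \le h \le \min\{m,n\}$. The natural reparametrization is to let $l = m-h$, i.e.\ $h = m-l$. The first step is to substitute this into the defining equation: from $r = m+n-2h = m+n-2(m-l)$ we solve to get $n = r+m-2l$. This immediately produces the candidate pair $(n,h) = (r+m-2l,\, m-l)$, matching the right-hand side. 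So the content is entirely in checking that the index $l$ ranges over exactly $0 \le l \le \min\{r,m\}$ if and only if $(n,h)$ satisfies the membership conditions of $\mathcal{E}(r,m)$.

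The second step is to translate each of the three constraints on $(n,h)$ into constraints on $l$. Since $h = m-l$, the condition $h \ge 0$ becomes $l \le m$, and the condition $h \le m$ becomes $l \ge 0$; together these give $0 \le l \le m$. It remains to handle the condition $h \le n$, which is the constraint $h \le \min\{m,n\}$ beyond the part already captured. Using $h = m-l$ and $n = r+m-2l$, the inequality $h \le n$ reads $m-l \le r+m-2l$, i.e.\ $l \le r$. I would also verify integrality and nonnegativity of $n$: because $r,m \ge 0$ and $l \le \min\{r,m\} \le r$, we get $n = r+m-2l \ge r + m - 2r = m - r$, and more directly $n \ge 0$ follows once $l \le r$ is combined with $l \le m$ (indeed $r + m - 2l \ge 0$ when $l \le \tfrac{r+m}{2}$, which is implied by $l \le \min\{r,m\}$). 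Collecting the constraints $0 \le l \le m$ and $l \le r$ yields precisely $0 \le l \le \min\{r,m\}$, as desired.

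The final step is to confirm the correspondence $l \leftrightarrow (n,h)$ is genuinely a bijection, so that the two sets coincide rather than merely one containing the other. This is immediate: the map $l \mapsto (r+m-2l, m-l)$ is injective (the second coordinate $m-l$ already determines $l$), and conversely any $(n,h) \in \mathcal{E}(r,m)$ recovers $l = m-h \in \{0,1,\dots,\min\{r,m\}\}$ by the inequalities just derived, with $n$ forced to equal $r+m-2l$ by the defining relation $r = m+n-2h$. Hence the two parametrized families describe the same subset of $\mathbb{N}^2$.

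I do not anticipate a serious obstacle here, as the result is essentially a change of variables in a system of linear inequalities over the integers; the only point requiring mild care is the equivalence $h \le \min\{m,n\} \iff (l \ge 0 \text{ and } l \le r)$, where one must be careful to show that the single substituted inequality $h \le n$ correctly encodes $l \le r$ while the condition $h \le m$ contributes the bound $l \ge 0$. I would present the argument as a short chain of equivalences rather than a double inclusion to keep it transparent.
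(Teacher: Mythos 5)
Your proof is correct and follows essentially the same route as the paper's: both rest on the change of variables $h=m-l$, $n=r+m-2l$, and the verification that the constraint $h\leq\min\{m,n\}$ translates exactly into $0\leq l\leq\min\{r,m\}$. The only difference is organizational (you argue via a chain of equivalences and an explicit bijection, while the paper writes out the two inclusions), which does not change the substance.
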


\begin{proof}
$\,$

Let $\mathcal{B}=\{(r+m-2l,m-l)\in\mathbb{N}^{2}:\,0\leq l\leq\min\{r,m\}\}$.
If $(r+m-2l,m-l)\in\mathcal{B}$ for some $0\leq l\leq\min\{r,m\}$,
set $n_{{\scriptscriptstyle 0}}=r+m-2l$, and $h_{{\scriptscriptstyle 0}}=m-l$.
Then $(n_{{\scriptscriptstyle 0}},h_{{\scriptscriptstyle 0}})$ satisfies
$r=n_{{\scriptscriptstyle 0}}+m-2h_{{\scriptscriptstyle 0}}$ and
$0\leq h_{0}\leq\min\{m,n_{0}\}$. (note that $h_{0}\leq$$n_{0}$
since $h_{0}\leq h_{0}+(r-l)$$=m-l+r-l=m+r-2l=n_{0}$). Thus, $(r+m-2l,m-l)=(n_{0},h_{0})\in\mathcal{E}(r,m)$.
Conversely, if $(n,h)\in\mathcal{E}(r,m)$, set $l_{{\scriptscriptstyle 0}}=m-h$.
Then, since $r=n+m-2h$, we have $n=r-m+2h=r+m-2l_{{\scriptscriptstyle 0}}$.
As $0\leq h\leq\min\{m,n\}$, we have $0\leq l_{{\scriptscriptstyle 0}}\leq m$
and $l_{{\scriptscriptstyle 0}}\leq l_{{\scriptscriptstyle 0}}+(n-h)=(m-h)+(n-h)$$=m+n-2h=r$.
Hence, $0\leq l_{{\scriptscriptstyle 0}}\leq\min\{r,m\}$, and $(n,h)=(r+m-2l_{{\scriptscriptstyle 0}},m-l_{{\scriptscriptstyle 0}})\in\mathcal{B}$.
\end{proof}

As a consequence of the above lemma, we have:
\begin{prop}
\label{pro:2.21}Let $r,m$$\in\mathbb{N}$, then\[
EC(r,m)=\left\{ \Phi_{{\scriptscriptstyle m,r+m-2l,m-l}},\,0\leq l\leq\min\{r,m\}\right\} \]

\end{prop}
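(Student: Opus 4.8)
The plan is to recognize this as an immediate reindexing of Lemma~\ref{lem:2.20}. By Definition~\ref{def:2.19}, the set $EC(r,m)$ of all EPOSIC channels from $End(P_{{\scriptscriptstyle r}})$ into $End(P_{{\scriptscriptstyle m}})$ is exactly the family $\{\Phi_{m,n,h}:(n,h)\in\mathcal{E}(r,m)\}$, where each $\Phi_{m,n,h}$ is the channel built from the isometry $\alpha_{m,n,h}$ in Proposition~\ref{pro:2.18}. Thus the only content to verify is that the index set $\mathcal{E}(r,m)$ coincides with the one-parameter family appearing on the right-hand side of the claim.

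First I would invoke Lemma~\ref{lem:2.20}, which already establishes the set equality
\[
\mathcal{E}(r,m)=\{(r+m-2l,m-l)\in\mathbb{N}^{2}:\,0\leq l\leq\min\{r,m\}\}.
\]
Substituting $n=r+m-2l$ and $h=m-l$ into the defining channel $\Phi_{m,n,h}$ rewrites each EPOSIC channel as $\Phi_{{\scriptscriptstyle m,r+m-2l,m-l}}$, and as $l$ runs over $0\leq l\leq\min\{r,m\}$ these exhaust $EC(r,m)$. This yields the claimed description directly, with no additional estimate or computation required.

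There is essentially no obstacle here: the proposition is a bookkeeping corollary of Lemma~\ref{lem:2.20}, and the entire argument is the substitution of the explicit parametrization of $\mathcal{E}(r,m)$ into the definition of the EPOSIC family. The only minor care needed is to confirm that the substitution $(n,h)\mapsto(r+m-2l,m-l)$ is precisely the bijection underlying Lemma~\ref{lem:2.20} (with inverse $l=m-h$), so that no index values are silently gained or lost; since that lemma verified both inclusions explicitly, the correspondence is already justified. Note also that the statement only asserts a set equality, so I do \emph{not} need to check that distinct values of $l$ produce distinct channels.
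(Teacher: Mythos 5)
Your proposal is correct and matches the paper exactly: the paper states Proposition~\ref{pro:2.21} as an immediate consequence of Lemma~\ref{lem:2.20} together with Definition~\ref{def:2.19}, which is precisely the substitution argument you give. Nothing further is needed.
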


\begin{defn}
\cite[p.125]{key-5} Let $H,K$ be two finite dimensional spaces.
Then a quantum channel $\Phi:End(H)\longrightarrow End(K)$ is said
to be unital if it satisfies that $\Phi(I_{{\scriptscriptstyle H}})=\frac{d_{{\scriptscriptstyle H}}}{d_{K}}I_{{\scriptscriptstyle K}}$.\end{defn}
\begin{rem}
$\,$\label{rem:2.23}
\begin{enumerate}
\item According to a result that is due to\noun{ A. holevo }\cite{key-6},
any $G$-irreducibly covariant channel is unital. Thus, EPOSIC channels
are unital.
\item Since the coefficient $c_{{\scriptscriptstyle m,0,0}}$ in Theorem
\ref{thm:(Clebsch-Gordan-expansion)} is equal to $1$, then using
the natural isomorphism to identify the spaces $P_{{\scriptscriptstyle m}}\otimes P_{{\scriptscriptstyle 0}}$
and $P_{{\scriptscriptstyle m}}$ in Definition \ref{def:2.8}, we
have that $\alpha_{m,0,0}$ is the identity map on $P_{{\scriptscriptstyle m}}$.
Thus, $\Phi_{m,0,0}$ is the identity map on $End(P_{{\scriptscriptstyle m}})$.
\end{enumerate}
\end{rem}

\section{a kraus representation and the choi matrix of eposic channels }

In this section, we use the same notation introduced in Section \ref{sub:2.2.3 B(i)}.
Recall that for $k\in\mathbb{N}$ the set $\{f_{s}^{{\scriptscriptstyle k}}:0\leq s\leq k\}$
denotes the canonical basis for $P_{{\scriptscriptstyle k}}$. 

\newpage{}

\subsection{A Kraus representation of EPOSIC channel}
\begin{defn}
\label{def:3.1 Kaus operta} Let $m,n,h\in\mathbb{N}$ with $0\leq h\leq\min\{m,n\}$.
For $0\leq j\leq n$, we define $T_{j}:P_{{\scriptscriptstyle m+n-2h}}\longrightarrow P_{{\scriptscriptstyle m}}$
by\[
T_{j}=(I_{P_{m}}\otimes f_{j}^{{\scriptscriptstyle n^{*}}})\alpha_{m,n,h}\]

\end{defn}

By Proposition \ref{pro:1.9}, the set $\{T_{j}:0\leq j\leq n\}$
is a Kraus representation of $\Phi_{m,n,h}$. We call the Kraus operators
defined above, EPOSIC Kraus operators. By direct computations using
Corollary \ref{cor:2.15 formula for alphm,n,h} and Definition \ref{def:3.1 Kaus operta},
we have:
\begin{prop}
\label{pro:3.2}Let $m,n,h\in\mathbb{N}$ with $0\leq h\leq\min\{m,n\}$,
let $\{T_{j}:0\leq j\leq n\}$ be the EPOSIC Kraus operators of $\Phi_{m,n,h}$.
Then for $0\leq j\leq n$\[
T_{j}(f_{i}^{r})=\left\{ \begin{array}{ccccc}
\varepsilon_{i}^{j}f_{l_{ij}}^{{\scriptscriptstyle m}} &  &  & if & j\in B(i)\\
0 &  &  &  & otherwise\end{array}\right.\]

\end{prop}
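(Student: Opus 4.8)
The plan is to compute $T_j(f_i^r)$ directly from the definition $T_j = (I_{P_m} \otimes f_j^{n^*})\alpha_{m,n,h}$ by feeding it the formula for $\alpha_{m,n,h}(f_i^r)$ supplied by Corollary \ref{cor:2.15 formula for alphm,n,h}. First I would recall that by that corollary,
\[
\alpha_{m,n,h}(f_i^r) = \underset{j' \in B(i)}{\sum} \varepsilon_i^{j'}(m,n,h)\; f_{l_{ij'}}^{m} \otimes f_{j'}^{n}.
\]
Applying $I_{P_m} \otimes f_j^{n^*}$ to this sum, the second tensor leg $f_{j'}^{n}$ is hit by the linear form $f_j^{n^*}$, which by definition of the dual functional gives $f_j^{n^*}(f_{j'}^{n}) = \langle f_j^n \mid f_{j'}^n\rangle_{P_n} = \delta_{j j'}$, since $\{f_{j'}^n\}$ is an orthonormal basis of $P_n$.

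Consequently every term in the sum is annihilated except the one with $j' = j$, and this term survives exactly when the index $j$ actually occurs in the summation range, that is, when $j \in B(i)$. In that case the surviving contribution is $\varepsilon_i^{j}(m,n,h)\, f_{l_{ij}}^{m}$, while if $j \notin B(i)$ the whole sum is killed and $T_j(f_i^r) = 0$. This reproduces precisely the stated piecewise formula, so the two cases in the proposition correspond exactly to whether or not $j$ lies in the index set $B(i)$.

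The argument is essentially a one-line orthogonality collapse, so there is no serious obstacle; the only point requiring a little care is bookkeeping of the index ranges. Specifically, I would want to confirm that the dummy summation variable in Corollary \ref{cor:2.15 formula for alphm,n,h} ranges over $B(i)$ and that the fixed label $j$ in the definition of $T_j$ genuinely matches an element of $B(i)$ (rather than some shifted index such as $l_{ij}$), so that the Kronecker delta selects the intended coefficient. Since $B(i) = \{j : k_1(i) \leq j \leq k_2(i)+h\}$ is an index set for the second tensor factor, this matching is immediate once one keeps the roles of the two tensor legs straight.

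A clean way to present this is to write the single displayed chain
\[
T_j(f_i^r) = (I_{P_m}\otimes f_j^{n^*})\Bigl(\underset{j'\in B(i)}{\sum}\varepsilon_i^{j'} f_{l_{ij'}}^{m}\otimes f_{j'}^{n}\Bigr) = \underset{j'\in B(i)}{\sum}\varepsilon_i^{j'}\,\delta_{jj'}\, f_{l_{ij'}}^{m},
\]
and then read off the two cases. The fact that $\{T_j : 0\le j\le n\}$ forms a Kraus representation of $\Phi_{m,n,h}$ was already recorded via Proposition \ref{pro:1.9} together with the Stinespring pair $(P_n,\alpha_{m,n,h})$, so nothing further is needed on that front; the proposition is purely the explicit evaluation of each $T_j$ on the canonical basis.
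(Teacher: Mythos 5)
Your proof is correct and is exactly the computation the paper intends: the paper gives no written proof beyond stating that the result follows ``by direct computations using Corollary \ref{cor:2.15 formula for alphm,n,h} and Definition \ref{def:3.1 Kaus operta},'' and your orthogonality-collapse argument, with $f_j^{n^*}(f_{j'}^{n})=\delta_{jj'}$ selecting the $j'=j$ term when $j\in B(i)$ and annihilating everything otherwise, is precisely that computation carried out explicitly.
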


\begin{cor}
\label{cor:3.3} Let $m,n,h\in\mathbb{N}$ with $0\leq h\leq\min\{m,n\}$,
let $r=m+n-2h$. For $0\leq i_{1},i_{2}\leq r$,we have \[
\Phi_{m,n,h}(f_{i_{1}}^{r}f_{i_{2}}^{r^{*}})=\underset{{\scriptstyle j\in B(i_{1})\cap B(i_{2})}}{\sum}\varepsilon_{i_{1}}^{j}\varepsilon_{i_{2}}^{j}f_{l_{i_{1}j}}^{{\scriptscriptstyle m}}f_{l_{i_{2}j}}^{{\scriptscriptstyle m}^{*}}\]
 
\end{cor}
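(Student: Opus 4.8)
The plan is to read off the formula directly from the Kraus representation of $\Phi_{m,n,h}$ established immediately before the statement. By Proposition \ref{pro:1.9} together with Definition \ref{def:3.1 Kaus operta}, the operators $\{T_{j}:0\leq j\leq n\}$ are Kraus operators for $\Phi_{m,n,h}$, so that
\[
\Phi_{m,n,h}(A)=\sum_{j=0}^{n}T_{j}AT_{j}^{*}
\]
for every $A\in End(P_{{\scriptscriptstyle r}})$. Applying this to the rank-one operator $A=f_{{\scriptscriptstyle i_{1}}}^{{\scriptscriptstyle r}}f_{{\scriptscriptstyle i_{2}}}^{{\scriptscriptstyle r^{*}}}$ reduces the computation to evaluating each summand $T_{j}\,(f_{{\scriptscriptstyle i_{1}}}^{{\scriptscriptstyle r}}f_{{\scriptscriptstyle i_{2}}}^{{\scriptscriptstyle r^{*}}})\,T_{j}^{*}$.

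First I would record the elementary identity $T_{j}(xy^{*})T_{j}^{*}=(T_{j}x)(T_{j}y)^{*}$, valid for any vectors $x,y$; it follows at once from the definition $xy^{*}(z)=\left\langle y\left|z\right.\right\rangle x$ together with $\left\langle y\left|T_{j}^{*}w\right.\right\rangle =\left\langle T_{j}y\left|w\right.\right\rangle$. Taking $x=f_{{\scriptscriptstyle i_{1}}}^{{\scriptscriptstyle r}}$ and $y=f_{{\scriptscriptstyle i_{2}}}^{{\scriptscriptstyle r}}$ yields
\[
\Phi_{m,n,h}(f_{{\scriptscriptstyle i_{1}}}^{{\scriptscriptstyle r}}f_{{\scriptscriptstyle i_{2}}}^{{\scriptscriptstyle r^{*}}})=\sum_{j=0}^{n}(T_{j}f_{{\scriptscriptstyle i_{1}}}^{{\scriptscriptstyle r}})(T_{j}f_{{\scriptscriptstyle i_{2}}}^{{\scriptscriptstyle r}})^{*}.
\]
I would then substitute the explicit action of the Kraus operators from Proposition \ref{pro:3.2}: $T_{j}f_{{\scriptscriptstyle i_{1}}}^{{\scriptscriptstyle r}}=\varepsilon_{i_{1}}^{j}f_{{\scriptscriptstyle l_{i_{1}j}}}^{{\scriptscriptstyle m}}$ when $j\in B(i_{1})$ and $0$ otherwise, and likewise for $i_{2}$. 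A given index $j$ therefore contributes a nonzero term exactly when $j\in B(i_{1})\cap B(i_{2})$, and on this set the summand is $\varepsilon_{i_{1}}^{j}\,\overline{\varepsilon_{i_{2}}^{j}}\,f_{{\scriptscriptstyle l_{i_{1}j}}}^{{\scriptscriptstyle m}}f_{{\scriptscriptstyle l_{i_{2}j}}}^{{\scriptscriptstyle m^{*}}}$, the conjugate arising because the map $v\mapsto v^{*}$ is conjugate-linear.

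The one point meriting care — and the only place the stated formula could appear to mismatch — is the complex conjugate on $\varepsilon_{i_{2}}^{j}$. Here I would note that the coefficients $\varepsilon_{i}^{j}$ are \emph{real}: by Corollary \ref{cor:2.15 formula for alphm,n,h} they are sums of the scalars $\beta_{i,s,j}^{m,n,h}$, which are real by their defining expression in Lemma \ref{lem:2.14}, every factorial, binomial coefficient, and value $c_{m,n,h}$ being a nonnegative real under the square root. Hence $\overline{\varepsilon_{i_{2}}^{j}}=\varepsilon_{i_{2}}^{j}$, the conjugate drops, and the sum collapses to the asserted expression indexed by $j\in B(i_{1})\cap B(i_{2})$. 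There is no genuine obstacle: the argument is a single substitution, with the reality of $\varepsilon_{i}^{j}$ being the only fact one must not overlook.
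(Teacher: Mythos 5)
Your proof is correct and follows exactly the route the paper intends: the corollary is stated as an immediate consequence of the Kraus representation $\Phi_{m,n,h}(A)=\sum_{j=0}^{n}T_{j}AT_{j}^{*}$ (Proposition \ref{pro:1.9} with Definition \ref{def:3.1 Kaus operta}) combined with the explicit action of the $T_{j}$ from Proposition \ref{pro:3.2}. Your additional observation that the coefficients $\varepsilon_{i}^{j}$ are real, so the conjugate arising from $(T_{j}f_{i_{2}}^{r})^{*}$ disappears, is a point the paper leaves tacit but is handled correctly.
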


\begin{cor}
$\,$\label{cor:3.4 phi take diagonal todiagonal} Let $m,n,h\in\mathbb{N}$
with $0\leq h\leq\min\{m,n\}$, and $r=m+n-2h$. Let $D_{k}$ denote
the set of all diagonal operators of $End(P_{{\scriptscriptstyle k}})$.
Then $\Phi_{m,n,h}(D_{r})\subseteq D_{m}$. 
\end{cor}

The $SU(2)$ equivariance property of $\alpha_{m,n,h}$ implies the
symmetry relations for the $T_{j}$'s. This relation is given in the
next proposition whose proof follows easily by the following lemma. 
\begin{lem}
\label{lem:3.5}Let $m,n,h\in\mathbb{N}$ with $0\leq h\leq\min\{m,n\}$,
$r=m+n-2h$, and $\{T_{j}:0\leq j\leq n\}$ be the EPOSIC Kraus operators
of $\Phi_{m,n,h}$. Then for any $0\leq i\leq r$, $0\leq l\leq m$,
and $0\leq j\leq n$, we have \[
\left\langle f_{{\scriptscriptstyle l}}^{{\scriptscriptstyle m}}\left|T_{j}(f_{{\scriptscriptstyle i}}^{r})\right.\right\rangle _{{\scriptscriptstyle P_{m}}}={\scriptstyle \left(-1\right)^{h}}\left\langle f_{{\scriptscriptstyle m-l}}^{{\scriptscriptstyle m}}\left|T_{{\scriptscriptstyle n-j}}(f_{{\scriptscriptstyle r-i}}^{{\scriptscriptstyle r}})\right.\right\rangle _{{\scriptscriptstyle P_{m}}}\]
where $\{f_{{\scriptscriptstyle s}}^{{\scriptscriptstyle k}}:0\leq s\leq k\}$
is the canonical basis for $P_{{\scriptscriptstyle k}}$, $k\in\{r,m\}$. \end{lem}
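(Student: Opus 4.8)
The statement is a symmetry relation for the matrix entries of the Kraus operators $T_j$, relating the $(l,i)$-entry of $T_j$ to the $(m-l,r-i)$-entry of $T_{n-j}$, with a sign $(-1)^h$. The natural strategy is to translate both sides into the explicit coefficients $\varepsilon_i^j$ supplied by Proposition \ref{pro:3.2}, and then invoke the symmetry identity for these coefficients already established in Lemma \ref{lem:2.16}(1). So the plan is to reduce the inner-product statement to a coefficient identity and quote the earlier lemma, rather than to recompute anything from scratch.

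\textbf{Key steps.} First I would use Proposition \ref{pro:3.2}, which gives $T_j(f_i^r)=\varepsilon_i^j f_{l_{ij}}^m$ when $j\in B(i)$ and $0$ otherwise. Pairing against the canonical basis vector $f_l^m$ (orthonormal), the left-hand side becomes
\[
\left\langle f_{{\scriptscriptstyle l}}^{{\scriptscriptstyle m}}\left|T_{j}(f_{{\scriptscriptstyle i}}^{r})\right.\right\rangle _{{\scriptscriptstyle P_{m}}}
=\begin{cases}\varepsilon_i^j & \text{if }j\in B(i)\text{ and }l=l_{ij},\\ 0 & \text{otherwise.}\end{cases}
\]
Similarly the right-hand side, up to the factor $(-1)^h$, equals $\varepsilon_{r-i}^{n-j}$ when $n-j\in B(r-i)$ and $m-l=l_{(r-i)(n-j)}$, and $0$ otherwise. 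The second step is bookkeeping on the index conditions: I would check that the selection condition $l=l_{ij}=i-j+h$ on the left matches $m-l=l_{(r-i)(n-j)}=(r-i)-(n-j)+h$ on the right. Since $r=m+n-2h$, a direct substitution shows $l_{(r-i)(n-j)}=m-(i-j+h)=m-l_{ij}$, so the two nonvanishing conditions coincide exactly, and likewise $j\in B(i)$ iff $n-j\in B(r-i)$ (this last equivalence is the index-reflection already used implicitly in the proof of Lemma \ref{lem:2.16}(1), via $h+k_2(r-i)=n-k_1(i)$ and $h+k_1(r-i)=n-k_2(i)$). With the supports matched, the identity reduces to the scalar equation $\varepsilon_i^j=(-1)^h\varepsilon_{r-i}^{n-j}$, which is precisely Lemma \ref{lem:2.16}(1).

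\textbf{Main obstacle.} The only real content beyond citing Lemma \ref{lem:2.16}(1) is verifying that the two sides have identical support, i.e. that the case distinctions (both the $B(i)$-membership and the matching of the row indices $l$ versus $m-l$) line up so that one side is nonzero exactly when the other is. I expect this index-matching to be the fiddly part: one must confirm both that $j\in B(i)\Leftrightarrow n-j\in B(r-i)$ and that the Kronecker conditions $l=l_{ij}$ and $m-l=l_{(r-i)(n-j)}$ are equivalent under $r=m+n-2h$. Once that alignment is in place, the sign $(-1)^h$ and the equality of the surviving coefficients are immediate from Lemma \ref{lem:2.16}(1), completing the proof.
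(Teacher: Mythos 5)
Your proposal is correct and follows essentially the same route as the paper's own proof: both express the matrix entries via Proposition \ref{pro:3.2}, match the support conditions ($j\in B(i)\Leftrightarrow n-j\in B(r-i)$ and $l=l_{ij}\Leftrightarrow m-l=l_{(r-i)(n-j)}$), and then invoke the coefficient symmetry $\varepsilon_i^j=(-1)^h\varepsilon_{r-i}^{n-j}$ of Lemma \ref{lem:2.16}(1). The paper organizes this as a two-case argument ($j\in B(i)$ versus $j\notin B(i)$, the latter making both sides vanish), which is exactly the support-matching bookkeeping you identified as the main obstacle.
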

\begin{proof}
$\,$ 

Fix $j\in\mathbb{N}$ such that $0\leq j\leq n$, then we have one
of the following cases:
\begin{itemize}
\item If $j\in B(i)$ then by Lemma \ref{lem:2.16}, we have $\varepsilon_{i}^{j}={\scriptstyle \left(-1\right)^{h}}\varepsilon_{{\scriptscriptstyle r-i}}^{{\scriptscriptstyle n-j}}$,
and since $l=l_{ij}$ if and only if $m-l=l_{(r-i)(n-j)}$, it follows
that\[
\left\langle f_{{\scriptscriptstyle l}}^{{\scriptscriptstyle m}}\left|T_{j}(f_{{\scriptscriptstyle i}}^{{\scriptscriptstyle r}})\right.\right\rangle _{{\scriptscriptstyle P_{m}}}=\varepsilon_{i}^{j}\delta_{{\scriptscriptstyle ll_{ij}}}={\scriptstyle \left(-1\right)^{h}}\varepsilon_{{\scriptscriptstyle r-i}}^{{\scriptscriptstyle n-j}}\delta_{{\scriptscriptstyle (m-l)l_{(r-i)(n-j)}}}=\left\langle f_{{\scriptscriptstyle m-l}}^{{\scriptscriptstyle m}}\left|T_{{\scriptscriptstyle n-j}}(f_{{\scriptscriptstyle r-i}}^{{\scriptscriptstyle r}})\right.\right\rangle _{{\scriptscriptstyle P_{m}}}\]

\item If $j\notin B(i)$, then $n-j\notin n-B(i)=B(r-i)$ and both of $T_{j}(f_{{\scriptscriptstyle i}}^{{\scriptscriptstyle r}})$,
$T_{{\scriptscriptstyle n-j}}(f_{{\scriptscriptstyle r-i}}^{{\scriptscriptstyle r}})$
are zero. This implies that the identity also holds in this case. 
\end{itemize}
\end{proof}

\begin{prop}
\label{pro:3.6} Let $m,n,h\in\mathbb{N}$ with $0\leq h\leq\min\{m,n\}$,
and $\{T_{j}:0\leq j\leq n\}$ be the EPOSIC Kraus operators of $\Phi_{m,n,h}$.
Then for each $0\leq j\leq n$, we have \[
\rho_{{\scriptscriptstyle m}}{\scriptstyle (g_{0})}T_{{\scriptscriptstyle j}}\rho_{r}^{*}{\scriptstyle (g_{0})}=\left({\scriptstyle -1}\right)^{{\scriptscriptstyle j}}T_{{\scriptscriptstyle n-j}}\]
 where $g_{0}=\tiny\left[{\scriptstyle \begin{array}{cc}
0 & 1\\
-1 & 0\end{array}}\right]$.
\end{prop}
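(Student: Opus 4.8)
The plan is to verify the operator identity entrywise against the canonical orthonormal bases. Since a linear map $P_{r}\to P_{m}$ is completely determined by the scalars $\langle f_{l}^{m}\,|\,\cdot\,(f_{i}^{r})\rangle_{P_{m}}$ for $0\le i\le r$ and $0\le l\le m$, it suffices to show that for all such $i,l$ one has
\[
\langle f_{l}^{m}\,|\,\rho_{m}(g_{0})T_{j}\rho_{r}^{*}(g_{0})(f_{i}^{r})\rangle_{P_{m}}=(-1)^{j}\,\langle f_{l}^{m}\,|\,T_{n-j}(f_{i}^{r})\rangle_{P_{m}}.
\]
This reduces the claim to a computation with scalars, where Lemma \ref{lem:3.5} can be applied directly.

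First I would unwind the left-hand side using the explicit action of $g_{0}$ on the canonical bases, namely $\rho_{r}^{*}(g_{0})(f_{i}^{r})=(-1)^{r-i}f_{r-i}^{r}$ and $\rho_{m}(g_{0})(f_{l'}^{m})=(-1)^{l'}f_{m-l'}^{m}$, both recorded in Section 2.1.1 and in the proof of Lemma \ref{lem:2.16}. Expanding $T_{j}(f_{r-i}^{r})$ in the orthonormal basis $\{f_{l'}^{m}\}$ and pulling $\rho_{m}(g_{0})$ across the inner product (the term $f_{m-l'}^{m}$ pairs with $f_{l}^{m}$ only when $l'=m-l$) gives
\[
\langle f_{l}^{m}\,|\,\rho_{m}(g_{0})T_{j}\rho_{r}^{*}(g_{0})(f_{i}^{r})\rangle_{P_{m}}=(-1)^{r-i}(-1)^{m-l}\,\langle f_{m-l}^{m}\,|\,T_{j}(f_{r-i}^{r})\rangle_{P_{m}}.
\]

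Next I would invoke Lemma \ref{lem:3.5} with $l$ replaced by $m-l$ and $i$ replaced by $r-i$, which rewrites $\langle f_{m-l}^{m}\,|\,T_{j}(f_{r-i}^{r})\rangle_{P_{m}}=(-1)^{h}\,\langle f_{l}^{m}\,|\,T_{n-j}(f_{i}^{r})\rangle_{P_{m}}$. After substitution the accumulated prefactor in front of $\langle f_{l}^{m}\,|\,T_{n-j}(f_{i}^{r})\rangle_{P_{m}}$ is $(-1)^{r-i+m-l+h}$, so the entire proposition reduces to checking that $(-1)^{r-i+m-l+h}=(-1)^{j}$ on the support of the right-hand side. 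I would therefore use Proposition \ref{pro:3.2}: the factor $\langle f_{l}^{m}\,|\,T_{n-j}(f_{i}^{r})\rangle_{P_{m}}$ vanishes unless $n-j\in B(i)$ and $l=l_{i(n-j)}=i-n+j+h$; when it vanishes both sides of the target identity are $0$, and the matching of the two zero sets is exactly the content of Lemma \ref{lem:3.5}, so nothing remains in that case.

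The final, and principal, step is the parity bookkeeping on the support. I would substitute $l=i-n+j+h$ together with $r=m+n-2h$ into the exponent $r-i+m-l+h$ and check that it collapses to $2m+2n-2h-2i-j\equiv j\pmod 2$, whence $(-1)^{r-i+m-l+h}=(-1)^{j}$ and the two matrix entries coincide. I expect this sign computation to be the only delicate point: the three separate sign contributions (from $\rho_{r}^{*}(g_{0})$, from $\rho_{m}(g_{0})$, and the factor $(-1)^{h}$ from Lemma \ref{lem:3.5}) must be combined correctly and then reduced modulo $2$ using the constraint $l=l_{i(n-j)}$ forced by the support condition. Everything else is a mechanical substitution, and assembling these entrywise equalities over all $i,l$ yields the operator identity $\rho_{m}(g_{0})T_{j}\rho_{r}^{*}(g_{0})=(-1)^{j}T_{n-j}$.
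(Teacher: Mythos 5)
Your proof is correct and follows essentially the same route as the paper: the paper derives Proposition \ref{pro:3.6} directly from Lemma \ref{lem:3.5} (it states the proposition ``follows easily'' from that lemma), and your entrywise verification --- combining the action of $g_{0}$ on the canonical bases, Lemma \ref{lem:3.5} with $(l,i)$ replaced by $(m-l,r-i)$, and the parity check $r-i+m-l+h\equiv j\pmod 2$ on the support $l=l_{i(n-j)}$ --- is exactly the omitted computation. The sign bookkeeping is carried out correctly, so nothing is missing.
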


The relation in Proposition \ref{pro:3.6} can be translated into
a symmetry relation for the vectors representing the operators $T_{j}$
in $P_{{\scriptscriptstyle m}}\otimes\overline{P}_{{\scriptscriptstyle r}}$,
see Example \ref{exa:1.5}, to get \begin{equation}
Vec(T_{{\scriptscriptstyle n-j}})={\scriptstyle \left(-1\right)^{j}}\left(\rho_{{\scriptscriptstyle m}}{\scriptstyle (g_{0})}\otimes\check{\rho}_{{\scriptscriptstyle r}}{\scriptstyle (g_{0})}\right)Vec(T_{j})\end{equation}

It is worth noticing that even though $\Phi$ is a $G$-covariant
channel, its Kraus operators are not necessary $G$-equivariant.

$\mathbf{Notations:}$ Let $m,n,h\in\mathbb{N}$ with $0\leq h\leq\min\{m,n\}$,
and $r=m=n-2h$. For any $0\leq j\leq n$, let $l_{e}(j)=\max\{0,h-j\}$,
$l_{f}(j)=\min\{r-j+h,m\}$, and $R(j){\scriptstyle =}\{l:l_{e}(j)\leq l\leq l_{f}(j)\}$. 
\begin{rem}
$\,$\label{rem:3.7}Using the same notation above, we have:
\begin{enumerate}
\item For fixed $j$ such $0\leq j\leq n$, and for $i$, $0\leq i\leq r$,
we have $j\in B(i)\Longleftrightarrow l_{ij}\in R(j)$.
\item By Proposition \ref{pro:3.2} and (1) above, we have that the Kraus
operator $T_{j}$ can be written as $T_{j}=\underset{{\scriptscriptstyle l\in R(j)}}{\sum}\varepsilon_{{\scriptscriptstyle l+j-h}}^{j}\, f_{{\scriptscriptstyle l}}^{{\scriptscriptstyle m}}f_{{\scriptscriptstyle l+j-h}}^{r^{{\scriptstyle *}}}$.
\item Using (2) above, we have $\mathrm{Vec}(T_{j})=\underset{{\scriptscriptstyle l\in R(j)}}{\sum}\varepsilon_{{\scriptscriptstyle l+j-h}}^{j}\, f_{{\scriptscriptstyle l}}^{{\scriptscriptstyle m}}\otimes f_{{\scriptscriptstyle l+j-h}}^{r}$.
\end{enumerate}
\end{rem}

The first part of the following proposition can by proved by taking
the adjoint of both sides of the equation in Remark \ref{rem:3.7}
(2), while the second part is direct computations.
\begin{prop}
\label{pro:3.8} Let $m,n,h\in\mathbb{N}$ with $0\leq h\leq\min\{m,n\}$
and $r=m+n-2h$. Let $\{T_{j}:0\leq j\leq n\}$ be the EPOSIC Kraus
operators for $\Phi_{m,n,h}$ and $T_{j}^{*}$ denotes the adjoint
map of $T_{j}$. Then for $0\leq j\leq n$ we have
\begin{enumerate}
\item $T_{j}^{*}=\underset{{\scriptscriptstyle l\in R(j)}}{\sum}\varepsilon_{{\scriptscriptstyle l+j-h}}^{j}\, f_{{\scriptscriptstyle l+j-h}}^{{\scriptscriptstyle r}}f_{{\scriptscriptstyle l}}^{{\scriptscriptstyle m}^{*}}$
i.e\[
T_{j}^{*}\left(f_{l}^{{\scriptscriptstyle m}}\right)=\left\{ \begin{array}{ccccc}
\varepsilon_{{\scriptscriptstyle l+j-h}}^{{\scriptscriptstyle j}}f_{{\scriptscriptstyle l+j-h}}^{r} &  & if &  & {\scriptstyle l\in R(j)}\\
0 &  &  &  & else\end{array}\right.\]

\item $flip_{{\scriptscriptstyle P_{r}}}^{{\scriptscriptstyle \overline{P}_{m}}}\left(J_{{\scriptscriptstyle m}}\otimes J_{{\scriptscriptstyle r}}^{*}\right)\left(Vec(T_{{\scriptscriptstyle n-j}})\right)={\scriptstyle \left(-1\right)^{m+j}}Vec(T_{j}^{*})$.
\end{enumerate}
where $\{f_{s}^{{\scriptscriptstyle k}}:0\leq s\leq k\}$, is the
canonical basis for $P_{{\scriptscriptstyle k}}$, and $J_{{\scriptscriptstyle k}}$
as in Definition \ref{def:2.2}, $k\in\{r,m\}$.
\end{prop}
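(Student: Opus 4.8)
The plan is to treat the two parts separately, using throughout the explicit basis descriptions of the Kraus operators recorded in Remark \ref{rem:3.7}, so that both statements reduce to bookkeeping with the coefficients $\varepsilon_i^j$ and their reflection symmetry.

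For part (1), I would take the Hilbert--Schmidt adjoint of the identity $T_j=\sum_{l\in R(j)}\varepsilon_{l+j-h}^{j}\,f_l^m f_{l+j-h}^{r*}$ from Remark \ref{rem:3.7}(2). Since $(xy^*)^*=yx^*$ for any rank-one operator, and since each $\varepsilon_i^j$ is a \emph{real} number (being a finite sum of the manifestly real coefficients $\beta_{i,s,j}^{m,n,h}$ of Corollary \ref{cor:2.15 formula for alphm,n,h}), taking adjoints term by term leaves the coefficients unchanged and swaps each $f_l^m f_{l+j-h}^{r*}$ into $f_{l+j-h}^r f_l^{m*}$, giving $T_j^*=\sum_{l\in R(j)}\varepsilon_{l+j-h}^{j}\,f_{l+j-h}^r f_l^{m*}$. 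Evaluating this on a basis vector $f_l^m$ and using $f_{l'}^{m*}(f_l^m)=\delta_{l'l}$ reads off the displayed case distinction. The only point requiring comment is the reality of the $\varepsilon_i^j$, which is immediate from their definition.

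For part (2), I would compute the left-hand side directly on $\mathrm{Vec}(T_{n-j})=\sum_{l\in R(n-j)}\varepsilon_{l+(n-j)-h}^{n-j}\,f_l^m\otimes \overline{f_{l+(n-j)-h}^{r}}$ from Remark \ref{rem:3.7}(3). Applying $J_m\otimes J_r^*$ with the basis formulas of (\ref{eq:jm on the basis}) sends each summand to $(-1)^{l}(-1)^{\,r-(l+(n-j)-h)}\,\overline{f_{m-l}^m}\otimes f_{\,r-(l+(n-j)-h)}^{r}$, and I would first observe that the accumulated sign collapses to $(-1)^{m-h+j}$, \emph{independently of the summation index $l$}, using $r=m+n-2h$ (so that $r-n+h=m-h$). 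Composing with $flip_{P_r}^{\overline{P}_m}$ interchanges the two tensor legs, and after the substitution $l'=m-l$ the first leg becomes $f_{l'+j-h}^{r}$ and the second $\overline{f_{l'}^{m}}$, so the output already has the shape of $\mathrm{Vec}(T_j^*)$ obtained in part (1).

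The coefficient match is the heart of the argument and the main obstacle. I would invoke the symmetry $\varepsilon_i^{j}=(-1)^h\varepsilon_{r-i}^{n-j}$ of Lemma \ref{lem:2.16}(1) with $i=l'+j-h$, checking that $r-i=l+(n-j)-h$, which rewrites $\varepsilon_{l+(n-j)-h}^{n-j}$ as $(-1)^h\varepsilon_{l'+j-h}^{j}$ and turns the overall sign into $(-1)^{m-h+j}(-1)^h=(-1)^{m+j}$. In parallel I must confirm that $l\mapsto m-l$ carries $R(n-j)$ bijectively onto $R(j)$; this follows from the definitions of $l_e,l_f$ together with $r-n+h=m-h$, giving $m-l_f(n-j)=l_e(j)$ and $m-l_e(n-j)=l_f(j)$. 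Assembling these identifications produces exactly $(-1)^{m+j}\sum_{l'\in R(j)}\varepsilon_{l'+j-h}^{j}\,f_{l'+j-h}^r\otimes\overline{f_{l'}^m}=(-1)^{m+j}\mathrm{Vec}(T_j^*)$. All the delicate work is the sign and index accounting; the representation-theoretic content is confined to the two inputs, namely the reality of the $\beta$'s and the reflection symmetry of Lemma \ref{lem:2.16}(1).
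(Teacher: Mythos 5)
Your proposal is correct and takes essentially the same route as the paper, whose entire proof is the one-line assertion that part (1) follows by taking adjoints in Remark \ref{rem:3.7}(2) and that part (2) is ``direct computations.'' Your part (1) is exactly that adjoint argument (with the reality of the $\varepsilon_i^j$ rightly noted), and your part (2) --- the sign bookkeeping via (\ref{eq:jm on the basis}), the bijection $l\mapsto m-l$ of $R(n-j)$ onto $R(j)$, and the reflection symmetry of Lemma \ref{lem:2.16}(1) yielding $(-1)^{m-h+j}(-1)^{h}=(-1)^{m+j}$ --- is precisely the computation the paper leaves implicit, and it checks out.
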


\subsection{The Choi matrix of $\Phi_{m,n,h}$}

$\,$

Recall that by Corollary \ref{cor:2.13}, the space $P_{{\scriptscriptstyle m}}\otimes\overline{P}_{{\scriptscriptstyle r}}$
decomposes into an orthogonal direct sum of $SU(2)$-irreducible subspaces
$V_{{\scriptscriptstyle m+r-2l}}$, $0\leq l\leq\min\{m,r\}$. The
maps $\eta_{{\scriptscriptstyle m,r,l}}:V_{{\scriptscriptstyle m+r-2l}}\longrightarrow P_{{\scriptscriptstyle m}}\otimes\overline{P}_{{\scriptscriptstyle r}}$,
$0\leq l\leq\min\{m,r\}$ are isometries such that the final supports
$q_{{\scriptscriptstyle m,r,l}}=\eta_{{\scriptscriptstyle m,r,l}}\eta_{{\scriptscriptstyle m,r,l}}^{*}$
are the mutually orthogonal projections onto $V_{{\scriptscriptstyle m+r-2l}}$,
$0\leq l\leq\min\{m,r\}$. These projections satisfy $\overset{{\scriptscriptstyle \min\{m,r\}}}{\underset{{\scriptscriptstyle l=0}}{\sum}}q_{{\scriptscriptstyle m,r,l}}=I_{{\scriptscriptstyle P_{m}\otimes\overline{P}_{r}}}$.
By Schur Lemma \cite[p.13]{key-13}, we get:
\begin{prop}
\cite{key-17}\label{pro:3.9} Let $\rho_{{\scriptstyle {\scriptscriptstyle m}}}$,
$\rho_{{\scriptscriptstyle r}}$ be the irreducible representation
of $SU(2)$ on $P_{{\scriptscriptstyle m}}$, $P_{{\scriptscriptstyle r}}$
respectively. Then $\left(\rho_{{\scriptscriptstyle m}}\otimes\check{\rho}_{{\scriptscriptstyle r}}\left(SU(2)\right)\right)^{\prime}$
is an abelian algebra that is generated by the projections on the
$SU(2)$-subspaces $V_{{\scriptscriptstyle m+r-2l}}$ of $P_{{\scriptscriptstyle m}}\otimes\overline{P}_{{\scriptscriptstyle r}}$
where $0\leq l\leq\min\{m,r\}$.
\end{prop}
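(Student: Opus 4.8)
The plan is to combine Schur's Lemma with the multiplicity-free decomposition recorded in Corollary \ref{cor:2.13}. Write $\pi=\rho_{m}\otimes\check{\rho}_{r}$ for the representation on $P_{m}\otimes\overline{P}_{r}$, and recall that by Corollary \ref{cor:2.13} we have an orthogonal decomposition $P_{m}\otimes\overline{P}_{r}=\bigoplus_{l=0}^{\min\{m,r\}}V_{m+r-2l}$ into $SU(2)$-irreducible subspaces, with projections $q_{m,r,l}$ satisfying $\sum_{l}q_{m,r,l}=I_{P_{m}\otimes\overline{P}_{r}}$. Since $\pi(SU(2))^{\prime}$ is automatically a subalgebra of $End(P_{m}\otimes\overline{P}_{r})$, it remains only to show it is abelian and is spanned by the $q_{m,r,l}$. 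First I would observe that each $V_{m+r-2l}$ is $SU(2)$-invariant, so by the Remark following Definition \ref{def:1.2} each $q_{m,r,l}$ lies in $\pi(SU(2))^{\prime}$; hence the algebra generated by these projections is contained in the commutant.

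For the reverse inclusion, take an arbitrary $T\in\pi(SU(2))^{\prime}$. Since both $T$ and every $q_{m,r,l}$ commute with $\pi{\scriptstyle (g)}$ for all $g$, so does each block $q_{m,r,l}\,T\,q_{m,r,l^{\prime}}$, which I regard as an $SU(2)$-equivariant map $V_{m+r-2l^{\prime}}\longrightarrow V_{m+r-2l}$. The indices $m+r-2l$ are pairwise distinct for $0\leq l\leq\min\{m,r\}$ (they decrease from $m+r$ to $|m-r|$ in steps of two), so by the classification of the $SU(2)$-irreducibles the spaces $V_{m+r-2l}$ and $V_{m+r-2l^{\prime}}$ are non-isomorphic whenever $l\neq l^{\prime}$. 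Schur's Lemma \cite[p.13]{key-13} then forces $q_{m,r,l}\,T\,q_{m,r,l^{\prime}}=0$ for $l\neq l^{\prime}$, while for $l=l^{\prime}$ it gives $q_{m,r,l}\,T\,q_{m,r,l}=\lambda_{l}\,q_{m,r,l}$ for some scalar $\lambda_{l}\in\mathbb{C}$. Expanding $T=\sum_{l,l^{\prime}}q_{m,r,l}\,T\,q_{m,r,l^{\prime}}$ by means of $\sum_{l}q_{m,r,l}=I$, I conclude $T=\sum_{l}\lambda_{l}\,q_{m,r,l}$, so every element of the commutant is a linear combination of the projections.

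Finally, since the $q_{m,r,l}$ are mutually orthogonal, $q_{m,r,l}\,q_{m,r,l^{\prime}}=\delta_{ll^{\prime}}\,q_{m,r,l}$, any two elements $\sum_{l}\lambda_{l}q_{m,r,l}$ and $\sum_{l}\mu_{l}q_{m,r,l}$ commute; hence $\pi(SU(2))^{\prime}$ is abelian and generated by the $q_{m,r,l}$, as claimed. The crux of the argument—and the only place the special structure of $SU(2)$ enters—is the multiplicity-freeness in Corollary \ref{cor:2.13}: it is precisely because each irreducible occurs once, with pairwise distinct index $m+r-2l$, that the off-diagonal blocks vanish and the diagonal blocks are scalar. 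Were any irreducible to appear with multiplicity greater than one, its diagonal block would contribute a full matrix algebra and the commutant would fail to be abelian. Thus the only (modest) obstacle is to record that the summands $V_{m+r-2l}$ are pairwise non-isomorphic, which the distinctness of their indices supplies.
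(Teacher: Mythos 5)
Your proof is correct and follows exactly the route the paper intends: the paper states this proposition by citing \cite{key-17} and simply prefacing it with the multiplicity-free decomposition of Corollary \ref{cor:2.13} and the phrase ``By Schur Lemma,'' and your argument is precisely the fleshed-out version of that sketch (off-diagonal blocks vanish because the summands $V_{m+r-2l}$ are pairwise non-isomorphic, diagonal blocks are scalars, hence the commutant is the span of the mutually orthogonal projections $q_{m,r,l}$ and is abelian). No gaps; your emphasis on multiplicity-freeness as the crux is exactly the right observation.
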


By the last Proposition and Proposition \ref{pro:1.15}, we have: 
\begin{cor}
\label{cor:3.10} For $m,n,h\in\mathbb{N}$ with $0\leq h\leq\min\{m,n\}$,
and $r=m+n-2h$, the Choi matrix of $\Phi_{m,n,h}$ is generated by
$\{q_{{\scriptscriptstyle m,r,l}}=\eta_{{\scriptscriptstyle m,r,l}}\eta_{{\scriptscriptstyle m,r,l}}^{*}:0\leq l\leq\min\{m,r\}\}$.
\end{cor}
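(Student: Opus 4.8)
The plan is to combine two results already in place: Proposition \ref{pro:1.15}, which translates the $SU(2)$-equivariance of a map into a commutant condition on its Choi matrix, and Proposition \ref{pro:3.9}, which pins down the structure of the relevant commutant. Since $\Phi_{m,n,h}$ is by construction an $SU(2)$-irreducibly covariant channel (Proposition \ref{pro:2.18}), it is in particular an $SU(2)$-equivariant map from $End(P_{r})$ into $End(P_{m})$, which is exactly the hypothesis feeding into Proposition \ref{pro:1.15}.

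First I would apply part (1) of Proposition \ref{pro:1.15} with $H=P_{r}$, $K=P_{m}$ and $G=SU(2)$. This gives that the Choi matrix $C(\Phi_{m,n,h})$ lies in the commutant $\left(\rho_{m}\otimes\check{\rho}_{r}(SU(2))\right)'$ inside $End(P_{m}\otimes\overline{P}_{r})$. Next I would invoke Proposition \ref{pro:3.9}, which identifies this commutant as the abelian algebra generated by the mutually orthogonal projections $q_{m,r,l}=\eta_{m,r,l}\eta_{m,r,l}^{*}$, $0\le l\le\min\{m,r\}$, onto the irreducible summands $V_{m+r-2l}$ of $P_{m}\otimes\overline{P}_{r}$ described in Corollary \ref{cor:2.13}. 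Combining the two memberships yields that $C(\Phi_{m,n,h})$ belongs to the algebra generated by the $q_{m,r,l}$; since these projections are mutually orthogonal and sum to $I_{P_{m}\otimes\overline{P}_{r}}$, that algebra is simply their linear span, so in fact $C(\Phi_{m,n,h})=\sum_{l}\lambda_{l}\,q_{m,r,l}$ for suitable scalars $\lambda_{l}$.

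The statement is genuinely a corollary, so there is no serious obstacle: its entire content is already packaged into Propositions \ref{pro:1.15} and \ref{pro:3.9}. The only point worth recording explicitly is that the hypotheses of Proposition \ref{pro:1.15} are met, namely that $\Phi_{m,n,h}$ is $SU(2)$-covariant, which is precisely Proposition \ref{pro:2.18}. Determining the actual coefficients $\lambda_{l}$ is a separate computation and is not needed here.

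For completeness I would note an alternative route that avoids Proposition \ref{pro:3.9}: one could start from the explicit expression $C(\Phi_{m,n,h})=\sum_{j}\mathrm{Vec}(T_{j})\mathrm{Vec}(T_{j})^{*}$ of Proposition \ref{pro:1.9} together with the formula for $\mathrm{Vec}(T_{j})$ and check directly that the resulting matrix lies in the span of the $q_{m,r,l}$. This is considerably more laborious and gains nothing over the structural argument above, so I would relegate it to a remark rather than make it the proof.
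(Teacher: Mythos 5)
Your proposal is correct and matches the paper's own argument: the paper likewise obtains Corollary \ref{cor:3.10} by combining Proposition \ref{pro:1.15} (equivariance of $\Phi_{m,n,h}$ forces $C(\Phi_{m,n,h})$ into the commutant $\left(\rho_{m}\otimes\check{\rho}_{r}(SU(2))\right)'$) with Proposition \ref{pro:3.9} (that commutant is the abelian algebra generated by the projections $q_{m,r,l}$). Your added observation that this algebra is just the linear span of the mutually orthogonal $q_{m,r,l}$, so $C(\Phi_{m,n,h})=\sum_{l}\lambda_{l}q_{m,r,l}$, is a harmless refinement that the paper itself invokes later in the proof of Proposition \ref{pro:3.13}.
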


\begin{rem}
For $m,n,h\in\mathbb{N}$ with $0\leq h\leq\min\{m,n\}$, if $r=m+n-2h$
then since $0\leq m-h\leq\min\{m,r\}$, the set $\{q_{{\scriptscriptstyle m,r,l}}:0\leq l\leq\min\{m,r\}\}$
always contains the projection $q_{{\scriptscriptstyle m,r,m-h}}=\eta_{{\scriptscriptstyle m,r,m-h}}\eta_{{\scriptscriptstyle m,r,m-h}}^{*}$.
\end{rem}

The following lemma will be used to prove Proposition \ref{pro:3.13}. 
\begin{lem}
\label{lem:3.12} Let $m,n,h\in\mathbb{N}$ with $0\leq h\leq\min\{m,n\}$,
and $r=m+n-2h$. Then the operator $C(\Phi_{m,n,h})\eta_{{\scriptscriptstyle m,r,m-h}}$
is non zero.\end{lem}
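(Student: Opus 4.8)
The plan is to exhibit a single vector in the range of $\eta_{m,r,m-h}$ on which $C(\Phi_{m,n,h})$ does not vanish, by showing that the associated quadratic form is strictly positive. Concretely, I take $w:=\eta_{m,r,m-h}(f_{0}^{n})$. Note that $\eta_{m,r,m-h}$ maps $P_{m+r-2(m-h)}=P_{n}$ into $P_{m}\otimes\overline{P}_{r}$, so $w$ lives in the correct space, and since $\eta_{m,r,m-h}$ is an isometry (Lemma \ref{lem:2.12 (Ipm tensor Jn)}) and $f_{0}^{n}\neq 0$, we have $w\neq 0$. The first step is to invoke the Kraus expansion of the Choi matrix, $C(\Phi_{m,n,h})=\sum_{j=0}^{n}\mathrm{Vec}(T_{j})\mathrm{Vec}(T_{j})^{*}$ from Proposition \ref{pro:1.9}, which yields
\[
\langle w\mid C(\Phi_{m,n,h})\,w\rangle=\sum_{j=0}^{n}\bigl|\langle\mathrm{Vec}(T_{j})\mid w\rangle\bigr|^{2}\ \geq\ \bigl|\langle\mathrm{Vec}(T_{n})\mid w\rangle\bigr|^{2}.
\]
Hence it suffices to prove that the single overlap $\langle\mathrm{Vec}(T_{n})\mid w\rangle$ is nonzero: one nonzero summand forces $C(\Phi_{m,n,h})w\neq 0$, and therefore $C(\Phi_{m,n,h})\eta_{m,r,m-h}\neq 0$.

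The core is then a direct computation of this overlap. I would expand $w=(I_{P_{m}}\otimes J_{r})\alpha_{m,r,m-h}(f_{0}^{n})$ using the closed formula for $\alpha_{m,r,m-h}(f_{0}^{n})$ from Corollary \ref{cor:2.15 formula for alphm,n,h} (read with parameters $(m,r,m-h)$) together with the action $J_{r}(f_{j}^{r})=(-1)^{j}f_{r-j}^{r}$ from \eqref{eq:jm on the basis}, and expand $\mathrm{Vec}(T_{n})$ via Remark \ref{rem:3.7}(3). Pairing the orthonormal basis vectors $f_{a}^{m}\otimes\overline{f_{b}^{r}}$, the two relevant index sets coincide ($B(0)$ for the parameters $(m,r,m-h)$ equals $\{0,\dots,m-h\}$, and $R(n)=\{0,\dots,m-h\}$), and the terms pair up index by index, leaving
\[
\langle\mathrm{Vec}(T_{n})\mid w\rangle=\sum_{j=0}^{m-h}\varepsilon_{r-j}^{\,n}(m,n,h)\,\varepsilon_{0}^{\,j}(m,r,m-h)\,(-1)^{j},
\]
all coefficients being real.

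The crux, and the step I expect to be the genuine obstacle, is ruling out cancellation in this sum; this is precisely where the sign information of Lemma \ref{lem:2.16} becomes essential. Part (4) gives $\varepsilon_{0}^{\,j}(m,r,m-h)=(-1)^{j}\lvert\beta_{0,j,j}^{m,r,m-h}\rvert$, so that $\varepsilon_{0}^{\,j}(m,r,m-h)(-1)^{j}=\lvert\beta_{0,j,j}^{m,r,m-h}\rvert>0$; and part (3), applicable throughout the range $0\le j\le m-h$ because then $n-h\le r-j\le r$, gives $\varepsilon_{r-j}^{\,n}(m,n,h)=(-1)^{h}\lvert\beta_{r-j,h,n}^{m,n,h}\rvert$, a constant sign. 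Thus every surviving term carries the same sign and
\[
\langle\mathrm{Vec}(T_{n})\mid w\rangle=(-1)^{h}\sum_{j=0}^{m-h}\lvert\beta_{r-j,h,n}^{m,n,h}\rvert\,\lvert\beta_{0,j,j}^{m,r,m-h}\rvert\neq 0,
\]
since at least the $j=0$ term is strictly positive. This forces $\langle w\mid C(\Phi_{m,n,h})\,w\rangle>0$, proving $C(\Phi_{m,n,h})\eta_{m,r,m-h}\neq 0$.

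The only delicate but routine points are keeping track of the conjugate $\overline{P}_{r}$-inner product and of the $J_{r}$-twist (so that the $(-1)^{j}$ factors land correctly), and verifying that the two summation ranges actually match. The real content is the \emph{non-cancellation}: without the matched signs supplied by Lemma \ref{lem:2.16}, the naive ``test one vector'' strategy could accidentally return zero, so those sign identities are exactly what guarantee success.
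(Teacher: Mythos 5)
Your proof is correct, and it rests on the same two pillars as the paper's: the same witness vector $w=\eta_{m,r,m-h}(f_{0}^{n})$ and the same sign identities from Lemma \ref{lem:2.16}(3),(4); indeed the scalar you must show is nonzero, $\sum_{j=0}^{m-h}(-1)^{j}\varepsilon_{0}^{j}{\scriptstyle(m,r,m-h)}\,\varepsilon_{r-j}^{n}{\scriptstyle(m,n,h)}$, is exactly the quantity $\sum_{t\in B(0)}\lambda_{t}\varepsilon_{r-t}^{n}$ that appears at the end of the paper's argument. Where you genuinely diverge is the reduction to that scalar. The paper expands $C(\Phi_{m,n,h})$ in matrix units via Corollary \ref{cor:3.3}, applies it to $w$ as an operator, isolates the surviving terms (which force $j=n$ and $i_{2}=r-t$), and then argues by contradiction: if $C(\Phi_{m,n,h})w=0$, linear independence of the $f_{l}^{m}\otimes f_{i}^{r}$ gives $\varepsilon_{i_{1}}^{n}\sum_{t}\lambda_{t}\varepsilon_{r-t}^{n}=0$ for all $n-h\leq i_{1}\leq r$, and Lemma \ref{lem:2.16}(3) is used twice, once to cancel $\varepsilon_{i_{1}}^{n}$ and once for the constant sign. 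You instead use the Kraus form $C(\Phi_{m,n,h})=\sum_{j}\mathrm{Vec}(T_{j})\mathrm{Vec}(T_{j})^{*}$ of Proposition \ref{pro:1.9} together with positive semidefiniteness, so that $\left\langle w\left|C(\Phi_{m,n,h})w\right.\right\rangle \geq\left|\left\langle \mathrm{Vec}(T_{n})\left|w\right.\right\rangle \right|^{2}$ and a single overlap computation suffices. This buys a leaner argument: no double sum over $i_{1},i_{2}$, no linear-independence step, no contradiction, and only one invocation of Lemma \ref{lem:2.16}(3); the only extra work is matching the index sets $B(0)$ (for parameters $(m,r,m-h)$) and $R(n)$, which you verify correctly, both being $\{0,\dots,m-h\}$. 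Both routes are sound; yours makes the role of positivity of the Choi matrix explicit, which the paper's operator-level computation never needs to invoke.
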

\begin{proof}
$\,$

Let $\{f_{{\scriptscriptstyle s}}^{{\scriptscriptstyle k}}:0\leq s\leq k\}$
denote the canonical basis of $P_{{\scriptscriptstyle k}}$ where
$k\in\{r,,n,m\}$. To show that $C(\Phi_{m,n,h})\eta_{{\scriptscriptstyle m,r,m-h}}$
is non zero, it is enough to show that $C(\Phi_{m,n,h})\left(\eta_{{\scriptscriptstyle m,r,m-h}}(f_{{\scriptscriptstyle 0}}^{{\scriptscriptstyle n}})\right)\neq0$.
By Corollary \ref{cor:3.3}, we have \[
C(\Phi_{m,n,h})=\overset{{\scriptscriptstyle r}}{\underset{{\scriptscriptstyle i_{1},i_{2=0}}}{\sum}}\Phi_{m,n,h}(f_{i_{1}}^{{\scriptscriptstyle r}}f_{i_{2}}^{{\scriptscriptstyle r}^{*}})\otimes f_{i_{1}}^{{\scriptscriptstyle r}}f_{i_{2}}^{{\scriptscriptstyle r}^{*}}=\overset{{\scriptscriptstyle r}}{\underset{{\scriptscriptstyle i_{1},i_{2=0}}}{\sum}}\:\underset{{\scriptstyle j\in B(i_{1})\cap B(i_{2})}}{\sum}\varepsilon_{i_{1}}^{j}\varepsilon_{i_{2}}^{j}f_{l_{i_{1}j}}^{{\scriptscriptstyle m}}f_{l_{i_{2}j}}^{{\scriptscriptstyle m}^{*}}\otimes f_{i_{1}}^{{\scriptscriptstyle r}}f_{i_{2}}^{{\scriptscriptstyle r}^{*}}.\]
By Lemma \ref{lem:2.12 (Ipm tensor Jn)}, and Corollary \ref{lem:2.16},
we have

$\,$

$\eta_{{\scriptscriptstyle m,r,m-h}}(f_{0}^{{\scriptscriptstyle n}})=\left(I_{{\scriptscriptstyle P_{m}}}\otimes J_{{\scriptscriptstyle r}}\right)\alpha_{m,r,m-h}(f_{0}^{{\scriptscriptstyle n}})=\left(I_{{\scriptscriptstyle P_{m}}}\otimes J_{r}\right)\left(\underset{{\scriptscriptstyle t\in B(0)}}{\sum}\varepsilon_{0}^{t}{\scriptscriptstyle (m,r,m-h)}f_{l_{0t}}^{{\scriptscriptstyle m}}\otimes f_{t}^{{\scriptscriptstyle r}}\right)$

$\qquad\qquad\quad\;=\underset{{\scriptscriptstyle t\in B(0)}}{\sum}{\scriptscriptstyle \left(-1\right)^{t}}\varepsilon_{0}^{t}{\scriptscriptstyle (m,r,m-h)}f_{l_{0t}}^{{\scriptscriptstyle m}}\otimes f_{{\scriptscriptstyle r-t}}^{{\scriptscriptstyle r}}$ 

$\qquad\qquad\quad\;=\underset{{\scriptscriptstyle t\in B(0)}}{\sum}\lambda_{t}f_{{\scriptscriptstyle m-(t+h)}}^{{\scriptscriptstyle m}}\otimes f_{{\scriptscriptstyle r-t}}^{r}$
$\quad$where $\lambda_{t}>0$ .

Hence, we get: \[
C(\Phi_{m,n,h})\left(\eta_{{\scriptscriptstyle m,r,m-h}}(f_{{\scriptscriptstyle 0}}^{{\scriptscriptstyle n}})\right)=\overset{{\scriptscriptstyle r}}{\underset{{\scriptscriptstyle i_{1},i_{2=0}}}{\sum}\;}\underset{{\scriptstyle {\scriptscriptstyle j\in B(i_{1})\cap B(i_{2})}}}{\sum}\varepsilon_{i_{1}}^{j}\varepsilon_{i_{2}}^{j}f_{{\scriptscriptstyle l_{i_{1}j}}}^{{\scriptscriptstyle m}}f_{{\scriptscriptstyle l_{i_{2}j}}}^{{\scriptscriptstyle m}^{*}}\otimes f_{{\scriptscriptstyle i_{1}}}^{{\scriptscriptstyle r}}f_{{\scriptscriptstyle i_{2}}}^{{\scriptscriptstyle r}^{*}}\left(\underset{{\scriptscriptstyle t\in B(0)}}{\sum}\lambda_{t}f_{{\scriptscriptstyle m-(t+h)}}^{{\scriptscriptstyle m}}\otimes f_{{\scriptscriptstyle r-t}}^{{\scriptscriptstyle r}}\right)\]
\[
\qquad\qquad\qquad\qquad\qquad\quad=\overset{{\scriptscriptstyle r}}{\underset{{\scriptscriptstyle i_{1},i_{2=0}}}{\sum}\;}\underset{{\scriptstyle {\scriptscriptstyle j\in B(i_{1})\cap B(i_{2})}}}{\sum}\underset{{\scriptscriptstyle \, t\in B(0)}}{\;\sum}\lambda_{t}\varepsilon_{i_{1}}^{j}\varepsilon_{i_{2}}^{j}\left(f_{l_{i_{1}j}}^{{\scriptscriptstyle m}}f_{l_{i_{2}j}}^{{\scriptscriptstyle m}^{*}}\right)f_{{\scriptscriptstyle m-(t+h)}}^{{\scriptscriptstyle m}}\otimes\left(f_{i_{1}}^{{\scriptscriptstyle r}}f_{i_{2}}^{{\scriptscriptstyle r}^{*}}\right)f_{{\scriptscriptstyle r-t}}^{{\scriptscriptstyle r}}\]

But, for $0\leq i_{1},i_{2}\leq r$, $j\in B(i_{1})\cap B(i_{2})$
and $t\in B(0)$, we have:

\[
f_{{\scriptscriptstyle l_{i_{1}j}}}^{{\scriptscriptstyle m}}f_{{\scriptscriptstyle l_{i_{2}j}}}^{{\scriptscriptstyle m}^{*}}f_{{\scriptscriptstyle m-(t+h)}}^{{\scriptscriptstyle m}}\otimes f_{i_{1}}^{{\scriptscriptstyle r}}f_{i_{2}}^{{\scriptscriptstyle r}^{*}}f_{{\scriptscriptstyle r-t}}^{{\scriptscriptstyle r}}=\left\{ \begin{array}{cccc}
f_{l_{i_{1}n}}^{{\scriptscriptstyle m}}\otimes f_{i_{1}}^{{\scriptscriptstyle r}} &  & if & i_{2}=r-t,j=n\\
\\0 &  &  & otherwise\end{array}\right..\]

As for $t\in B(0)$, $n\in B(i_{1})\cap B(r-t)$ if and only if $n-h\leq i_{1}\leq r$,
we have 

\[
C(\Phi_{m,n,h})\left(\eta_{{\scriptscriptstyle m,r,m-h}}(f_{{\scriptscriptstyle 0}}^{{\scriptscriptstyle n}})\right)=\overset{{\scriptscriptstyle r}}{\underset{{\scriptscriptstyle i_{1}=n-h}}{\sum}}\:\:\underset{{\scriptscriptstyle t\in B(0)}}{\sum}\lambda_{t}\varepsilon_{i_{1}}^{{\scriptscriptstyle n}}\varepsilon_{{\scriptscriptstyle r-t}}^{{\scriptscriptstyle n}}f_{{\scriptscriptstyle i_{1}-n+h}}^{{\scriptscriptstyle m}}\otimes f_{{\scriptscriptstyle i_{1}}}^{{\scriptscriptstyle r}}\]
Assume that\[
0=C(\Phi_{m,n,h})\left(\eta_{{\scriptscriptstyle m,r,m-h}}(f_{{\scriptscriptstyle 0}}^{{\scriptscriptstyle n}})\right)=\overset{{\scriptscriptstyle r}}{\underset{{\scriptscriptstyle i_{1}=n-h}}{\sum}}\:\:\underset{{\scriptscriptstyle t\in B(0)}}{\sum}\lambda_{t}\varepsilon_{i_{1}}^{{\scriptscriptstyle n}}\varepsilon_{{\scriptscriptstyle r-t}}^{{\scriptscriptstyle n}}\, f_{{\scriptscriptstyle i_{1}-n+h}}^{{\scriptscriptstyle m}}\otimes f_{{\scriptscriptstyle i_{1}}}^{{\scriptscriptstyle r}}\]
then by the linearly independence of $\{f_{{\scriptscriptstyle l}}^{{\scriptscriptstyle m}}\otimes f_{{\scriptscriptstyle i}}^{{\scriptscriptstyle r}}:0\leq l\leq m,0\leq i\leq r\}$,
we would have that $\underset{{\scriptscriptstyle t\in B(0)}}{\sum}\lambda_{t}\varepsilon_{i_{1}}^{{\scriptscriptstyle n}}\varepsilon_{{\scriptscriptstyle r-t}}^{{\scriptscriptstyle n}}=0$
for $n-h\leq i_{1}\leq r$. 

As by Corollary \ref{lem:2.16}, $\varepsilon_{i_{1}}^{{\scriptscriptstyle n}}\neq0$
for $n-h\leq i_{1}\leq r$, we would have $\underset{{\scriptscriptstyle t\in B(0)}}{\sum}\lambda_{t}\varepsilon_{{\scriptscriptstyle r-t}}^{{\scriptscriptstyle n}}=0$.
But by Corollary \ref{lem:2.16}, we have $\varepsilon_{{\scriptscriptstyle r-t}}^{{\scriptscriptstyle n}}={\scriptstyle (-1)^{h}}\theta_{t}$
with $\theta_{t}>0$ holds for $t\in B(0)=\{t:0\leq t\leq m-h\}$.
Therefore, $\underset{{\scriptscriptstyle t\in B(0)}}{\sum}\lambda_{t}\varepsilon_{{\scriptscriptstyle r-t}}^{{\scriptscriptstyle n}}={\scriptstyle (-1)^{h}}\overset{{\scriptscriptstyle m-h}}{\underset{{\scriptscriptstyle t=0}}{\sum}}\lambda_{t}\theta_{t}\neq0$
which is a contradiction.\end{proof}
\begin{prop}
\label{pro:3.13} Let $m,n,h\in\mathbb{N}$ with $0\leq h\leq\min\{m,n\}$,
and $r=m+n-2h$. Then the Choi matrix of the EPOSIC channel $\Phi_{m,n,h}$
is equal to $\frac{{\scriptstyle r+1}}{{\scriptstyle n+1}}q_{{\scriptscriptstyle m,r,m-h}}$
where $q_{{\scriptscriptstyle m,r,m-h}}$ is the projection of $P_{{\scriptscriptstyle m}}\otimes\overline{P}_{{\scriptscriptstyle r}}$
onto the $SU(2)$-invariant subspace of dimension $n+1$.

\newpage{}\end{prop}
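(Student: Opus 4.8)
The plan is to show that among the mutually orthogonal projections $q_{m,r,l}$ only the one indexed by $l=m-h$ can appear in $C(\Phi_{m,n,h})$, and then to pin down its coefficient by a single trace computation.

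First I would record the general form of the Choi matrix. By Corollary \ref{cor:3.10}, together with the fact that $(\rho_m\otimes\check{\rho}_r(SU(2)))'$ is abelian and generated by the projections $q_{m,r,l}$ (Proposition \ref{pro:3.9}), the Choi matrix can be written as $C(\Phi_{m,n,h})=\sum_{l=0}^{\min\{m,r\}}c_l\,q_{m,r,l}$ for scalars $c_l$; since $\Phi_{m,n,h}$ is completely positive, $C(\Phi_{m,n,h})\geq 0$ and hence each $c_l\geq 0$. Because the subspaces $V_{m+r-2l}$ are mutually orthogonal (Corollary \ref{cor:2.13}) and $\eta_{m,r,m-h}$ has image $V_{m+r-2(m-h)}=V_n$, we have $q_{m,r,l}\,\eta_{m,r,m-h}=\delta_{l,m-h}\,\eta_{m,r,m-h}$, so that $C(\Phi_{m,n,h})\,\eta_{m,r,m-h}=c_{m-h}\,\eta_{m,r,m-h}$. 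Lemma \ref{lem:3.12} asserts the left-hand side is nonzero, whence $c_{m-h}>0$.

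Next I would bound the rank. The channel $\Phi_{m,n,h}$ has the Stinespring representation $(P_n,\alpha_{m,n,h})$, equivalently the $n+1$ EPOSIC Kraus operators $\{T_j:0\leq j\leq n\}$; by Proposition \ref{pro:1.9} and Remark \ref{rem:1.10} this forces $\mathrm{rank}\,C(\Phi_{m,n,h})\leq n+1$. On the other hand, since the $V_{m+r-2l}$ are orthogonal, the operator $\sum_l c_l q_{m,r,l}$ is block diagonal and its rank equals $\sum_{l:\,c_l>0}\dim V_{m+r-2l}$; as $\dim V_n = m+r-2(m-h)+1 = n+1$ and $c_{m-h}>0$, the single term $l=m-h$ already contributes $n+1$ to the rank. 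Comparing with the upper bound $n+1$ forces every other $c_l$ to vanish, so $C(\Phi_{m,n,h})=c_{m-h}\,q_{m,r,m-h}$, the projection onto the irreducible component $V_n$ of dimension $n+1$.

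Finally I would evaluate $c_{m-h}$ by taking the full trace. On one side, $Tr\,C(\Phi_{m,n,h})=c_{m-h}\dim V_n=c_{m-h}(n+1)$. On the other side, trace preservation $\sum_{j=0}^{n}T_j^*T_j=I_{P_r}$ (equivalently the Choi condition $Tr_{P_m}C(\Phi_{m,n,h})=I_{\overline{P}_r}$ of Proposition \ref{pro:1.8}, using $C(\Phi_{m,n,h})=\sum_j \mathrm{Vec}(T_j)\mathrm{Vec}(T_j)^*$ from Proposition \ref{pro:1.9}) gives $Tr\,C(\Phi_{m,n,h})=Tr\,I_{P_r}=r+1$. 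Equating the two expressions yields $c_{m-h}=\tfrac{r+1}{n+1}$, as claimed. The only genuinely delicate input is the nonvanishing $c_{m-h}>0$, which is precisely the content of Lemma \ref{lem:3.12}; granting that, the statement reduces to a rank count followed by this one-line normalization, so I expect no further obstacle.
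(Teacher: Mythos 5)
Your proposal is correct and follows essentially the same route as the paper's own proof: expand $C(\Phi_{m,n,h})$ in the orthogonal projections $q_{m,r,l}$ via Corollary \ref{cor:3.10}, use Lemma \ref{lem:3.12} to get $c_{m-h}\neq 0$, invoke the Kraus/rank bound of Remark \ref{rem:1.10} to kill all other coefficients, and normalize by taking the trace. Your write-up merely makes explicit two details the paper leaves implicit (why Lemma \ref{lem:3.12} forces the coefficient of $q_{m,r,m-h}$ to be nonzero, and why $tr\,C(\Phi_{m,n,h})=r+1$ follows from trace preservation), which is a welcome clarification but not a different argument.
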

\begin{proof}
$\,$

By Corollary \ref{cor:3.10}, we have $C(\Phi_{m,n,h})=\overset{{\scriptscriptstyle \min\{m,r\}}}{\underset{{\scriptscriptstyle l=0}}{\sum}}\lambda_{l}q_{m,r,l}$
where $q_{m,r,l}=\eta_{{\scriptscriptstyle m,r,l}}\eta_{{\scriptscriptstyle m,r,l}}^{*}$,
$0\leq l\leq\min\{m,r\}$ are the mutually orthogonal projections
onto $V_{{\scriptscriptstyle m+r-2l}}$. Consequently, $rank(C(\Phi_{m,n,h}))=\overset{{\scriptscriptstyle \min\{m,r\}}}{\underset{{\scriptscriptstyle l=0,\lambda_{l}\neq0}}{\sum}}rank(q_{m,r,l})$.

By Lemma \ref{lem:3.12}, we have $\lambda_{{\scriptscriptstyle m-h}}\neq0$,
hence $rank(C(\Phi_{m,n,h}))\geq rank(q_{m,r,m-h})=dim(V_{n})=n+1$.
As by Remark \ref{rem:1.10}, $rank(C(\Phi_{m,n,h}))\leq n+1$, we
obtain that $C(\Phi_{m,n,h})=\lambda_{{\scriptscriptstyle m-h}}q_{{\scriptscriptstyle m,r,m-h}}$.
Taking the trace of both side of the equation, we get\[
r+1=tr(C(\Phi_{m,n,h}))=\lambda_{{\scriptscriptstyle m-h}}tr(q_{{\scriptscriptstyle m,r,m-h}})=\lambda_{{\scriptscriptstyle m-h}}n+1\]
and $\lambda_{{\scriptscriptstyle m-h}}=\frac{r+1}{n+1}$. 
\end{proof}

\begin{rem}
\label{rem:3.14} Note that the above proposition establishes a one
to one corresponding between $EC(r,m)$ and the projections on the
$SU(2)$-subspaces of $P_{{\scriptscriptstyle m}}\otimes\overline{P}_{{\scriptscriptstyle r}}$
($\Phi_{{\scriptscriptstyle m,m+r-2l,m-l}}\longleftrightarrow\frac{r+1}{m+r-2l+1}q_{m,r,l}$).\end{rem}
\begin{cor}
For $r,m\in\mathbb{N}$, there are exactly $\min\{r,m\}+1$ elements
in $EC(r,m)$.
\end{cor}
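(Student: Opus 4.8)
The plan is to deduce the count from the uniqueness of the Choi representation, by showing that the $\min\{r,m\}+1$ channels exhibited in Proposition \ref{pro:2.21} are pairwise distinct. By that proposition, $EC(r,m)=\{\Phi_{m,r+m-2l,m-l}:0\le l\le\min\{r,m\}\}$, so the index $l$ already ranges over $\min\{r,m\}+1$ values; what remains is to verify that the assignment $l\mapsto\Phi_{m,r+m-2l,m-l}$ is injective, i.e.\ that no two of the listed channels coincide as maps.

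First I would record the Choi matrices. For each $l$ with $0\le l\le\min\{r,m\}$, set $n=r+m-2l$ and $h=m-l$, so that $m-h=l$. Then by Proposition \ref{pro:3.13} (equivalently, by the correspondence in Remark \ref{rem:3.14}), the Choi matrix of $\Phi_{m,r+m-2l,m-l}$ equals $\frac{r+1}{r+m-2l+1}\,q_{m,r,l}$, a strictly positive scalar multiple of the orthogonal projection $q_{m,r,l}=\eta_{m,r,l}\eta_{m,r,l}^{*}$ onto the $SU(2)$-irreducible subspace $V_{m+r-2l}$ of $P_{m}\otimes\overline{P}_{r}$.

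The key step is then to observe that these Choi matrices are distinct for distinct $l$. By Corollary \ref{cor:2.13}, the subspaces $V_{m+r-2l}$, $0\le l\le\min\{m,r\}$, are mutually orthogonal and nonzero, and their dimensions $m+r-2l+1$ are pairwise distinct. Since the scalar $\frac{r+1}{r+m-2l+1}$ is nonzero, the range of $\frac{r+1}{r+m-2l+1}\,q_{m,r,l}$ is exactly $V_{m+r-2l}$; hence distinct values of $l$ yield Choi matrices with distinct ranges, so the matrices themselves are distinct. Because the Choi representation uniquely determines a channel (Proposition \ref{pro:1.8}(3)), the channels $\Phi_{m,r+m-2l,m-l}$ are pairwise distinct, and therefore $EC(r,m)$ has exactly $\min\{r,m\}+1$ elements.

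I do not anticipate a real obstacle here, as the statement is essentially a bookkeeping consequence of Proposition \ref{pro:3.13}: the single point deserving care is confirming that the correspondence of Remark \ref{rem:3.14} is injective, and this follows from the mutual orthogonality (and distinct dimensions) of the $V_{m+r-2l}$ rather than from any arithmetic involving the scaling constants $\frac{r+1}{r+m-2l+1}$.
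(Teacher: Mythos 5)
Your proof is correct and follows essentially the same route as the paper: the paper presents this corollary as an immediate consequence of Proposition \ref{pro:3.13} and the one-to-one correspondence $\Phi_{m,m+r-2l,m-l}\longleftrightarrow\frac{r+1}{m+r-2l+1}q_{m,r,l}$ recorded in Remark \ref{rem:3.14}, and your argument simply makes explicit why that correspondence is injective (distinct $l$ give mutually orthogonal, nonzero projections, hence distinct Choi matrices, and the Choi map is injective). The only cosmetic point is that the injectivity of $C$ is best cited from Example \ref{exa:1.5}(5) (where $C$ is a unitary map) rather than Proposition \ref{pro:1.8}(3), but this does not affect the substance.
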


Recall that for $r,m\in\mathbb{N}$, the set $End(End(P_{{\scriptscriptstyle r}}),End(P_{{\scriptscriptstyle m}}))^{{\scriptscriptstyle SU(2)}}$
denotes the vector space of $SU(2)$-equivariant maps $\Phi$ from
$End(P_{{\scriptscriptstyle r}})$ to $End(P_{{\scriptscriptstyle m}})$.
The following corollaries are direct consequence of Proposition \ref{pro:3.13}.
\begin{cor}
\label{cor:3.16} Let $r,m\in\mathbb{N}$, let $\rho_{{\scriptscriptstyle r}}$
and $\rho_{{\scriptscriptstyle m}}$ be the irreducible representations
of $SU(2)$ on $P_{{\scriptscriptstyle r}}$ and $P_{{\scriptscriptstyle m}}$.
Then $EC(r,m)$ is a spanning set of $End(End(P_{{\scriptscriptstyle r}}),End(P_{{\scriptscriptstyle m}}))^{{\scriptscriptstyle SU(2)}}$.\end{cor}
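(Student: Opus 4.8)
The plan is to transport the statement across the Choi--Jamiolkowski isomorphism $C$ and reduce it to the structure of the commutant $\left(\rho_{m}\otimes\check{\rho}_{r}(SU(2))\right)^{\prime}$ determined in Proposition \ref{pro:3.9}.

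First I would recall from Example \ref{exa:1.5}(5) that \[ C:End(End(P_{r}),End(P_{m}))\longrightarrow End(P_{m}\otimes\overline{P}_{r}) \] is a unitary $SU(2)$-equivariant isomorphism. By Proposition \ref{pro:1.15}(1), a linear map $\Phi$ belongs to $End(End(P_{r}),End(P_{m}))^{{\scriptscriptstyle SU(2)}}$ exactly when $C(\Phi)\in\left(\rho_{m}\otimes\check{\rho}_{r}(SU(2))\right)^{\prime}$. Hence $C$ restricts to a \emph{linear isomorphism} from $End(End(P_{r}),End(P_{m}))^{{\scriptscriptstyle SU(2)}}$ onto the commutant. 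Since a linear isomorphism sends spanning sets to spanning sets, it suffices to prove that $\{C(\Phi):\Phi\in EC(r,m)\}$ spans the commutant.

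Next I would invoke Proposition \ref{pro:3.9}: the commutant is an abelian algebra generated by the mutually orthogonal projections $q_{{\scriptscriptstyle m,r,l}}=\eta_{{\scriptscriptstyle m,r,l}}\eta_{{\scriptscriptstyle m,r,l}}^{*}$, $0\leq l\leq\min\{m,r\}$, onto the irreducible summands $V_{{\scriptscriptstyle m+r-2l}}$ of $P_{m}\otimes\overline{P}_{r}$. Because these are orthogonal idempotents with $\overset{{\scriptscriptstyle \min\{m,r\}}}{\underset{{\scriptscriptstyle l=0}}{\sum}}q_{{\scriptscriptstyle m,r,l}}=I$, every product of two distinct ones vanishes, so the algebra they generate coincides with their linear span. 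Thus $\{q_{{\scriptscriptstyle m,r,l}}:0\leq l\leq\min\{m,r\}\}$ is already a spanning set (indeed a basis) of the commutant as a vector space.

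Finally I would match the EPOSIC channels to these projections. By Proposition \ref{pro:2.21}, $EC(r,m)=\{\Phi_{{\scriptscriptstyle m,r+m-2l,m-l}}:0\leq l\leq\min\{r,m\}\}$, and for $\Phi_{m,n,h}$ with $n=r+m-2l$ and $h=m-l$ one has $m-h=l$. Proposition \ref{pro:3.13} then gives $C(\Phi_{{\scriptscriptstyle m,r+m-2l,m-l}})=\frac{r+1}{n+1}\,q_{{\scriptscriptstyle m,r,l}}$, a nonzero scalar multiple of $q_{{\scriptscriptstyle m,r,l}}$. As $l$ ranges over $0\leq l\leq\min\{r,m\}$ these exhaust all the generating projections up to nonzero scalars, so $\{C(\Phi):\Phi\in EC(r,m)\}$ spans the commutant; pulling back through $C$ gives the corollary. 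The content here is almost entirely an assembly of earlier results, so I do not expect a serious obstacle: the only point requiring care is the index bookkeeping showing that \emph{every} generating projection $q_{{\scriptscriptstyle m,r,l}}$ is realized, up to a nonzero scalar, by an honest EPOSIC channel, which is precisely the identification $l=m-h$ together with Propositions \ref{pro:2.21} and \ref{pro:3.13}. Everything else is formal, following from the $SU(2)$-equivariance of $C$ and the abelian structure of the commutant.
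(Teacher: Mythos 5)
Your proposal is correct and follows essentially the same route as the paper: Proposition \ref{pro:3.9} to identify the commutant with the span of the projections $q_{{\scriptscriptstyle m,r,l}}$, Proposition \ref{pro:3.13} (with the index matching $l=m-h$ from Proposition \ref{pro:2.21}) to realize each projection as a nonzero multiple of a Choi matrix $C(\Phi_{{\scriptscriptstyle m,m+r-2l,m-l}})$, and Proposition \ref{pro:1.15} to pull the spanning statement back through the Choi--Jamiolkowski map. Your write-up merely makes explicit two steps the paper leaves implicit, namely that the algebra generated by mutually orthogonal projections equals their linear span and that $C$ restricts to a linear isomorphism onto the commutant.
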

\begin{proof}
$\,$

By Proposition \ref{pro:3.9}, and Proposition \ref{pro:3.13}, the
commutant \[
\left(\rho_{{\scriptscriptstyle m}}\otimes\check{\rho}_{{\scriptscriptstyle r}}\left(SU(2)\right)\right)^{\prime}=\left\{ \underset{{\scriptstyle {\scriptscriptstyle l=0}}}{\overset{{\scriptstyle {\scriptscriptstyle \min\{r,m}\}}}{\sum}}\lambda_{l}q_{m,r,l}:\lambda_{l}\in\mathbb{C}\right\} \]
\[
=\left\{ \underset{{\scriptstyle {\scriptscriptstyle l=0}}}{\overset{{\scriptstyle {\scriptscriptstyle \min\{r,m}\}}}{\sum}}\mu_{l}C(\Phi_{{\scriptscriptstyle m,m+r-2l,m-l}}):\mu_{l}\in\mathbb{C}\right\} \]
\[
=Span\left\{ C\left(EC(r,m)\right)\right\} \]
The result now follows from Proposition \ref{pro:1.15}. 
\end{proof}

\begin{cor}
$\,$\label{cor:3.17}Let $r,m$$\in\mathbb{N}$. Then $\underset{{\scriptscriptstyle SU(2)}}{QC}(\rho_{{\scriptscriptstyle r}},\rho_{{\scriptscriptstyle m}})$
is the convex hull of $EC(r,m)$.\end{cor}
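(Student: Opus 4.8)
The plan is to prove the two inclusions separately. For the inclusion of the convex hull of $EC(r,m)$ into $\underset{{\scriptscriptstyle SU(2)}}{QC}(\rho_{r},\rho_{m})$, I would simply combine Proposition \ref{pro:1.13}, which asserts that $\underset{{\scriptscriptstyle SU(2)}}{QC}(\rho_{r},\rho_{m})$ is convex, with Proposition \ref{pro:2.18}, which says that each EPOSIC channel $\Phi_{m,n,h}$ is itself an $SU(2)$-irreducibly covariant channel. Since every $\Phi_{m,m+r-2l,m-l}$ lies in the convex set $\underset{{\scriptscriptstyle SU(2)}}{QC}(\rho_{r},\rho_{m})$, so does any convex combination of them, and hence the convex hull of $EC(r,m)$ is contained in $\underset{{\scriptscriptstyle SU(2)}}{QC}(\rho_{r},\rho_{m})$.

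The substance is in the reverse inclusion. Let $\Phi\in\underset{{\scriptscriptstyle SU(2)}}{QC}(\rho_{r},\rho_{m})$. Because $\rho_{r}$ is irreducible, Proposition \ref{pro:1.15}(2) applies and gives $\tfrac{1}{r+1}C(\Phi)\in\left(\rho_{m}\otimes\check{\rho}_{r}(SU(2))\right)^{\prime}\cap D(P_{m}\otimes\overline{P}_{r})$. By Proposition \ref{pro:3.9}, the commutant $\left(\rho_{m}\otimes\check{\rho}_{r}(SU(2))\right)^{\prime}$ is the abelian algebra spanned by the mutually orthogonal projections $q_{m,r,l}$, $0\leq l\leq\min\{r,m\}$, so I may write $C(\Phi)=\sum_{l=0}^{\min\{r,m\}}\lambda_{l}\,q_{m,r,l}$ for some scalars $\lambda_{l}$. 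The membership $\tfrac{1}{r+1}C(\Phi)\in D(P_{m}\otimes\overline{P}_{r})$ forces $C(\Phi)$ to be positive and to have trace $r+1$. Since the $q_{m,r,l}$ project onto the mutually orthogonal subspaces $V_{m+r-2l}$, positivity is equivalent to $\lambda_{l}\geq0$ for every $l$, while the trace condition reads $\sum_{l}\lambda_{l}\,(m+r-2l+1)=r+1$.

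Next I would convert the projections into EPOSIC Choi matrices via the correspondence recorded in Remark \ref{rem:3.14}, namely $q_{m,r,l}=\tfrac{m+r-2l+1}{r+1}\,C(\Phi_{m,m+r-2l,m-l})$. Substituting this and setting $\mu_{l}=\lambda_{l}\,\tfrac{m+r-2l+1}{r+1}$ yields $C(\Phi)=\sum_{l}\mu_{l}\,C(\Phi_{m,m+r-2l,m-l})$. The weights $\mu_{l}$ are nonnegative, being products of nonnegative factors, and summing them gives $\sum_{l}\mu_{l}=\tfrac{1}{r+1}\sum_{l}\lambda_{l}(m+r-2l+1)=1$ by the trace identity of the previous paragraph. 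Finally, the Choi--Jamiolkowski map $C$ is a linear isomorphism (Example \ref{exa:1.5}(5)), so applying $C^{-1}$ produces $\Phi=\sum_{l}\mu_{l}\,\Phi_{m,m+r-2l,m-l}$, exhibiting $\Phi$ as a convex combination of EPOSIC channels and completing the reverse inclusion. Together with the first paragraph this establishes the claim.

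I do not expect a genuinely hard step to remain, since the structural work has already been carried out: Proposition \ref{pro:3.9} identifies the commutant as abelian with the $q_{m,r,l}$ as its minimal projections, Corollary \ref{cor:3.16} shows these generate the full equivariant space, and Proposition \ref{pro:3.13} pins down the exact normalization $\tfrac{r+1}{n+1}$ of the EPOSIC Choi matrices. The one point requiring care is the bookkeeping of these constants, so that the two defining conditions of a density operator --- positivity together with unit trace after division by $r+1$ --- translate precisely into the two defining conditions of a convex combination, namely nonnegative weights summing to one. The factor $\tfrac{r+1}{m+r-2l+1}$ coming from Proposition \ref{pro:3.13} is exactly what makes the trace constraint collapse to $\sum_{l}\mu_{l}=1$, and this alignment is the real content of the argument.
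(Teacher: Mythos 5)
Your proof is correct and takes essentially the same route as the paper: both arguments decompose $C(\Phi)$ over the mutually orthogonal projections $q_{m,r,l}$, use positivity of the Choi matrix to force the coefficients to be nonnegative, and use a trace condition together with the normalization $C(\Phi_{m,m+r-2l,m-l})=\tfrac{r+1}{m+r-2l+1}q_{m,r,l}$ from Proposition \ref{pro:3.13} to make the weights sum to one. The only cosmetic differences are that you obtain the sum-to-one condition from $tr(C(\Phi))=r+1$ via the density-operator characterization in Proposition \ref{pro:1.15}, whereas the paper applies trace preservation to a state $\varrho\in D(P_{r})$, and that you spell out the easy inclusion of the convex hull of $EC(r,m)$ into $\underset{{\scriptscriptstyle SU(2)}}{QC}(\rho_{{\scriptscriptstyle r}},\rho_{{\scriptscriptstyle m}})$, which the paper leaves implicit.
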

\begin{proof}
$\,$

Let $\Phi\in\underset{{\scriptscriptstyle SU(2)}}{QC}(\rho_{{\scriptscriptstyle r}},\rho_{{\scriptscriptstyle m}})$.
Since $\Phi$ is $SU(2)$-equivariant map then by Corollary \ref{cor:3.16},
we have \begin{equation}
\Phi=\overset{{\scriptscriptstyle \min\{r,m\}}}{\underset{{\scriptstyle {\scriptscriptstyle l=0}}}{\sum}}\lambda_{l}\Phi_{{\scriptscriptstyle m,m+r-2l,m-l}}\qquad\lambda_{l}\in\mathbb{C}\label{eq:phisum}\end{equation}
 It remains to show that $0\leq\lambda_{l}$ and $\overset{{\scriptscriptstyle \min\{r,m\}}}{\underset{{\scriptstyle {\scriptscriptstyle l=0}}}{\sum}}\lambda_{l}=1$.
By Remark \ref{rem:3.14}, we have $C(\Phi)=\overset{{\scriptscriptstyle \min\{r,m\}}}{\underset{{\scriptstyle {\scriptscriptstyle l=0}}}{\sum}}\frac{r+1}{m+r-2l+1}\lambda_{l}q_{m,r,l}$
where $q_{m,r,l}\:,0\leq l\leq\min\{r,m\}$ are mutually orthogonal
projections of $P_{{\scriptscriptstyle m}}\otimes\overline{P}_{{\scriptscriptstyle r}}$.
By the orthogonality of $q_{m,r,l}$,s and the positivity of $C(\Phi)$
(see Proposition \ref{pro:1.8}), we have $\lambda_{l}\geq0$ for
$0\leq l\leq\min\{m,r\}$. Since both $\Phi$ and $\Phi_{{\scriptscriptstyle m,m+r-2l,m-l}}$
are trace preserving, choosing any state $\varrho\in D(P_{r})$, we
have $1=tr(\varrho)=tr(\Phi(\varrho))=\overset{{\scriptscriptstyle \min\{r,m\}}}{\underset{{\scriptstyle {\scriptscriptstyle l=0}}}{\sum}}\lambda_{l}tr(\Phi_{{\scriptscriptstyle m,m+r-2l,m-l}}(\varrho))=\overset{{\scriptscriptstyle \min\{r,m\}}}{\underset{{\scriptstyle {\scriptscriptstyle l=0}}}{\sum}}\lambda_{l}$.
\end{proof}

\section{the extreme points of $SU(2)$-irreducibly covariant channels \label{sec:4}}

In this section, we show that $EC(r,m)$, the set of all EPOSIC channels
from $End(P_{{\scriptscriptstyle r}})$ to $End(P_{{\scriptscriptstyle m}})$,
forms the set of the extreme points of $\underset{{\scriptscriptstyle SU(2)}}{QC}(\rho_{{\scriptscriptstyle r}},\rho_{{\scriptscriptstyle m}})$.
We also show that any completely positive $SU(2)$-equivariant map
$\Phi:End(P_{{\scriptscriptstyle r}})\longrightarrow End(P_{{\scriptscriptstyle m}})$
is a multiple of $SU(2)$-covariant channel. Recall that $EC(r,m)=\left\{ \Phi_{{\scriptscriptstyle m,r+m-2l,m-l}},\,0\leq l\leq\min\{r,m\}\right\} $.
\begin{prop}
$\,$\label{pro:4.1}For $r,m\in\mathbb{N}$, the set of extreme points
in $\underset{{\scriptstyle {\scriptscriptstyle SU(2)}}}{QC}(\rho_{{\scriptscriptstyle r}},\rho_{{\scriptscriptstyle m}})$
is $EC(r,m)$.

\newpage{}\end{prop}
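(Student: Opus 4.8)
The plan is to play off Corollary \ref{cor:3.17}, which identifies $\underset{SU(2)}{QC}(\rho_r,\rho_m)$ with the convex hull of the \emph{finite} set $EC(r,m)$, against Proposition \ref{pro:3.13}, which pins down the Choi matrix of each EPOSIC channel. Because the extreme points of the convex hull of a finite set are always a subset of its generators, Corollary \ref{cor:3.17} already gives one inclusion for free: every extreme point of $\underset{SU(2)}{QC}(\rho_r,\rho_m)$ lies in $EC(r,m)$. The substance of the proof is therefore the reverse inclusion, namely that every EPOSIC channel is in fact extreme.

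First I would pass to the Choi picture via the map $C$ of Example \ref{exa:1.5}(5), which is a linear isomorphism (indeed unitary) from $End(End(P_r),End(P_m))$ onto $End(P_m\otimes\overline{P}_r)$. By Proposition \ref{pro:3.13} we have $C(\Phi_{m,r+m-2l,m-l})=\frac{r+1}{m+r-2l+1}\,q_{m,r,l}$ for $0\le l\le\min\{r,m\}$, so each EPOSIC channel is carried to a nonzero scalar multiple of one of the projections $q_{m,r,l}$. By Proposition \ref{pro:3.9} these projections are mutually orthogonal, hence linearly independent in $End(P_m\otimes\overline{P}_r)$; equivalently, their scalar multiples above are affinely independent points.

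The key step is then a short linear-independence argument. Suppose some $\Phi_{m,r+m-2l_0,m-l_0}$ could be written as a convex combination $\sum_l\mu_l\,\Phi_{m,r+m-2l,m-l}$ with $\mu_l\ge0$ and $\sum_l\mu_l=1$. Applying the linear map $C$ and invoking Proposition \ref{pro:3.13} yields
\[
\frac{r+1}{m+r-2l_0+1}\,q_{m,r,l_0}=\sum_l\mu_l\,\frac{r+1}{m+r-2l+1}\,q_{m,r,l}.
\]
Since the $q_{m,r,l}$ are linearly independent, matching coefficients forces $\mu_l=0$ for $l\ne l_0$ and $\mu_{l_0}=1$, so the convex combination is trivial and $\Phi_{m,r+m-2l_0,m-l_0}$ is extreme. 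As $l_0$ is arbitrary, all of $EC(r,m)$ consists of extreme points, and combining with the inclusion from the first paragraph I conclude that the extreme points of $\underset{SU(2)}{QC}(\rho_r,\rho_m)$ are exactly $EC(r,m)$.

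I do not expect a serious obstacle here, since the hard structural work is already done: Proposition \ref{pro:3.13} says the Choi matrix of an EPOSIC channel is a single projection, and Corollary \ref{cor:3.17} gives the convex-hull description. The only point needing care is the linearity and injectivity of $C$, which is what guarantees that distinct EPOSIC channels map to distinct linearly independent projection-multiples and that the affine-independence computation can be transported back to the channel side.
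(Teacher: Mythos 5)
Your proposal is correct and follows essentially the same route as the paper's own proof: both directions rest on Corollary \ref{cor:3.17} (the convex-hull description, which immediately confines extreme points to $EC(r,m)$) together with Proposition \ref{pro:3.13} and the mutual orthogonality of the projections $q_{m,r,l}$, which gives the linear independence forcing any convex representation of an EPOSIC channel to be trivial. The paper phrases this as uniqueness of the coefficients for an arbitrary $\Psi$ (by multiplying $C(\Psi)$ by each $q_{m,r,l}$), but that is the same simplex argument you give in the Choi picture.
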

\begin{proof}
$\,$

Since by Corollary \ref{cor:3.17}, we have $\underset{{\scriptstyle {\scriptscriptstyle SU(2)}}}{QC}(\rho_{{\scriptscriptstyle r}},\rho_{{\scriptscriptstyle m}})=Conv(EC(r,m))$,
it is enough to show that any element in $\underset{{\scriptstyle {\scriptscriptstyle SU(2)}}}{QC}(\rho_{{\scriptscriptstyle r}},\rho_{{\scriptscriptstyle m}})$
is uniquely written as linear combination of elements of $EC(r,m)$.
Let $\Psi\in\underset{{\scriptstyle {\scriptscriptstyle SU(2)}}}{QC}(\rho_{{\scriptscriptstyle r}},\rho_{{\scriptscriptstyle m}})$
such that $\overset{{\scriptscriptstyle \min\{r,m\}}}{\underset{{\scriptstyle {\scriptscriptstyle l=0}}}{\sum}}\lambda_{l}\Phi_{l}=\Psi=\overset{{\scriptscriptstyle \min\{r,m\}}}{\underset{{\scriptstyle {\scriptscriptstyle l=0}}}{\sum}}\mu_{l}\Phi_{l}$
where $\Phi_{l}=\Phi_{{\scriptscriptstyle m,m+r-2l,m-l}}$, then by
Proposition \ref{pro:3.13}, and the orthogonality of $q_{{\scriptscriptstyle m,r,l}}$,
$0\leq l\leq\min\{m,r\}$, we have\[
\frac{{\scriptstyle r+1}}{{\scriptstyle m+r-2l+1}}\lambda_{l}q_{{\scriptscriptstyle m,r,l}}=q_{{\scriptscriptstyle m,r,l}}C(\Psi)=\frac{{\scriptstyle r+1}}{{\scriptstyle m+r-2l+1}}\mu_{l}q_{{\scriptscriptstyle m,r,l}}\]
Thus, $\lambda_{l}=\mu_{l}$, and $EC(r,m)$ are extreme points of
$\underset{{\scriptstyle {\scriptscriptstyle SU(2)}}}{QC}(\rho_{{\scriptscriptstyle r}},\rho_{{\scriptscriptstyle m}})$.
To complete the proof, note that since any extreme point of $\underset{{\scriptstyle {\scriptscriptstyle SU(2)}}}{QC}(\rho_{{\scriptscriptstyle r}},\rho_{{\scriptscriptstyle m}})$
can not be written as a linear combination of elements of $\underset{{\scriptstyle {\scriptscriptstyle SU(2)}}}{QC}(\rho_{{\scriptscriptstyle r}},\rho_{{\scriptscriptstyle m}})$
other than itself, then any extreme point of $\underset{{\scriptstyle {\scriptscriptstyle SU(2)}}}{QC}(\rho_{{\scriptscriptstyle r}},\rho_{{\scriptscriptstyle m}})$
must be in $EC(r,m)$. 
\end{proof}

As $EC(r,m)$ is a spanning set for both the $SU(2)$-irreducibly
equivariant maps and the $SU(2)$-irreducibly covariant channels,
we have the following corollary: 
\begin{cor}
\label{cor:4.2}For $r,m\in\mathbb{N}$, any completely positive $SU(2)$-equivariant
map $\Phi:End(P_{{\scriptscriptstyle r}})\longrightarrow End(P_{{\scriptscriptstyle m}})$
is a multiple of an $SU(2)$-irreducibly covariant channel.\end{cor}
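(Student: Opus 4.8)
The plan is to combine the spanning statement of Corollary \ref{cor:3.16} with the explicit Choi matrices computed in Proposition \ref{pro:3.13}, and then to normalize the trace. First I would use Corollary \ref{cor:3.16}: since $\Phi$ is $SU(2)$-equivariant, it lies in $\mathrm{Span}\,EC(r,m)$, so there are scalars $\lambda_l\in\mathbb{C}$, $0\le l\le\min\{r,m\}$, with
\[
\Phi=\overset{{\scriptscriptstyle \min\{r,m\}}}{\underset{{\scriptstyle {\scriptscriptstyle l=0}}}{\sum}}\lambda_l\,\Phi_{{\scriptscriptstyle m,m+r-2l,m-l}}.
\]
Applying the linear map $C$ and using Proposition \ref{pro:3.13} together with Remark \ref{rem:3.14}, this becomes
\[
C(\Phi)=\overset{{\scriptscriptstyle \min\{r,m\}}}{\underset{{\scriptstyle {\scriptscriptstyle l=0}}}{\sum}}\frac{r+1}{m+r-2l+1}\,\lambda_l\,q_{{\scriptscriptstyle m,r,l}},
\]
where the $q_{{\scriptscriptstyle m,r,l}}$ are mutually orthogonal projections summing to $I_{{\scriptscriptstyle P_m\otimes\overline{P}_r}}$.

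The central step is to read off the positivity constraints. By Proposition \ref{pro:1.8}, $\Phi$ is completely positive exactly when $C(\Phi)\ge 0$. Because the $q_{{\scriptscriptstyle m,r,l}}$ are mutually orthogonal self-adjoint projections, the displayed expression for $C(\Phi)$ is already its spectral decomposition: the coefficients $\frac{r+1}{m+r-2l+1}\lambda_l$ are the eigenvalues on the respective ranges of the $q_{{\scriptscriptstyle m,r,l}}$, and orthogonality makes these coefficients uniquely determined. Hence $C(\Phi)\ge 0$ forces each coefficient to be real and nonnegative, and since $\frac{r+1}{m+r-2l+1}>0$ we conclude $\lambda_l\ge 0$ for every $l$. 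This is the only place where complete positivity is genuinely used, and I expect it to be the main (though mild) obstacle: the work is entirely in recognizing that the mutual orthogonality of the projections turns the positivity condition into the termwise statement $\lambda_l\ge 0$, rather than needing a more delicate analysis of a general positive combination.

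To finish, set $\Lambda=\overset{{\scriptscriptstyle \min\{r,m\}}}{\underset{{\scriptstyle {\scriptscriptstyle l=0}}}{\sum}}\lambda_l\ge 0$. If $\Lambda=0$ then every $\lambda_l=0$, so $\Phi=0$, which is trivially a (zero) multiple of any channel. If $\Lambda>0$, consider $\Psi:=\frac{1}{\Lambda}\Phi=\overset{{\scriptscriptstyle \min\{r,m\}}}{\underset{{\scriptstyle {\scriptscriptstyle l=0}}}{\sum}}\frac{\lambda_l}{\Lambda}\,\Phi_{{\scriptscriptstyle m,m+r-2l,m-l}}$. The weights $\lambda_l/\Lambda$ are nonnegative and sum to $1$, so $\Psi$ is a convex combination of the EPOSIC channels $\Phi_{{\scriptscriptstyle m,m+r-2l,m-l}}$; each of these is an $SU(2)$-irreducibly covariant channel, and by Proposition \ref{pro:1.13} the set $\underset{{\scriptscriptstyle SU(2)}}{QC}(\rho_{{\scriptscriptstyle r}},\rho_{{\scriptscriptstyle m}})$ is convex, so $\Psi$ is again an $SU(2)$-irreducibly covariant channel. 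Therefore $\Phi=\Lambda\,\Psi$ exhibits $\Phi$ as a nonnegative multiple of an $SU(2)$-irreducibly covariant channel, which is exactly the assertion.
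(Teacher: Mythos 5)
Your proposal is correct and follows essentially the same route as the paper's own proof: expand $\Phi$ over $EC(r,m)$ via Corollary \ref{cor:3.16}, use the orthogonality of the projections $q_{m,r,l}$ in the Choi matrix to force $\lambda_l\geq 0$ from complete positivity, and then normalize by $\Lambda=\sum_l\lambda_l$ to exhibit $\Phi$ as a nonnegative multiple of a convex combination of EPOSIC channels. The only cosmetic differences are that you spell out the spectral-decomposition step the paper leaves as a parenthetical remark, and you justify the final step via convexity (Proposition \ref{pro:1.13}) where the paper cites Corollary \ref{cor:3.17}; these are the same argument.
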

\begin{proof}
$\,$

By Corollary \ref{cor:3.16}, we have $\Phi=\overset{{\scriptscriptstyle \min\{r,m\}}}{\underset{{\scriptscriptstyle l=0}}{\sum}}\lambda_{l}\Phi_{{\scriptscriptstyle m,r+m-2l,m-l}}$
for some $\lambda_{l}\in\mathbb{C}$. Since $\Phi$ is completely
positive the coefficients $\lambda_{l}$ are non-negative (otherwise,
$C(\Phi)=\overset{{\scriptscriptstyle \min\{r,m\}}}{\underset{{\scriptscriptstyle l=0}}{\sum}}\frac{r+1}{m+r-2l+1}\lambda_{l}q_{{\scriptscriptstyle m,r,l}}$
will have a negative eigenvalue). Let $\lambda=\overset{{\scriptscriptstyle \min\{r,m\}}}{\underset{{\scriptscriptstyle l=0}}{\sum}}\lambda_{l}$.
If $\lambda=0$ then $\lambda_{l}=0$ for all $0\leq l\leq\min\{r,m\}$,
and $\Phi=0$ is a multiple of any $SU(2)$-irreducibly covariant
channel. If $\lambda\neq0$, then $\Psi=\overset{{\scriptscriptstyle \min\{m,r\}}}{\underset{{\scriptscriptstyle l=0}}{\sum}}\frac{\lambda_{l}}{\lambda}\Phi_{{\scriptscriptstyle m,m+r-2l,m-l}}$
is a convex combination of EPOSIC channels. Thus, by corollary \ref{cor:3.17},
$\Psi$ is an $SU(2)$-irreducibly covariant channel, and $\Phi=\lambda\Psi$.
\end{proof}

\newpage{}

$\,$

\section{some maps related to eposic channel}

In this section, given an EPOSIC channel $\Phi_{m,n,h}$, we construct
a complementary channel $\tilde{\Phi}_{m,n,h}$. We also give the
condition for the dual map $\Phi_{m,n,h}^{*}$ to be a quantum channel.

\subsection{A complementary channel of $\Phi_{m,n,h}$}

$\,$

Let us first recall the notion of complementary channels \cite{key-8}.
Given three Hilbert spaces $H$,$K$,$E$ and a linear isometry $\alpha:H\longrightarrow K\otimes E$
one associates two quantum channels 

$\qquad\qquad$\begin{tabular}{cccccccc}
$\Phi:End(H)\longrightarrow End(K)$ &  &  & and  &  & $\Psi:End(H)\longrightarrow End(E)$ &  & \tabularnewline
\end{tabular}

defined for $A\in End(H)$ by 

$\qquad\qquad$\begin{tabular}{ccccccccc}
 $\Phi(A)=Tr_{E}(\alpha A\alpha^{*})$ &  &  & and &  &  &  $\Psi(A)=Tr_{K}(\alpha A\alpha^{*})$ &  & \tabularnewline
\end{tabular}

The maps $\Phi$ and $\Psi$ are called mutually complementary. For
any quantum channel, a complementary channel always exists, see Proposition
\ref{pro:1.8}. However, due to the fact that Stinespring representation
(dilation) is not unique, there can be many candidates for {}``the''
complementary channel. In \cite{key-8}, Holevo clarifies in what
sense the complementary map is unique. He showed that if $(E,\alpha)$
and $(E^{\prime},\alpha^{\prime})$ are two Stinespring representations
(dilations) of $\Phi:End(H)\longrightarrow End(K)$ then there exist
a partial isometry $J:E\longrightarrow E^{\prime}$ such that $\alpha^{\prime}=(I_{{\scriptscriptstyle K}}\otimes J)\alpha$,
and $\alpha=(I_{{\scriptscriptstyle K}}\otimes J^{*})\alpha^{\prime}$.
It follows that if $\Phi_{{\scriptscriptstyle E}}$, $\Phi_{{\scriptscriptstyle E^{\prime}}}$
are complementary channels of $\Phi$, then they are equivalent in
the sense that there exist a partial isometry $J:E\longrightarrow E^{\prime}$
such that $\Phi_{{\scriptscriptstyle E}}(\varrho)=J^{*}\Phi_{{\scriptscriptstyle E}^{\prime}}(\varrho)J$,
and $\Phi_{{\scriptscriptstyle E}^{\prime}}(\varrho)=J\Phi_{{\scriptscriptstyle E}}(\varrho)J^{\prime}$
for any $\varrho\in D(H)$. Stinespring representations with minimal
dimensionality of the space $E$ are called minimal dilation, and
any two minimal dilations are isometric. By Remark \ref{rem:1.10},
the Stinespring representation with an environment space that satisfies
$dim(E)=rank(C(\Phi))$ is a minimal dilation.

\begin{rem}
$\,$\cite[p.96]{key-8} If $G$ is a group, and $\pi_{{\scriptscriptstyle H}},\pi_{{\scriptscriptstyle K}}$
are representations of $G$ on the Hilbert spaces $H,K$. Then $\Phi:End(H)\longrightarrow End(K)$
is $G$-covariant channel if and only if any complementary channel
of $\Phi$ is $G$-covariant.
\end{rem}

The following proposition will be used below to construct a complementary
channel $\tilde{\Phi}_{m,n,h}$ of $\Phi_{m,n,h}$. Recall the notations
in Section \ref{sub:2.2.3 B(i)}.
\begin{prop}
$\,$\label{pro:5.2} For $m,n,h\in\mathbb{N}$ with $0\leq h\leq\min\{m,n\}$
\[
flip_{{\scriptscriptstyle P_{n}}}^{{\scriptscriptstyle P_{m}}}\alpha_{m,n,h}={\scriptstyle (-1)}^{{\scriptscriptstyle h}}\alpha_{n,m,h}\]
 where $\alpha_{m,n,h}$ is the isometry in Definition \ref{def:2.8}.\end{prop}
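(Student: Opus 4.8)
The plan is to identify both sides as $SU(2)$-equivariant maps out of the irreducible space $P_{m+n-2h}$ and then pin down the proportionality constant by a single evaluation. First I would observe that $flip_{P_n}^{P_m}$ is $SU(2)$-equivariant (Example \ref{exa:1.5}) and that $\alpha_{m,n,h}$ is an $SU(2)$-equivariant isometry (Proposition \ref{pro:2.10}); hence $flip_{P_n}^{P_m}\alpha_{m,n,h}$ is an $SU(2)$-equivariant map from $P_{m+n-2h}$ into $P_n\otimes P_m$. Since the Clebsch--Gordan decomposition of $P_n\otimes P_m$ contains the irreducible $P_{m+n-2h}$ with multiplicity one, Schur's Lemma \cite[p.13]{key-13} shows that the space of $SU(2)$-equivariant maps $P_{m+n-2h}\to P_n\otimes P_m$ is one-dimensional and is spanned by the nonzero map $\alpha_{n,m,h}$. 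Therefore $flip_{P_n}^{P_m}\alpha_{m,n,h}=\lambda\,\alpha_{n,m,h}$ for some scalar $\lambda$, and since $flip$ is unitary while $\alpha_{m,n,h},\alpha_{n,m,h}$ are isometries, $|\lambda|=1$.

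It then remains only to compute $\lambda$ by evaluating both sides on one convenient vector, for which I would use $x_1^{r}$ with $r=m+n-2h$ (a nonzero multiple of the canonical basis vector $f_r^r$). Working in the polynomial model $P_m(x)\otimes P_n(y)\hookrightarrow \mathbb{C}[x,y]$, the flip acts simply by interchanging the variable names $x\leftrightarrow y$, i.e. $flip_{P_n}^{P_m}(F)(x,y)=F(y,x)$. Applying $\alpha_{m,n,h}$ to $x_1^{r}$ gives, after the single surviving term of $\Delta_{yx}^{n-h}$, a constant times $(x_1y_2-y_1x_2)^{h}x_1^{m-h}y_1^{n-h}$; interchanging $x\leftrightarrow y$ produces the factor $(-1)^{h}$ from the antisymmetry of $x_1y_2-y_1x_2$ and turns this into a multiple of $(x_1y_2-y_1x_2)^{h}x_1^{n-h}y_1^{m-h}$, which is exactly the shape of $\alpha_{n,m,h}(x_1^{r})$.

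Comparing the two constants then forces $\lambda=(-1)^{h}$: the factorial prefactors are $r!/(m-h)!$ and $r!/(n-h)!$, while the square roots of the Clebsch--Gordan coefficients satisfy $\sqrt{c_{m,n,h}}=\tfrac{(m-h)!}{(n-h)!}\sqrt{c_{n,m,h}}$ by Corollary \ref{cor:2.17}(1), so everything except the sign $(-1)^{h}$ cancels. I expect the only delicate points to be bookkeeping ones: correctly reading off that the flip is the variable swap $x\leftrightarrow y$ under the chosen identification, and invoking the exact normalization relation of Corollary \ref{cor:2.17}(1) to clear the coefficients. A fully computational alternative that avoids Schur's Lemma is also available: one checks the operator identity $\Delta_{yx}^{m-h}[f(x)]=\tfrac{(m-h)!}{(n-h)!}\Delta_{xy}^{n-h}[f(y)]$ for every homogeneous $f$ of degree $r$ by a monomial-by-monomial coefficient comparison, then multiplies by $\Gamma_{xy}^{h}$ and folds in the sign together with the coefficient relation. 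The representation-theoretic route above is shorter and less error-prone, so I would present that one as the main argument.
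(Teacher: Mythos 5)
Your proof is correct, but it follows a genuinely different route from the paper's. The paper proves the identity by brute force on every canonical basis vector: it expands $\alpha_{m,n,h}(f_i^r)$ via Corollary \ref{cor:2.15 formula for alphm,n,h}, applies the flip termwise to the elementary tensors $f_{l_{ij}}^m\otimes f_j^n$, invokes the full coefficient symmetry $\varepsilon_i^j{\scriptstyle(m,n,h)}=(-1)^h\varepsilon_i^{l_{ij}}{\scriptstyle(n,m,h)}$ of Corollary \ref{cor:2.17}(2), and reindexes the sum over $l=l_{ij}$. You instead use Schur's Lemma together with the multiplicity-one feature of the Clebsch--Gordan decomposition of $P_n\otimes P_m$ to conclude that the intertwiner space $\mathrm{Hom}_{SU(2)}(P_{m+n-2h},P_n\otimes P_m)$ is one-dimensional and spanned by $\alpha_{n,m,h}$, so that only a single scalar $\lambda$ must be determined; you then pin it down by evaluating on the one vector $x_1^r$, where $\Delta_{yx}^{n-h}$ has a single surviving term, and the normalization relation $c_{m,n,h}=\frac{((m-h)!)^2}{((n-h)!)^2}c_{n,m,h}$ of Corollary \ref{cor:2.17}(1) cancels everything except $(-1)^h$; your computation of that evaluation checks out. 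What each approach buys: yours is shorter, needs only the scalar relation in Corollary \ref{cor:2.17}(1) rather than the full $\varepsilon$-symmetry, and is conceptually aligned with the multiplicity-one theme the paper stresses in its introduction; its only extra burden is the (small but necessary) verification that, under the embedding of $P_m\otimes P_n$ into $\mathbb{C}[x,y]$, the flip is the variable swap $x\leftrightarrow y$, which you correctly flag as the delicate bookkeeping point. The paper's computation, by contrast, never needs the polynomial model of the flip --- it acts on elementary tensors abstractly --- and it delivers the identity termwise, reusing coefficient machinery already developed for the Kraus operators, at the cost of carrying the heavier $\varepsilon_i^j$ bookkeeping through the whole argument.
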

\begin{proof}
$\,$

Let $\{f_{{\scriptscriptstyle s}}^{k}:0\leq s\leq k\}$ be the canonical
bases for $P_{{\scriptscriptstyle k}}$, $k\in\{r,m,n\}$. Let $B^{{\scriptscriptstyle m,n,h}}(i)=\{j:{\textstyle \max\{0,-m+i+h\}}\leq j\leq\min\{i+h,\, n\}\}$
(the set $B(i)$ associated to $\alpha_{m,n,h}$), and $B^{{\scriptscriptstyle n,m,h}}(i)=\{j:{\textstyle \max\{0,-n+i+h\}}\leq j\leq\min\{i+h,\, m\}\}$
(the set $B(i)$ associated to $\alpha_{n,m,h}$), see Corollary \ref{cor:2.15 formula for alphm,n,h}).
By Corollary \ref{cor:2.17}, and since $j\in B^{{\scriptscriptstyle m,n,h}}(i)$
if and only if $l_{ij}\in B^{{\scriptscriptstyle n,m,h}}(i)$, we
have for $0\leq i\leq r$,\[
flip_{{\scriptscriptstyle P_{n}}}^{{\scriptscriptstyle P_{m}}}\alpha_{m,n,h}(f_{i}^{{\scriptscriptstyle r}})=\underset{{\scriptscriptstyle j\in B^{{\scriptscriptstyle m,n,h}}(i)}}{\sum}\varepsilon_{i}^{j}{\scriptscriptstyle (m,n,h)}\, f_{j}^{{\scriptscriptstyle n}}\otimes f_{l_{ij}}^{{\scriptscriptstyle m}}={\scriptstyle \left(-1\right)}^{{\scriptscriptstyle h}}\underset{{\scriptscriptstyle l_{ij}\in B^{n,m,h}(i)}}{\sum}\varepsilon_{i}^{l_{ij}}{\scriptscriptstyle (n,m,h)}\, f_{j}^{{\scriptscriptstyle n}}\otimes f_{l_{ij}}^{{\scriptscriptstyle m}}\]
By taking the sum over $l=l_{ij}$, we get

\[
flip_{{\scriptscriptstyle P_{n}}}^{{\scriptscriptstyle P_{m}}}\alpha_{m,n,h}(f_{{\scriptscriptstyle i}}^{{\scriptscriptstyle r}})={\scriptstyle \left(-1\right)}^{{\scriptscriptstyle h}}\underset{{\scriptscriptstyle l\in B(i)}}{\sum}\varepsilon_{i}^{l}{\scriptscriptstyle (n,m,h)}\, f_{{\scriptscriptstyle i-l+h}}^{{\scriptscriptstyle n}}\otimes f_{{\scriptscriptstyle l}}^{{\scriptscriptstyle m}}={\scriptstyle \left(-1\right)}^{{\scriptscriptstyle h}}\underset{{\scriptscriptstyle l\in B(i)}}{\sum}\varepsilon_{i}^{l}{\scriptscriptstyle (n,m,h)}\, f_{l_{il}}^{{\scriptscriptstyle n}}\otimes f_{{\scriptscriptstyle l}}^{{\scriptscriptstyle m}}\]

\[
={\scriptstyle \left(-1\right)}^{{\scriptscriptstyle h}}\alpha_{n,m,h}(f_{{\scriptscriptstyle i}}^{{\scriptscriptstyle r}})\qquad\qquad\qquad\]

\end{proof}

Using the proposition above and the equation $Tr_{{\scriptscriptstyle K}}(flip_{{\scriptscriptstyle K}}^{{\scriptscriptstyle H}}Aflip_{{\scriptscriptstyle H}}^{{\scriptscriptstyle K}})=Tr_{{\scriptscriptstyle K}}(A)$
for any $A\in End(H\otimes K)$, the following corollary becomes straightforward 
\begin{cor}
\label{cor:5.3}The channel $\Phi_{n,m,h}$ is a complementary channel
for $\Phi_{m,n,h}$.
\end{cor}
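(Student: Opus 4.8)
The plan is to realize $\Phi_{n,m,h}$ as the complementary channel attached to the very Stinespring dilation $(P_{{\scriptscriptstyle n}},\alpha_{m,n,h})$ that defines $\Phi_{m,n,h}$, so that only one explicit identity must be checked. Taking $H=P_{{\scriptscriptstyle r}}$, $K=P_{{\scriptscriptstyle m}}$, $E=P_{{\scriptscriptstyle n}}$ and $\alpha=\alpha_{m,n,h}:P_{{\scriptscriptstyle r}}\to P_{{\scriptscriptstyle m}}\otimes P_{{\scriptscriptstyle n}}$, the complementary channel $\Psi:End(P_{{\scriptscriptstyle r}})\to End(P_{{\scriptscriptstyle n}})$ of $\Phi_{m,n,h}$ is by definition $\Psi(A)=Tr_{P_m}(\alpha_{m,n,h}A\alpha_{m,n,h}^{*})$, i.e. the same expression with $P_{{\scriptscriptstyle m}}$ traced out in place of $P_{{\scriptscriptstyle n}}$. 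Since $n+m-2h=r$, the map $\Phi_{n,m,h}(A)=Tr_{P_m}(\alpha_{n,m,h}A\alpha_{n,m,h}^{*})$ also sends $End(P_{{\scriptscriptstyle r}})$ to $End(P_{{\scriptscriptstyle n}})$, so it suffices to prove $\Psi=\Phi_{n,m,h}$.

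I would then substitute Proposition \ref{pro:5.2} into $\Phi_{n,m,h}$. Writing $F=flip_{{\scriptscriptstyle P_{n}}}^{{\scriptscriptstyle P_{m}}}$, that proposition reads $F\alpha_{m,n,h}={\scriptstyle (-1)}^{h}\alpha_{n,m,h}$, hence $\alpha_{n,m,h}={\scriptstyle (-1)}^{h}F\alpha_{m,n,h}$ and $\alpha_{n,m,h}^{*}={\scriptstyle (-1)}^{h}\alpha_{m,n,h}^{*}F^{*}$. The two signs multiply to $1$, so that $\Phi_{n,m,h}(A)=Tr_{P_m}(F\,\alpha_{m,n,h}A\alpha_{m,n,h}^{*}\,F^{*})$. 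Setting $B=\alpha_{m,n,h}A\alpha_{m,n,h}^{*}\in End(P_{{\scriptscriptstyle m}}\otimes P_{{\scriptscriptstyle n}})$, the whole corollary collapses to the single operator identity $Tr_{P_m}(FBF^{*})=Tr_{P_m}(B)$.

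To finish I would invoke the flip--trace identity of Example \ref{exa:1.5}(2), namely $Tr_{K}(flip_{K}^{H}C\,flip_{H}^{K})=Tr_{K}(C)$ for $C\in End(H\otimes K)$, applied with $H=P_{{\scriptscriptstyle n}}$, $K=P_{{\scriptscriptstyle m}}$ and $C=FBF^{*}\in End(P_{{\scriptscriptstyle n}}\otimes P_{{\scriptscriptstyle m}})$; since $flip_{{\scriptscriptstyle P_{m}}}^{{\scriptscriptstyle P_{n}}}=F^{*}$ is the inverse of $F$, the left side becomes $Tr_{P_m}(F^{*}FBF^{*}F)=Tr_{P_m}(B)$ while the right side is $Tr_{P_m}(FBF^{*})$, which is exactly the identity needed. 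The only real subtlety, and the one place to be careful, is the bookkeeping: in $Tr_{P_m}(B)$ the factor $P_{{\scriptscriptstyle m}}$ sits in the first slot whereas in $Tr_{P_m}(FBF^{*})$ it sits in the second, so one must match the roles of $H$ and $K$ in the flip--trace identity correctly and confirm the ${\scriptstyle (-1)}^{h}$ cancellation; a one-line check on simple tensors $B=B_{1}\otimes B_{2}$ (both sides giving $tr(B_{1})B_{2}$) makes the equality transparent. There is no analytic obstacle beyond this.
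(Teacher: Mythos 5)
Your proof is correct and follows exactly the route the paper indicates: the paper derives this corollary as a ``straightforward'' consequence of Proposition \ref{pro:5.2} together with the flip--trace identity $Tr_{{\scriptscriptstyle K}}(flip_{{\scriptscriptstyle K}}^{{\scriptscriptstyle H}}Aflip_{{\scriptscriptstyle H}}^{{\scriptscriptstyle K}})=Tr_{{\scriptscriptstyle K}}(A)$ of Example \ref{exa:1.5}(2), which are precisely the two ingredients you use. Your write-up simply makes explicit the cancellation of the $(-1)^{h}$ signs and the bookkeeping of which tensor factor is traced, both of which the paper leaves to the reader.
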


The following corollary to Proposition \ref{pro:5.2} will be used
in the proof of Corollary \ref{cor:5.6}. Recall that by Proposition
\ref{pro:2.3},  $J_{{\scriptscriptstyle m}}:P_{{\scriptscriptstyle m}}\longrightarrow\overline{P}_{{\scriptscriptstyle m}}$
is a unitary map.
\begin{cor}
\label{cor:5.4} For $m,n,h\in\mathbb{N}$ with $0\leq h\leq\min\{m,n\}$,
\[
flip_{{\scriptscriptstyle P_{n}}}^{{\scriptscriptstyle \overline{P}_{m}}}(J_{{\scriptscriptstyle m}}\otimes J_{{\scriptscriptstyle n}}^{*})\eta_{m,n,h}=\left(-1\right)^{h}\eta_{n,m,h}\]
\end{cor}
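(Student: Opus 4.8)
The plan is to reduce the identity to the two flip-relations already proved, namely Proposition \ref{pro:2.4} and Proposition \ref{pro:5.2}, together with the defining formula $\eta_{m,n,h}=(I_{P_m}\otimes J_n)\alpha_{m,n,h}$ from Lemma \ref{lem:2.12 (Ipm tensor Jn)}. Everything is a bookkeeping composition of $SU(2)$-equivariant maps, so the whole argument is a chain of substitutions with no genuine analytic content; the only care needed is in tracking which Hilbert space each tensor factor acts on.

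First I would substitute the defining formula for $\eta_{m,n,h}$ into the left-hand side, obtaining $flip_{P_n}^{\overline{P}_m}(J_m\otimes J_n^*)\eta_{m,n,h}=flip_{P_n}^{\overline{P}_m}(J_m\otimes J_n^*)(I_{P_m}\otimes J_n)\alpha_{m,n,h}$. Next I would collapse the two middle operators using $(A\otimes B)(C\otimes D)=(AC)\otimes(BD)$ and the unitarity of $J_n$ (Proposition \ref{pro:2.3}), which gives $(J_m\otimes J_n^*)(I_{P_m}\otimes J_n)=J_m\otimes(J_n^*J_n)=J_m\otimes I_{P_n}$. The expression therefore simplifies to $flip_{P_n}^{\overline{P}_m}(J_m\otimes I_{P_n})\alpha_{m,n,h}$, which is precisely the shape to which Proposition \ref{pro:2.4} applies.

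I would then invoke Proposition \ref{pro:2.4}, $flip_{P_n}^{\overline{P}_m}(J_m\otimes I_{P_n})=(I_{P_n}\otimes J_m)flip_{P_n}^{P_m}$, to rewrite the expression as $(I_{P_n}\otimes J_m)\,flip_{P_n}^{P_m}\alpha_{m,n,h}$, and apply Proposition \ref{pro:5.2}, $flip_{P_n}^{P_m}\alpha_{m,n,h}=(-1)^h\alpha_{n,m,h}$, to obtain $(-1)^h(I_{P_n}\otimes J_m)\alpha_{n,m,h}$. Finally I would recognize $(I_{P_n}\otimes J_m)\alpha_{n,m,h}=\eta_{n,m,h}$ by the index-swapped instance of the defining formula in Lemma \ref{lem:2.12 (Ipm tensor Jn)}, yielding the claimed $(-1)^h\eta_{n,m,h}$. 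The one point to watch is that Proposition \ref{pro:5.2} outputs $\alpha_{n,m,h}$ with the roles of $m$ and $n$ exchanged, so the remaining factor must be $I_{P_n}\otimes J_m$ (not $I_{P_m}\otimes J_n$) for the recognition step to close; keeping the source and target spaces straight through the flips is the only real place an error could creep in.
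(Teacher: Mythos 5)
Your proof is correct and follows exactly the paper's own argument: substitute $\eta_{m,n,h}=(I_{P_m}\otimes J_n)\alpha_{m,n,h}$ from Lemma \ref{lem:2.12 (Ipm tensor Jn)}, cancel $J_n^*J_n$ to reduce to $flip_{{\scriptscriptstyle P_n}}^{{\scriptscriptstyle \overline{P}_m}}(J_m\otimes I_{P_n})\alpha_{m,n,h}$, then apply Propositions \ref{pro:2.4} and \ref{pro:5.2} and recognize the index-swapped $\eta_{n,m,h}$. Your closing remark about tracking which factor carries $J_m$ after the swap is precisely the bookkeeping the paper's proof performs implicitly.
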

\begin{proof}
$\,$

By Lemma \ref{lem:2.12 (Ipm tensor Jn)}, we have \[
flip_{{\scriptscriptstyle P_{n}}}^{{\scriptscriptstyle \overline{P}_{m}}}(J_{{\scriptscriptstyle m}}\otimes J_{{\scriptscriptstyle n}}^{*})\eta_{m,n,h}=flip_{{\scriptscriptstyle P_{n}}}^{{\scriptscriptstyle \overline{P}_{m}}}(J_{{\scriptscriptstyle m}}\otimes J_{{\scriptscriptstyle n}}^{*})(I_{P_{m}}\otimes J_{{\scriptscriptstyle n}})\alpha_{m,n,h}\]
\[
\qquad\qquad=flip_{{\scriptscriptstyle P_{n}}}^{{\scriptscriptstyle \overline{P}_{m}}}(J_{{\scriptscriptstyle m}}\otimes I_{P_{n}})\alpha_{m,n,h}\]
Thus, by Propositions \ref{pro:2.4} and \ref{pro:5.2}, we have\[
flip_{{\scriptscriptstyle P_{n}}}^{{\scriptscriptstyle \overline{P}_{m}}}(J_{{\scriptscriptstyle m}}\otimes J_{{\scriptscriptstyle n}}^{*})\eta_{m,n,h}=(I_{P_{n}}\otimes J_{{\scriptscriptstyle m}})flip_{{\scriptscriptstyle P_{n}}}^{{\scriptscriptstyle P_{m}}}\alpha_{m,n,h}={\scriptstyle \left(-1\right)}^{{\scriptscriptstyle h}}(I_{P_{n}}\otimes J_{{\scriptscriptstyle m}})\alpha_{n,m,h}\]
\[
\qquad\qquad\qquad\qquad\qquad\quad\qquad={\scriptstyle \left(-1\right)}^{{\scriptscriptstyle h}}\eta_{n,m,h}.\]

\end{proof}

\subsection{The dual map of $\Phi_{m,n,h}$.}

$\,$

For Hilbert spaces $H,K$ and a linear map $\Phi:End(H)\longrightarrow End(K)$,
the dual map of $\Phi$ is defined to be the unique map $\Phi^{*}:End(K)\longrightarrow End(H)$
such that $\left\langle B\left|\Phi(A)\right.\right\rangle _{{\scriptscriptstyle End(K)}}=\left\langle \Phi^{*}(B)\left|A\right.\right\rangle _{{\scriptscriptstyle End(H)}}$
for all $A\in End(H),\, B\in End(K)$. One can easily check that if
$\Phi:End(H)\longrightarrow End(K)$ is a quantum channel, then
\begin{itemize}
\item $\Phi^{*}$ is a quantum channel if and only if $\Phi(I_{{\scriptscriptstyle H}})=I_{{\scriptscriptstyle K}}$. 
\item If $\{T_{j}:1\leq j\leq d\}$ are Kraus operators for $\Phi$, then
$\{T_{j}^{*}:1\leq j\leq d\}$ are Kraus operators for $\Phi^{*}$.
\item If $G$ is a group and $\pi_{{\scriptscriptstyle H}},\pi_{{\scriptscriptstyle K}}$
are representations of $G$ on the Hilbert spaces $H,K$, then $\Phi$
is $G$-equivariant map if and only if $\Phi^{*}$ is $G$-equivariant.
\end{itemize}

To obtain a relation between $\Phi_{m,n,h}$ and $\Phi_{m,n,h}^{*}$,
we examine their Choi matrices. Recall that by Proposition \ref{pro:3.8},
we have \[
flip_{{\scriptscriptstyle P_{r}}}^{{\scriptscriptstyle \overline{P}_{m}}}\left(J_{{\scriptscriptstyle m}}\otimes J_{{\scriptscriptstyle r}}^{*}\right)\left(Vec(T_{{\scriptscriptstyle n-j}})\right)={\scriptstyle \left(-1\right)^{m-j}}Vec(T_{{\scriptscriptstyle j}}^{*})\]

\begin{prop}
\label{pro:5.5} For $m,n,h\in\mathbb{N}$ with $0\leq h\leq\min\{m,n\}$,
let $\mathcal{T}_{{\scriptscriptstyle m,r}}=$$flip_{{\scriptscriptstyle P_{r}}}^{{\scriptscriptstyle \overline{P}_{m}}}\left(J_{{\scriptscriptstyle m}}\otimes J_{{\scriptscriptstyle r}}^{*}\right)$.
Then $C(\Phi_{m,n,h}^{*})=\mathcal{T}_{{\scriptscriptstyle m,r}}C(\Phi_{m,n,h})\mathcal{T}_{{\scriptscriptstyle m,r}}^{*}$.\end{prop}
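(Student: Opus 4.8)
The plan is to compute both Choi matrices directly from their Kraus representations and then use the symmetry relation of Proposition~\ref{pro:3.8}(2) to transport one into the other. Recall from Proposition~\ref{pro:1.9}(2) that if $\{T_j : 0 \le j \le n\}$ are the EPOSIC Kraus operators of $\Phi_{m,n,h}$, then $C(\Phi_{m,n,h}) = \sum_{j=0}^{n} \mathrm{Vec}(T_j)\mathrm{Vec}(T_j)^*$. Since the dual channel $\Phi_{m,n,h}^{*}$ has Kraus operators $\{T_j^{*} : 0 \le j \le n\}$ (one of the general properties of dual maps recorded at the start of this subsection), the same formula gives $C(\Phi_{m,n,h}^{*}) = \sum_{j=0}^{n} \mathrm{Vec}(T_j^{*})\mathrm{Vec}(T_j^{*})^{*}$. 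The whole proposition will therefore reduce to relating the two families of vectors $\{\mathrm{Vec}(T_j^{*})\}$ and $\{\mathrm{Vec}(T_j)\}$.

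First I would invoke Proposition~\ref{pro:3.8}(2), which states $\mathcal{T}_{m,r}\,\mathrm{Vec}(T_{n-j}) = (-1)^{m+j}\mathrm{Vec}(T_j^{*})$, or equivalently $\mathrm{Vec}(T_j^{*}) = (-1)^{m+j}\,\mathcal{T}_{m,r}\,\mathrm{Vec}(T_{n-j})$. Substituting this into the expression for $C(\Phi_{m,n,h}^{*})$ and pulling the scalar out of both factors produces a global sign $(-1)^{2(m+j)} = 1$, so that each summand becomes $\mathcal{T}_{m,r}\,\mathrm{Vec}(T_{n-j})\mathrm{Vec}(T_{n-j})^{*}\,\mathcal{T}_{m,r}^{*}$. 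Here I would use that for any vector $v$ one has $(\mathcal{T}_{m,r}v)(\mathcal{T}_{m,r}v)^{*} = \mathcal{T}_{m,r}(vv^{*})\mathcal{T}_{m,r}^{*}$, which is immediate from the definition of the adjoint.

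Next I would factor the common $\mathcal{T}_{m,r}$ on the left and $\mathcal{T}_{m,r}^{*}$ on the right outside the sum, leaving $\mathcal{T}_{m,r}\big(\sum_{j=0}^{n}\mathrm{Vec}(T_{n-j})\mathrm{Vec}(T_{n-j})^{*}\big)\mathcal{T}_{m,r}^{*}$. The reindexing $k = n-j$ is a bijection of $\{0,\dots,n\}$ onto itself, so the inner sum equals $\sum_{k=0}^{n}\mathrm{Vec}(T_k)\mathrm{Vec}(T_k)^{*} = C(\Phi_{m,n,h})$, yielding the claimed identity $C(\Phi_{m,n,h}^{*}) = \mathcal{T}_{m,r}\,C(\Phi_{m,n,h})\,\mathcal{T}_{m,r}^{*}$.

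There is no serious obstacle here: the content is entirely carried by the symmetry relation of Proposition~\ref{pro:3.8}(2), and the remaining work is just the observation that the sign squares to one together with the bijective reindexing $j \mapsto n-j$. The only points requiring a line of care are the factoring of $\mathcal{T}_{m,r}$ through the rank-one terms and the check that $j \mapsto n-j$ preserves the summation range, both of which are routine.
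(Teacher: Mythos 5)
Your proposal is correct and follows essentially the same route as the paper's own proof: both compute $C(\Phi_{m,n,h}^{*})=\sum_{j}\mathrm{Vec}(T_{j}^{*})\mathrm{Vec}(T_{j}^{*})^{*}$ from the dual Kraus operators via Proposition \ref{pro:1.9}, substitute the symmetry relation of Proposition \ref{pro:3.8}(2) so the real sign $(-1)^{m+j}$ squares away, factor $\mathcal{T}_{m,r}$ and $\mathcal{T}_{m,r}^{*}$ out of the sum, and reindex $j\mapsto n-j$. No gaps; the argument matches the paper step for step.
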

\begin{proof}
$\,$

Let $\{T_{{\scriptscriptstyle j}}:0\leq j\leq n\}$ be the EPOSIC
Kraus operators for $\Phi_{m,n,h}$. As $\{T_{{\scriptscriptstyle j}}^{*}:0\leq j\leq n\}$
are Kraus operators for $\Phi_{m,n,h}^{*}$, then by Propositions
\ref{pro:1.9} and \ref{pro:3.8}, we have

$C(\Phi_{m,n,h}^{*})=\overset{{\scriptscriptstyle n}}{\underset{{\scriptscriptstyle j=0}}{\sum}}Vec(T_{{\scriptscriptstyle j}}^{*})\left(VecT_{{\scriptscriptstyle j}}^{*}\right)^{*}=\overset{{\scriptscriptstyle n}}{\underset{{\scriptscriptstyle j=0}}{\sum}}\mathcal{T}_{{\scriptscriptstyle m,r}}Vec(T_{{\scriptscriptstyle n-j}})\left(\mathcal{T}_{{\scriptscriptstyle m,r}}Vec(T_{{\scriptscriptstyle n-j}})\right)^{*}$

$\qquad\qquad\;=\overset{{\scriptscriptstyle n}}{\underset{{\scriptscriptstyle j=0}}{\sum}}\mathcal{T}_{{\scriptscriptstyle m,r}}Vec(T_{{\scriptscriptstyle n-j}})\left(Vec(T_{{\scriptscriptstyle n-j}})\right)^{*}\mathcal{T}_{{\scriptscriptstyle m,r}}^{*}$

$\qquad\qquad\;=\mathcal{T}_{{\scriptscriptstyle m,r}}\left(\overset{{\scriptscriptstyle n}}{\underset{{\scriptscriptstyle j=0}}{\sum}}Vec(T_{{\scriptscriptstyle n-j}})\left(Vec(T_{{\scriptscriptstyle n-j}})\right)^{*}\right)\mathcal{T}_{{\scriptscriptstyle m,r}}^{*}$

$\qquad\qquad\;=\mathcal{T}_{{\scriptscriptstyle m,r}}\left(\overset{{\scriptscriptstyle n}}{\underset{{\scriptscriptstyle j=0}}{\sum}}Vec(T_{j})\left(Vec(T_{j})\right)^{*}\right)\mathcal{T}_{{\scriptscriptstyle m,r}}^{*}$

$\qquad\qquad\;=\mathcal{T}_{{\scriptscriptstyle m,r}}C(\Phi_{m,n,h})\mathcal{T}_{{\scriptscriptstyle m,r}}^{*}$.
\end{proof}

\begin{cor}
\label{cor:5.6} For $m,n,h\in\mathbb{N}$ with $0\leq h\leq\min\{m,n\}$,
let $r=m+n-2h$. Then \[
\Phi_{m,n,h}^{*}=\frac{{\scriptstyle r+1}}{{\scriptstyle m+1}}\Phi_{r,n,n-h}\]
\end{cor}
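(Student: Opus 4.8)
The plan is to establish the claimed identity at the level of Choi matrices and then transfer it to the maps themselves using the injectivity of the Choi--Jamiolkowski map $C$ (Example~\ref{exa:1.5}(5)), which is a unitary, hence injective, linear map. Thus it suffices to prove
\[
C(\Phi_{m,n,h}^{*})=\frac{r+1}{m+1}\,C(\Phi_{r,n,n-h}).
\]
Before doing so I would check that $\Phi_{r,n,n-h}$ is a legitimate EPOSIC channel: from $0\le h\le\min\{m,n\}$ one verifies $0\le n-h\le\min\{r,n\}$, and $r+n-2(n-h)=m$, so $\Phi_{r,n,n-h}:End(P_{m})\to End(P_{r})$ and its Choi matrix lives in $End(P_{r}\otimes\overline{P}_{m})$, the same space as $C(\Phi_{m,n,h}^{*})$.

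For the left-hand side I would start from Proposition~\ref{pro:5.5}, which gives $C(\Phi_{m,n,h}^{*})=\mathcal{T}_{m,r}\,C(\Phi_{m,n,h})\,\mathcal{T}_{m,r}^{*}$ with $\mathcal{T}_{m,r}=flip_{P_{r}}^{\overline{P}_{m}}(J_{m}\otimes J_{r}^{*})$, and substitute the value of $C(\Phi_{m,n,h})$ from Proposition~\ref{pro:3.13}, namely $\frac{r+1}{n+1}\,\eta_{m,r,m-h}\eta_{m,r,m-h}^{*}$. This rewrites the right-hand side as $\frac{r+1}{n+1}(\mathcal{T}_{m,r}\eta_{m,r,m-h})(\mathcal{T}_{m,r}\eta_{m,r,m-h})^{*}$. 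The crucial step is to apply Corollary~\ref{cor:5.4} specialized by the substitution $n\mapsto r$, $h\mapsto m-h$ (valid since then $0\le m-h\le\min\{m,r\}$), which yields $\mathcal{T}_{m,r}\eta_{m,r,m-h}=(-1)^{m-h}\eta_{r,m,m-h}$. The sign $(-1)^{m-h}$ cancels against its conjugate, leaving $C(\Phi_{m,n,h}^{*})=\frac{r+1}{n+1}\,\eta_{r,m,m-h}\eta_{r,m,m-h}^{*}=\frac{r+1}{n+1}\,q_{r,m,m-h}$.

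For the right-hand side I would apply Proposition~\ref{pro:3.13} to $\Phi_{r,n,n-h}$: its three parameters are $(r,n,n-h)$, the associated degree is $r+n-2(n-h)=m$, and the projection index is $r-(n-h)$, which simplifies to $m-h$ using $r=m+n-2h$. Hence $C(\Phi_{r,n,n-h})=\frac{m+1}{n+1}\,q_{r,m,m-h}$, the very same projection. Multiplying by $\frac{r+1}{m+1}$ produces $\frac{r+1}{n+1}q_{r,m,m-h}$, which matches the left-hand side, and injectivity of $C$ then completes the argument.

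I expect the main obstacle to be purely the parameter bookkeeping: verifying that the two projections $q_{r,m,m-h}$ arising on each side are literally the same operator on $P_{r}\otimes\overline{P}_{m}$ (this rests on the identities $r+n-2(n-h)=m$ and $r-(n-h)=m-h$), and confirming that the specialization of Corollary~\ref{cor:5.4} lies within its stated hypotheses. No analytic or representation-theoretic difficulty arises beyond Schur's lemma, which is already packaged inside Propositions~\ref{pro:3.13} and~\ref{pro:5.5}; the entire content is in tracking indices so that both the projection and the scalar factor $\frac{r+1}{n+1}$ come out correctly, with the sign $(-1)^{m-h}$ disappearing because the Choi matrix is formed as $\eta\eta^{*}$.
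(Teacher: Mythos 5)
Your proposal is correct and follows essentially the same route as the paper's own proof: reduce to Choi matrices, apply Proposition \ref{pro:5.5} together with Proposition \ref{pro:3.13} to both sides, and use Corollary \ref{cor:5.4} (with $n\mapsto r$, $h\mapsto m-h$) so that the sign $(-1)^{m-h}$ cancels in $\eta\eta^{*}$, reducing everything to $\mathcal{T}_{m,r}q_{m,r,m-h}\mathcal{T}_{m,r}^{*}=q_{r,m,m-h}$. Your additional parameter checks (that $\Phi_{r,n,n-h}$ is a well-defined EPOSIC channel and that $r-(n-h)=m-h$) are implicit in the paper but worth making explicit.
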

\begin{proof}
$\,$

It suffices to show that $C(\Phi_{m,n,h}^{*})=\frac{r+1}{m+1}C(\Phi_{r,n,n-h})$.
By Proposition \ref{pro:5.5}, this is equivalent to $\mathcal{T}_{{\scriptscriptstyle m,r}}C(\Phi_{m,n,h})\mathcal{T}_{{\scriptscriptstyle m,r}}^{*}=\frac{r+1}{m+1}C(\Phi_{r,n,n-h})$,
and by Proposition \ref{pro:3.13}, it is equivalent to $\mathcal{T}_{{\scriptscriptstyle m,r}}q_{{\scriptscriptstyle m,r,m-h}}\mathcal{T}_{{\scriptscriptstyle m,r}}^{*}=q_{{\scriptscriptstyle r,m,m-h}}$
where $q_{{\scriptscriptstyle m,r,m-h}}$,$\,$ $q_{{\scriptscriptstyle r,m,m-h}}$
are the projections on the $SU(2)$-irreducible subspaces of $P_{{\scriptscriptstyle m}}\otimes\overline{P}_{{\scriptscriptstyle r}}$
and $P_{{\scriptscriptstyle r}}\otimes\overline{P}_{{\scriptscriptstyle m}}$
respectively.

By Corollary \ref{cor:5.4}, one has

$\mathcal{T}_{{\scriptscriptstyle m,r}}q_{{\scriptscriptstyle m,r,m-h}}\mathcal{T}_{{\scriptscriptstyle m,r}}^{*}$$=\mathcal{T}_{{\scriptscriptstyle m,r}}\eta_{m,r,m-h}\eta_{m,r,m-h}^{*}\mathcal{T}_{{\scriptscriptstyle m,r}}^{*}$

$\qquad\qquad\qquad\quad=flip_{{\scriptscriptstyle P_{r}}}^{{\scriptscriptstyle \overline{P}_{m}}}\left(J_{{\scriptscriptstyle m}}\otimes J_{{\scriptscriptstyle r}}^{*}\right)\eta_{m,r,m-h}\left(flip_{{\scriptscriptstyle P_{r}}}^{{\scriptscriptstyle \overline{P}_{m}}}\left(J_{{\scriptscriptstyle m}}\otimes J_{{\scriptscriptstyle r}}^{*}\right)\eta_{m,r,m-h}\right)^{*}$ 

$\qquad\qquad\qquad\quad=\eta_{r,m,m-h}\eta_{r,m,m-h}^{*}=q_{{\scriptscriptstyle r,m,m-h}}$.
\end{proof}

\begin{rem}
The dual map for $\Phi_{m,n,h}$ is a channel if and only if $n=2h$,
in which case $\Phi_{m,2h,h}^{*}$ is $\Phi_{m,2h,h}$.
\end{rem}

\section{application in operator algebra: an example of positive non-completely
positive map}

In this section, using EPOSIC channels, we derive a new example of
positive map that is not completely positive. We begin with reviewing
notions we need.
\begin{defn}
Let $H$ and $K$ be Hilbert spaces, a linear map $\Phi:End(H)\longrightarrow End(K)$
is said to be 
\begin{enumerate}
\item positive map, if $\Phi(A)\geq0$ for any positive matrix $A\in End(H)$.
\item $n$-positive map, if $\Phi\otimes I_{n}$ is positive where $\Phi\otimes I_{n}:End(H)\otimes M_{n}\longrightarrow End(K)\otimes M_{n}$
given by $A\otimes B\longmapsto\Phi(A)\otimes B$ and extended by
linearity.
\item completely positive map, if it is $n$-positive for each $n\geq1$.
\end{enumerate}
\end{defn}

Clearly, any completely positive map is automatically positive, but
the converse is not true. In fact there are some examples of positive,
non completely positive maps. Here, we use the EPOSIC channels $EC(1,m)$,
$m\in\mathbb{N}\smallsetminus\{0\}$ to derive a new example of these
maps. Recall that for $m\in\mathbb{N}\smallsetminus\{0\}$, the $EC(1,m)$
consists of two EPOSIC channels from $End(P_{{\scriptscriptstyle 1}})$
to $End(P_{{\scriptscriptstyle m}})$, namely $EC(1,m)=$$\{\Phi_{m,m+1,m},$$\Phi_{m,m-1,m-1}\}$.
Recall also that $P_{{\scriptscriptstyle 1}}$ has the canonical basis
$\{f_{{\scriptscriptstyle 0}}^{{\scriptscriptstyle 1}},f_{{\scriptscriptstyle 1}}^{{\scriptscriptstyle 1}}\}$
where $f_{{\scriptscriptstyle 0}}^{{\scriptscriptstyle 1}}(x_{1},x_{2})=x_{2}$,
and $f_{{\scriptscriptstyle 1}}^{{\scriptscriptstyle 1}}(x_{1},x_{2})=x_{1}$. 
\begin{lem}
\label{lem:6.2} Let $h\in P_{{\scriptscriptstyle 1}}$ with $\left\Vert h\right\Vert =1$.
Then
\begin{enumerate}
\item There exist $g_{h}\in SU(2)$ such that $\rho_{{\scriptscriptstyle 1}}{\scriptstyle (g_{h})}\left(f_{{\scriptscriptstyle 0}}^{{\scriptscriptstyle 1}}\right)=h$.
\item If $\Phi:End(P_{{\scriptscriptstyle 1}})\longrightarrow End(P_{{\scriptscriptstyle m}})$
is an $SU(2)$-equivariant map then the matrices $\Phi(hh^{*})$ and
$\Phi(E_{{\scriptscriptstyle 11}})$ have the same eigenvalues.
\end{enumerate}
\end{lem}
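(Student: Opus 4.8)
The plan is to prove (1) by an explicit matching of coefficients in the one-dimensional irreducible degree and then feed the result directly into (2).

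First I would expand $h$ in the canonical orthonormal basis $\{f_{0}^{1},f_{1}^{1}\}$ of $P_{1}$, writing $h=\alpha f_{0}^{1}+\beta f_{1}^{1}$; the normalization $\left\Vert h\right\Vert =1$ is exactly the condition $|\alpha|^{2}+|\beta|^{2}=1$. Applying the action formula \eqref{eq:action of rhom on basis} in degree $m=1$ with $g=\left[\begin{smallmatrix}a & b\\ -\bar b & \bar a\end{smallmatrix}\right]$ gives $\rho_{1}{\scriptstyle (g)}(f_{0}^{1})=bx_{1}+\bar a\,x_{2}=\bar a\,f_{0}^{1}+b\,f_{1}^{1}$, where I use that $f_{0}^{1}=x_{2}$ and $f_{1}^{1}=x_{1}$. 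Matching coefficients against $h$ forces $\bar a=\alpha$ and $b=\beta$, so I would take $g_{h}=\left[\begin{smallmatrix}\bar\alpha & \beta\\ -\bar\beta & \alpha\end{smallmatrix}\right]$; the defining constraint $|\bar\alpha|^{2}+|\beta|^{2}=1$ holds precisely because $h$ is a unit vector, whence $g_{h}\in SU(2)$. This step is a one-line computation with no real obstacle.

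For (2), the key observation is that under the indexing convention for the standard basis of $End(P_{1})$ one has $E_{11}=f_{0}^{1}(f_{0}^{1})^{*}$. Writing $h=\rho_{1}{\scriptstyle (g_{h})}(f_{0}^{1})$ from part (1) and using the adjoint identity $(Av)^{*}=v^{*}A^{*}$ together with the unitarity of $\rho_{1}$, I would rewrite $hh^{*}=\rho_{1}{\scriptstyle (g_{h})}\,f_{0}^{1}(f_{0}^{1})^{*}\,\rho_{1}{\scriptstyle (g_{h})}^{*}=\rho_{1}{\scriptstyle (g_{h})}\,E_{11}\,\rho_{1}{\scriptstyle (g_{h})}^{*}$. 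Since $\rho_{1}$ is unitary, $\rho_{1}{\scriptstyle (g_{h})}^{*}=\rho_{1}{\scriptstyle (g_{h}^{-1})}$, so this exhibits $hh^{*}$ as the image of $E_{11}$ under the representation induced on $End(P_{1})$ (see Definition \ref{def:1.4}).

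The conclusion then follows from the $SU(2)$-equivariance of $\Phi$: the intertwining relation for the induced representations reads $\Phi(\rho_{1}{\scriptstyle (g)}A\,\rho_{1}{\scriptstyle (g)}^{*})=\rho_{m}{\scriptstyle (g)}\Phi(A)\rho_{m}{\scriptstyle (g)}^{*}$, and applying it with $g=g_{h}$ and $A=E_{11}$ yields $\Phi(hh^{*})=\rho_{m}{\scriptstyle (g_{h})}\Phi(E_{11})\rho_{m}{\scriptstyle (g_{h})}^{*}$. As $\rho_{m}{\scriptstyle (g_{h})}$ is unitary, $\Phi(hh^{*})$ and $\Phi(E_{11})$ are unitarily equivalent and therefore have the same eigenvalues. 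I do not anticipate a genuine difficulty here; the only points needing care are bookkeeping ones, namely keeping track that $f_{0}^{1}=x_{2}$ (not $x_{1}$) when applying the action formula, and carrying out the adjoint manipulation $(\rho_{1}{\scriptstyle (g_{h})}f_{0}^{1})^{*}=(f_{0}^{1})^{*}\rho_{1}{\scriptstyle (g_{h})}^{*}$ cleanly so that the equivariance relation applies verbatim.
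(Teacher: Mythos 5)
Your proposal is correct and follows essentially the same route as the paper: in part (1) you produce the same matrix $g_{h}=\left[\begin{smallmatrix}\bar\alpha & \beta\\ -\bar\beta & \alpha\end{smallmatrix}\right]$ (the paper's $g_{h}$ with $u_{0}=\alpha$, $u_{1}=\beta$), and in part (2) you conjugate $E_{11}=f_{0}^{1}f_{0}^{1^{*}}$ by $\rho_{1}{\scriptstyle (g_{h})}$ and invoke equivariance exactly as the paper does. A minor point in your favor: you correctly write the right-hand side of the intertwining relation with $\rho_{m}$, where the paper's printed proof has a typo ($\rho_{1}$), and your $|\alpha|^{2}+|\beta|^{2}=1$ fixes the paper's misprinted normalization $u_{0}^{2}+u_{1}^{2}=1$.
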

\begin{proof}
$\,$
\begin{enumerate}
\item Since $h$ is a unit element in $P_{{\scriptscriptstyle 1}}$ then
$h$ can be written as $u_{{\scriptscriptstyle 0}}f_{{\scriptscriptstyle 0}}^{{\scriptscriptstyle 1}}+u_{{\scriptscriptstyle 1}}f_{{\scriptscriptstyle 1}}^{{\scriptscriptstyle 1}}$
for some $u_{{\scriptscriptstyle 0}},u_{{\scriptscriptstyle 1}}\in\mathbb{C}$
that satisfy $u_{{\scriptscriptstyle 0}}^{2}+u_{{\scriptscriptstyle 1}}^{2}=1$.
Let $g_{h}={\scriptscriptstyle \left[\begin{array}{cc}
\overline{u}_{0} & u_{1}\\
-\overline{u}_{1} & u_{0}\end{array}\right]}\in SU(2)$. Then by Equation \ref{eq:rho mg for any f}, we have $(\rho_{{\scriptscriptstyle 1}}{\scriptstyle (g_{h})}f_{{\scriptscriptstyle 0}}^{{\scriptscriptstyle 1}}){\scriptstyle \left({\scriptstyle x_{1},x_{2}}\right)}$$=f_{{\scriptscriptstyle 0}}^{{\scriptscriptstyle 1}}(\overline{u}_{{\scriptscriptstyle 0}}x_{1}-\overline{u}_{{\scriptscriptstyle 1}}x_{2},u_{{\scriptscriptstyle 1}}x_{1}+u_{{\scriptscriptstyle 0}}x_{2})$$=u_{{\scriptscriptstyle 1}}x_{1}+u_{{\scriptscriptstyle 0}}x_{2}$$=u_{{\scriptscriptstyle 0}}f_{{\scriptscriptstyle 0}}^{{\scriptscriptstyle 1}}{\scriptstyle (x_{1},x_{2})}+u_{{\scriptscriptstyle 1}}f_{{\scriptscriptstyle 1}}^{{\scriptscriptstyle 1}}{\scriptstyle (x_{1},x_{2})}$$=h{\scriptstyle (x_{1},x_{2})}$.
\item By item (1), $hh^{*}=\rho_{{\scriptscriptstyle 1}}{\scriptstyle (g_{h})}f_{{\scriptscriptstyle 0}}^{{\scriptscriptstyle 1}}f_{{\scriptscriptstyle 0}}^{{\scriptscriptstyle 1}^{*}}\rho_{{\scriptscriptstyle 1}}^{*}{\scriptstyle (g_{h})}$$=\rho_{{\scriptscriptstyle 1}}{\scriptstyle (g_{h})}E_{{\scriptscriptstyle 11}}\rho_{{\scriptscriptstyle 1}}^{*}{\scriptstyle (g_{h})}$,
and by equivarince property of $\Phi$ we have that $\Phi(hh^{*})=\Phi(\rho_{{\scriptscriptstyle 1}}{\scriptstyle (g_{h})}E_{{\scriptscriptstyle 11}}\rho_{{\scriptscriptstyle 1}}^{*}{\scriptstyle (g_{h})})$$=\rho_{{\scriptscriptstyle 1}}{\scriptstyle (g_{h})}\Phi(E_{{\scriptscriptstyle 11}})\rho_{{\scriptscriptstyle 1}}^{*}{\scriptstyle (g_{h})}$
which gives the result.
\end{enumerate}
\end{proof}

By direct computations using the formula of $\varepsilon_{{\scriptscriptstyle i}}^{{\scriptscriptstyle j}}$
(Corollary \ref{cor:2.15 formula for alphm,n,h}), and the equation
in Corollary \ref{cor:3.3}, one can show:
\begin{lem}
\label{lem:6.3}For $m\in\mathbb{N}\smallsetminus\{0\}$\end{lem}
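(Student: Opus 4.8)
The plan is to prove the stated formulas by specializing the general image formula of Corollary \ref{cor:3.3} to the case $r=1$ and then evaluating the resulting coefficients explicitly. Since the domain of every channel in $EC(1,m)$ is $End(P_{1})$, we have $r=m+n-2h=1$, so the relevant matrix units are the $f_{i_{1}}^{1}f_{i_{2}}^{1^{*}}$ with $i_{1},i_{2}\in\{0,1\}$, and Corollary \ref{cor:3.3} reads
\[
\Phi_{m,n,h}(f_{i_{1}}^{1}f_{i_{2}}^{1^{*}})=\sum_{j\in B(i_{1})\cap B(i_{2})}\varepsilon_{i_{1}}^{j}\,\varepsilon_{i_{2}}^{j}\,f_{i_{1}-j+h}^{m}f_{i_{2}-j+h}^{m^{*}}.
\]
I would carry this out separately for the two members of $EC(1,m)$, namely $(n,h)=(m+1,m)$ for $\Phi_{m,m+1,m}$ and $(n,h)=(m-1,m-1)$ for $\Phi_{m,m-1,m-1}$, since in both cases one checks directly that $r=1$ and that the constraint $0\le h\le\min\{m,n\}$ is satisfied.

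First I would pin down the index sets. Using the definitions $k_{1}(i)=\max\{0,-m+i+h\}$, $k_{2}(i)=\min\{i,n-h\}$ and $B(i)=\{j:k_{1}(i)\le j\le k_{2}(i)+h\}$ from Section \ref{sub:2.2.3 B(i)}, one reads off $B(0)$, $B(1)$ and their intersection $B(0)\cap B(1)$ for each of the two parameter choices. Next I would evaluate the coefficients $\varepsilon_{i}^{j}$ via Corollary \ref{cor:2.15 formula for alphm,n,h}, which writes $\varepsilon_{i}^{j}$ as a sum of the $\beta_{i,s,j}^{m,n,h}$ of Lemma \ref{lem:2.14} over an index $s$. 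Because $h$ is maximal here ($h=m$ or $h=m-1$), the inner range for $s$ is forced to a single value, so each $\varepsilon_{i}^{j}$ collapses to exactly one explicit $\beta$-term. Its binomial factors, together with the closed form for $c_{m,n,h}$ stated just after Lemma \ref{lem:2.16}, then simplify to a clean expression, and substituting $l_{ij}=i-j+h$ identifies precisely which basis matrices $f_{l}^{m}f_{l'}^{m^{*}}$ occur.

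Finally I would assemble the four images $E_{11}=f_{0}^{1}f_{0}^{1^{*}}$, $E_{22}=f_{1}^{1}f_{1}^{1^{*}}$, $E_{12}=f_{0}^{1}f_{1}^{1^{*}}$ and $E_{21}=f_{1}^{1}f_{0}^{1^{*}}$ under each of the two channels; by Corollary \ref{cor:3.4 phi take diagonal todiagonal} the images of the diagonal units $E_{11},E_{22}$ are automatically diagonal, which cuts the work roughly in half. I expect the main obstacle to be purely computational bookkeeping: keeping the boundary ranges $k_{1},k_{2},B(i)$ and the single-term reductions of $\varepsilon_{i}^{j}$ correct, and carrying the signs $(-1)^{s}$ and the square-root normalizations coming from $c_{m,n,h}$ through without error. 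As a safeguard I would use the symmetry relations of Lemma \ref{lem:2.16} (in particular $\varepsilon_{i}^{j}=(-1)^{h}\varepsilon_{r-i}^{n-j}$) both as a consistency check and to halve the number of independent coefficients that must be computed by hand.
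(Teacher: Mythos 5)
Your proposal is correct and follows essentially the same route as the paper, which proves Lemma \ref{lem:6.3} precisely by direct computation from Corollary \ref{cor:3.3} together with the formula for $\varepsilon_{i}^{j}$ in Corollary \ref{cor:2.15 formula for alphm,n,h} and the closed form of $c_{m,n,h}$. Note only that the lemma requires just $\Phi(E_{11})$, i.e.\ the case $i_{1}=i_{2}=0$, where the collapse of the $s$-sum to the single term $\beta_{0,j,j}^{m,n,h}$ is already guaranteed by Lemma \ref{lem:2.16}(4), so your computations for $E_{22}$, $E_{12}$, $E_{21}$ are unnecessary.
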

\begin{enumerate}
\item $\Phi_{m,m+1,m}(E_{{\scriptscriptstyle 11}})=\overset{{\scriptstyle {\scriptscriptstyle m}}}{\underset{{\scriptstyle {\scriptscriptstyle j=0}}}{\sum}}\frac{2(m-j+1)}{(m+1)(m+2)}\, E_{{\scriptscriptstyle m-j+1,m-j+1}}$.
\item $\Phi_{m,m-1,m-1}(E_{{\scriptscriptstyle 11}})=\overset{{\scriptscriptstyle m-1}}{\underset{{\scriptstyle {\scriptscriptstyle j=0}}}{\sum}}\frac{2(j+1)}{m(m+1)}E_{{\scriptscriptstyle m-j,m-j}}$.
\end{enumerate}

\begin{prop}
\label{pro:6.4} For $m\in\mathbb{N}\smallsetminus\{0\}$and $\alpha\in\mathbb{R}$,
the map $\Phi_{m,m+1,m}-\alpha\Phi_{m,m-1,m-1}$ is a positive map
if and only if $\alpha\leq\frac{1}{m+2}$.\end{prop}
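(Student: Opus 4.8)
The plan is to reduce the positivity of the difference map to a single, explicit diagonal computation by exploiting $SU(2)$-equivariance. Write $\Phi=\Phi_{m,m+1,m}-\alpha\,\Phi_{m,m-1,m-1}$; as a linear combination of $SU(2)$-equivariant maps it is itself $SU(2)$-equivariant. Since every positive matrix in $End(P_{1})$ is a conic (non-negative) combination of rank-one projections $hh^{*}$ with $h\in P_{1}$ a unit vector, and $\Phi$ is linear, $\Phi$ is a positive map if and only if $\Phi(hh^{*})\ge 0$ for every unit vector $h\in P_{1}$.

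Next I would invoke Lemma \ref{lem:6.2}(2): because $\Phi$ is $SU(2)$-equivariant, for every unit $h$ the matrix $\Phi(hh^{*})$ has exactly the same eigenvalues as $\Phi(E_{11})$ (recall $E_{11}=f_{0}^{1}f_{0}^{1^{*}}$ is itself such a projection). Since each $\Phi_{m,n,h}$ is completely positive, $\Phi_{m,m+1,m}(E_{11})$ and $\Phi_{m,m-1,m-1}(E_{11})$ are positive, so $\Phi(E_{11})$ is Hermitian, and a Hermitian matrix is positive precisely when all its eigenvalues are non-negative. Therefore $\Phi(hh^{*})\ge 0$ for all $h$ if and only if $\Phi(E_{11})\ge 0$, and the whole problem collapses to testing positivity of the single matrix $\Phi(E_{11})$.

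I would then compute $\Phi(E_{11})$ from Lemma \ref{lem:6.3}. Both $\Phi_{m,m+1,m}(E_{11})$ and $\Phi_{m,m-1,m-1}(E_{11})$ are diagonal, hence $\Phi(E_{11})$ is diagonal and its positivity amounts to non-negativity of each diagonal entry. Indexing the diagonal of $End(P_{m})$ by $1\le l\le m+1$, the entry at position $m+1$ receives a contribution only from the first channel and equals $\tfrac{2}{m+2}>0$, so it imposes no constraint. For $1\le l\le m$ the entry is
\[
\frac{2l}{(m+1)(m+2)}-\alpha\,\frac{2(m-l+1)}{m(m+1)},
\]
which is $\ge 0$ for every such $l$ exactly when $\alpha\le\dfrac{ml}{(m+2)(m-l+1)}$ for all $l$ (the inequality being automatic when $\alpha\le 0$).

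Finally I would minimize the right-hand side over $l$: the function $l\mapsto\tfrac{l}{m-l+1}$ is strictly increasing on $1\le l\le m$, so its minimum occurs at $l=1$, yielding the binding bound $\alpha\le\tfrac{m\cdot 1}{(m+2)\cdot m}=\tfrac{1}{m+2}$; conversely, if $\alpha>\tfrac{1}{m+2}$ the $l=1$ entry is strictly negative and positivity fails. I expect the only genuinely delicate point to be the index bookkeeping that matches the two diagonal expansions of Lemma \ref{lem:6.3} entry-by-entry; the conceptual heart---reducing positivity to the single matrix $\Phi(E_{11})$---is handled entirely by the equivariance in Lemma \ref{lem:6.2}.
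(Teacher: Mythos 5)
Your proposal is correct and follows essentially the same route as the paper: both reduce positivity to the single matrix $\Phi(E_{11})$ via the spectral decomposition of positive operators and the equivariance argument of Lemma \ref{lem:6.2}, then read off the diagonal entries from Lemma \ref{lem:6.3} and minimize (your index $l$ corresponds to the paper's $j=m-l$, and your increasing function $l\mapsto l/(m-l+1)$ is the paper's decreasing $f(t)=(m-t)/(t+1)$ in disguise). The only cosmetic difference is that you justify the reduction through ``same eigenvalues plus Hermitian,'' whereas the paper's Lemma \ref{lem:6.2} actually gives the stronger fact that $\Phi(hh^{*})$ is a unitary conjugate of $\Phi(E_{11})$, which preserves positivity directly.
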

\begin{proof}
$\,$

Let $A$ be a positive matrix in $End(P_{{\scriptscriptstyle 1}})$.
By the spectral theorem there exist an orthonormal basis $\{x_{1},x_{2}\}$
of $P_{{\scriptscriptstyle 1}}$, and non-negative numbers $\lambda_{1},\lambda_{2}$
such that $A=\underset{i=1}{\overset{2}{\sum}}\lambda_{i}x_{i}x_{i}^{*}$.
To show that $\Phi:=\Phi_{m,m+1,m}-\alpha\Phi_{{\scriptstyle m.m-1,m-1}}$
is positive, it suffices to show the positivity of $\Phi(x_{i}x_{i}^{*})$,
and by Lemma \ref{lem:6.2}, this is equivalent to check the positivity
of $\Phi(E_{{\scriptscriptstyle 11}})$. Note that by Lemma \ref{lem:6.3},
we have 

\[
\Phi(E_{{\scriptscriptstyle 11}})=\Phi_{m,m+1,m}(E_{{\scriptscriptstyle 11}})-\alpha\Phi_{m,m-1,m-1}(E_{{\scriptscriptstyle 11}})\geq0\]
if and only if\[
\overset{{\scriptstyle {\scriptscriptstyle m}}}{\underset{{\scriptstyle {\scriptscriptstyle j=0}}}{\sum}}\frac{{\scriptstyle 2(m-j+1)}}{{\scriptstyle (m+1)(m+2)}}\, E_{{\scriptscriptstyle m-j+1,m-j+1}}-\alpha\overset{{\scriptscriptstyle m-1}}{\underset{{\scriptstyle {\scriptscriptstyle j=0}}}{\sum}}\frac{{\scriptstyle 2(j+1)}}{{\scriptstyle m(m+1)}}E_{{\scriptscriptstyle m-j,m-j}}\geq0\]
if and only if\[
\frac{{\scriptstyle 2}}{{\scriptstyle m+2}}E_{{\scriptscriptstyle m+1,m+1}}+\overset{{\scriptscriptstyle m-1}}{\underset{{\scriptscriptstyle j=0}}{\sum}}\left[\frac{{\scriptstyle 2(m-j)}}{{\scriptstyle (m+1)(m+2)}}-\alpha\frac{{\scriptstyle 2(j+1)}}{{\scriptstyle m(m+1)}}\right]E_{{\scriptscriptstyle m-j,m-j}}\geq0\]
Thus, $\Phi(E_{{\scriptscriptstyle 11}})\geq0$ if and only if $\alpha\leq\min\{\frac{m(m-j)}{(m+2)\,(j+1)}:0\leq j\leq m-1\}$.
Since the map $f(t)=\frac{(m-t)}{t+1}$ is decreasing map for $0\leq t\leq m-1$,
then $\min\{\frac{m(m-j)}{(m+2)\,(j+1)}:0\leq j\leq m-1\}=\frac{1}{m+2}$.
Consequently, $\Phi(E_{{\scriptscriptstyle 11}})\geq0$ if and only
if $\alpha\leq\frac{1}{m+2}$. 
\end{proof}

Using the formula for $\varepsilon_{{\scriptscriptstyle i}}^{{\scriptscriptstyle j}}$
in Corollary \ref{cor:2.15 formula for alphm,n,h}, we get:
\begin{lem}
\label{lem:6.5}For $m\in\mathbb{N}\smallsetminus\{0\}$, we have \end{lem}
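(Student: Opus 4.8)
The plan is to prove Lemma~\ref{lem:6.5} by the same direct-computation strategy used for Lemma~\ref{lem:6.3}: feed the explicit coefficients $\varepsilon_{i}^{j}$ from Corollary~\ref{cor:2.15 formula for alphm,n,h} into the action formula of Corollary~\ref{cor:3.3}. Both channels in $EC(1,m)$ have source $End(P_{1})$, so the relevant value is $r=1$ and the only matrix units are $f_{0}^{1}f_{0}^{1^{*}}$, $f_{0}^{1}f_{1}^{1^{*}}$, $f_{1}^{1}f_{0}^{1^{*}}$ and $f_{1}^{1}f_{1}^{1^{*}}$. Lemma~\ref{lem:6.3} already handled the diagonal input $E_{11}=f_{0}^{1}f_{0}^{1^{*}}$; here I would compute the off-diagonal generator $E_{12}=f_{0}^{1}f_{1}^{1^{*}}$ (with $E_{21}$ obtained by taking adjoints and $E_{22}$ available by the same method), so that the action of both channels on all of $End(P_{1})$ is pinned down.

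First I would specialize Corollary~\ref{cor:3.3} to $i_{1}=0$, $i_{2}=1$, which forces me to locate the index sets $B(0)$ and $B(1)$ for each channel. For $\Phi_{m,m+1,m}$ one has $n=m+1$, $h=m$, so $n-h=1$, giving $B(0)=\{0,\dots,m\}$, $B(1)=\{1,\dots,m+1\}$ and $B(0)\cap B(1)=\{1,\dots,m\}$; since $l_{0j}=m-j$ and $l_{1j}=m-j+1$, the output $\Phi_{m,m+1,m}(E_{12})$ is purely superdiagonal, supported on the entries $f_{m-j}^{m}f_{m-j+1}^{m^{*}}$. For $\Phi_{m,m-1,m-1}$ one has $n=m-1$, $h=m-1$, so $n-h=0$, giving $B(0)=B(1)=\{0,\dots,m-1\}$; with $l_{0j}=m-1-j$ and $l_{1j}=m-j$ the output is again one step off the diagonal, on $f_{m-1-j}^{m}f_{m-j}^{m^{*}}$. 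In both cases the entries are exactly the products $\varepsilon_{0}^{j}\varepsilon_{1}^{j}$.

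It then remains to evaluate these products in closed form. By Lemma~\ref{lem:2.16}(4) the factor $\varepsilon_{0}^{j}=\beta_{0,j,j}^{m,n,h}$ is already a single term, and a check of the summation range in Corollary~\ref{cor:2.15 formula for alphm,n,h} shows that for $i=1$ the sum defining $\varepsilon_{1}^{j}$ also collapses to a single value of $s$ (namely $s=j-1$ for $\Phi_{m,m+1,m}$ and $s=j$ for $\Phi_{m,m-1,m-1}$); substituting the explicit $\beta_{i,s,j}^{m,n,h}$ of Lemma~\ref{lem:2.14} and simplifying the factorials, binomials and $(-1)^{s}$ signs yields the stated coefficients. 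I expect this binomial bookkeeping to be the only genuine obstacle — the structural conclusions (which off-diagonal is hit, and that nothing lands on the main diagonal) are immediate from the index-set computation of the previous paragraph. Once $\Phi_{m,m+1,m}(E_{12})$ and $\Phi_{m,m-1,m-1}(E_{12})$ are in hand, the whole map $\Phi_{m,m+1,m}-\alpha\,\Phi_{m,m-1,m-1}$ is known on $End(P_{1})$, which is precisely what is needed to locate a negative eigenvalue of its Choi matrix and thereby complete the positive, non-completely-positive example begun in Proposition~\ref{pro:6.4}.
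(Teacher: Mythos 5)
There is a genuine gap here: you are computing the wrong coefficients. Lemma \ref{lem:6.5} (its content is the enumerated list immediately following the lemma environment) asserts the values $\varepsilon_{1}^{0}(m,1,0)=\sqrt{\tfrac{m}{m+1}}$, $\varepsilon_{1}^{1}(m,1,0)=\sqrt{\tfrac{1}{m+1}}$, $\varepsilon_{0}^{0}(m,1,1)=\sqrt{\tfrac{1}{m+1}}$, $\varepsilon_{0}^{1}(m,1,1)=-\sqrt{\tfrac{m}{m+1}}$, i.e.\ the expansion coefficients of the isometries $\alpha_{m,1,0}:P_{m+1}\longrightarrow P_{m}\otimes P_{1}$ and $\alpha_{m,1,1}:P_{m-1}\longrightarrow P_{m}\otimes P_{1}$. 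These are the isometries behind the projections in Proposition \ref{pro:6.6}, where $C(\Phi_{m,m+1,m})=\frac{2}{m+2}\eta_{m,1,0}\eta_{m,1,0}^{*}$ and $C(\Phi_{m,m-1,m-1})=\frac{2}{m}\eta_{m,1,1}\eta_{m,1,1}^{*}$ by Proposition \ref{pro:3.13}; what Proposition \ref{pro:6.6} needs is precisely $\alpha_{m,1,0}(f_{1}^{m+1})$ and $\alpha_{m,1,1}(f_{0}^{m-1})$ expanded in the basis $f_{l}^{m}\otimes f_{j}^{1}$. Your computation instead evaluates $\Phi_{m,m+1,m}(E_{12})$ and $\Phi_{m,m-1,m-1}(E_{12})$ via Corollary \ref{cor:3.3}, which involves the coefficients $\varepsilon_{i}^{j}(m,m+1,m)$ and $\varepsilon_{i}^{j}(m,m-1,m-1)$ of the Stinespring isometries $\alpha_{m,m+1,m}:P_{1}\longrightarrow P_{m}\otimes P_{m+1}$ and $\alpha_{m,m-1,m-1}:P_{1}\longrightarrow P_{m}\otimes P_{m-1}$ --- a different parameter triple and a different Clebsch--Gordan embedding ($W_{1}\subset P_{m}\otimes P_{m+1}$ rather than $W_{m+1}\subset P_{m}\otimes P_{1}$). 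Even carried out flawlessly, your calculation outputs the matrices $\Phi(E_{12})$, from which the four numbers of the lemma cannot simply be read off: you would still need to factor the resulting rank-$(n+1)$ Choi projection to recover $\eta_{m,1,h}$, which amounts to redoing the lemma by other means.

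By contrast, the paper's proof is a one-line specialization: substitute $(m,n,h)=(m,1,0)$ and $(m,1,1)$ into $\varepsilon_{i}^{j}(m,n,h)=\sum_{s}\beta_{i,s,j}^{m,n,h}$ from Corollary \ref{cor:2.15 formula for alphm,n,h}. For these parameters the $s$-sum collapses to a single term ($s=0$ for $(i,j)\in\{(1,0),(1,1)\}$ when $h=0$, and $s=j$ for $(i,j)\in\{(0,0),(0,1)\}$ when $h=1$), and since $c_{m,1,0}=c_{m,1,1}=\frac{1}{m+1}$ the factorials reduce at once to the stated values. Your structural observations (collapse of the $s$-sum, which off-diagonal entries are hit) are correct for the objects you chose, and your plan could serve as an alternative, more laborious route to Proposition \ref{pro:6.6} by assembling $C(\Phi)$ entry by entry from all matrix units; but as written it does not establish Lemma \ref{lem:6.5}.
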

\begin{enumerate}
\item $\varepsilon_{{\scriptscriptstyle 1}}^{{\scriptscriptstyle 0}}{\scriptscriptstyle (m,1,0)}=\sqrt{\frac{m}{m+1}}$,
$\varepsilon_{{\scriptscriptstyle 1}}^{{\scriptscriptstyle 1}}{\scriptscriptstyle (m,1,0)}=\sqrt{\frac{1}{m+1}}$,
$\varepsilon_{{\scriptscriptstyle 0}}^{{\scriptscriptstyle 0}}{\scriptscriptstyle (m,1,1)}=\sqrt{\frac{1}{m+1}}$,
and $\varepsilon_{{\scriptscriptstyle 0}}^{{\scriptscriptstyle 1}}{\scriptscriptstyle (m,1,1)}=-\sqrt{\frac{m}{m+1}}$.\end{enumerate}
\begin{prop}
\label{pro:6.6} For $m\in\mathbb{N}\smallsetminus\{0\}$, and $\alpha>0$,
the map $\Phi_{m,m+1,m}-\alpha\Phi_{m,m-1,m-1}$ is not completely
positive.\end{prop}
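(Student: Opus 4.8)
The plan is to decide complete positivity of $\Phi:=\Phi_{m,m+1,m}-\alpha\Phi_{m,m-1,m-1}$ entirely through its Choi matrix. By the Choi representation (Proposition \ref{pro:1.8}, item (3)), a linear map is completely positive exactly when its Choi matrix is a positive operator, so it suffices to produce a vector on which $C(\Phi)$ is strictly negative. Since the Choi--Jamiolkowski map $C$ is linear (Example \ref{exa:1.5}, item (5)), I would first write $C(\Phi)=C(\Phi_{m,m+1,m})-\alpha\, C(\Phi_{m,m-1,m-1})$ and then evaluate each term via Proposition \ref{pro:3.13}.

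The key preliminary bookkeeping is to identify the parameters $(r,m-h)$ for each summand. For $\Phi_{m,m+1,m}$ one has $r=m+(m+1)-2m=1$ and $m-h=0$; for $\Phi_{m,m-1,m-1}$ one has $r=m+(m-1)-2(m-1)=1$ and $m-h=1$. Thus both are channels $End(P_{1})\longrightarrow End(P_{m})$, and Proposition \ref{pro:3.13} gives $C(\Phi_{m,m+1,m})=\tfrac{2}{m+2}q_{m,1,0}$ and $C(\Phi_{m,m-1,m-1})=\tfrac{2}{m}q_{m,1,1}$. Consequently
\[
C(\Phi)=\frac{2}{m+2}\,q_{m,1,0}-\frac{2\alpha}{m}\,q_{m,1,1}.
\]

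The decisive structural input is Corollary \ref{cor:2.13}: $q_{m,1,0}$ and $q_{m,1,1}$ are the orthogonal projections of $P_{m}\otimes\overline{P}_{1}$ onto the mutually orthogonal irreducible subspaces $V_{m+1}$ and $V_{m-1}$. Because these projections are orthogonal, $C(\Phi)$ is block diagonal and acts on $V_{m-1}$ as the scalar $-\tfrac{2\alpha}{m}$. Since $\dim V_{m-1}=m\geq1$, this subspace is nonzero, so for any $\alpha>0$ and any nonzero $v\in V_{m-1}$ one gets $\left\langle v\left|C(\Phi)\right.v\right\rangle =-\tfrac{2\alpha}{m}\left\Vert v\right\Vert ^{2}<0$. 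Hence $C(\Phi)$ is not positive and $\Phi$ is not completely positive.

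There is no genuine obstacle once Proposition \ref{pro:3.13} is available; the only care needed is the arithmetic identification of the parameters and the remark that $V_{m-1}\neq\{0\}$. If a fully explicit witness is preferred over the abstract $v\in V_{m-1}$, I would take $v=\eta_{m,1,1}(f_{0}^{m-1})$, whose coordinates in $P_{m}\otimes\overline{P}_{1}$ are given by $\varepsilon_{0}^{0}(m,1,1)$ and $\varepsilon_{0}^{1}(m,1,1)$ from Lemma \ref{lem:6.5}; this $v$ lies in $V_{m-1}$, is annihilated by $q_{m,1,0}$, and yields the same strictly negative value. This also records the contrast with Proposition \ref{pro:6.4}: for $0<\alpha\leq\tfrac{1}{m+2}$ the map is positive yet, by the above, never completely positive, which is exactly the desired example.
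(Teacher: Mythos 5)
Your proof is correct, but it takes a genuinely different route from the paper's. The paper argues by exhibiting an explicit eigenvector: it sets $v=\sqrt{m}\,(f_{0}^{m}\otimes f_{0}^{1})+f_{1}^{m}\otimes f_{1}^{1}$ and verifies by direct computation, using the coefficients $\varepsilon_{i}^{j}$ of Lemma \ref{lem:6.5} and the formulas $C(\Phi_{m,m+1,m})=\tfrac{2}{m+2}\eta_{m,1,0}\eta_{m,1,0}^{*}$, $C(\Phi_{m,m-1,m-1})=\tfrac{2}{m}\eta_{m,1,1}\eta_{m,1,1}^{*}$, that $C(\Phi_{m,m+1,m})(v)=0$ while $C(\Phi_{m,m-1,m-1})(v)=\tfrac{2}{m}v$, hence $C(\Phi)(v)=-\tfrac{2\alpha}{m}v$. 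You instead read off the full spectral decomposition structurally: Proposition \ref{pro:3.13} gives each Choi matrix as a multiple of a projection, and Corollary \ref{cor:2.13} says $q_{m,1,0}$ and $q_{m,1,1}$ have mutually orthogonal ranges $V_{m+1}$ and $V_{m-1}$, so $C(\Phi)$ acts as the scalar $-\tfrac{2\alpha}{m}$ on all of $V_{m-1}$, a subspace of dimension $m\geq1$. Your approach buys economy and extra information: no coefficient computations are needed (Lemma \ref{lem:6.5} is bypassed entirely), the negative eigenvalue is seen to have multiplicity $m$ rather than being witnessed by a single vector, and the same mechanism instantly rules out complete positivity of any combination of EPOSIC channels with a negative coefficient --- which is exactly the argument the paper itself uses in proving Corollary \ref{cor:4.2}. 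What the paper's computation buys is a fully concrete witness (your alternative $v=\eta_{m,1,1}(f_{0}^{m-1})$ is, up to the factor $\sqrt{m+1}$, precisely the paper's $v$) and an independent check of the Section 3 machinery. One small point of precision: the equivalence ``completely positive $\Longleftrightarrow$ Choi matrix positive'' is Choi's theorem \cite{key-3}, whereas Proposition \ref{pro:1.8}(3) as stated concerns channels (positivity plus the partial-trace condition); this is a citation nuance, not a gap, since the paper relies on the same standard fact.
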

\begin{proof}
$\,$

Let $\Phi=\Phi_{m,m+1,m}-\alpha\Phi_{m,m-1,m-1}$. We show that $-\frac{2\alpha}{m}$
is an eigenvalue of $C(\Phi)$ with a corresponding eigenvectors $v=\sqrt{m}(f_{{\scriptscriptstyle 0}}^{{\scriptscriptstyle m}}\otimes f_{{\scriptscriptstyle 0}}^{{\scriptscriptstyle 1}})+f_{{\scriptscriptstyle 1}}^{{\scriptscriptstyle m}}\otimes f_{{\scriptscriptstyle 1}}^{{\scriptscriptstyle 1}}$.
By Proposition \ref{pro:3.13}, $C(\Phi_{m,m+1,m})=\frac{2}{m+2}\eta_{{\scriptscriptstyle m,1,0}}\eta_{{\scriptscriptstyle m,1,0}}^{*}$,
and $C(\Phi_{m,m-1,m-1})=\frac{2}{m}\eta_{{\scriptscriptstyle m,1,1}}\eta_{{\scriptscriptstyle m,1,1}}^{*}$.
Thus,

$C(\Phi_{m,m+1,m})\left(v\right)=\frac{2}{m+2}\eta_{{\scriptscriptstyle m,1,0}}\eta_{{\scriptscriptstyle m,1,0}}^{*}\left(\sqrt{m}(f_{{\scriptscriptstyle 0}}^{{\scriptscriptstyle m}}\otimes f_{{\scriptscriptstyle 0}}^{{\scriptscriptstyle 1}})+f_{{\scriptscriptstyle 1}}^{{\scriptscriptstyle m}}\otimes f_{{\scriptscriptstyle 1}}^{{\scriptscriptstyle 1}}\right)$

$\qquad\qquad\qquad\quad\,=\frac{2}{m+2}\eta_{{\scriptscriptstyle m,1,0}}\left[(-\sqrt{m}\varepsilon_{{\scriptscriptstyle 1}}^{{\scriptscriptstyle 1}}{\scriptscriptstyle (m,1,0)}+\varepsilon_{{\scriptscriptstyle 1}}^{{\scriptscriptstyle 0}}{\scriptscriptstyle (m,1,0)})f_{{\scriptscriptstyle 1}}^{{\scriptscriptstyle m+1}}\right]$

$\qquad\qquad\qquad\quad\,=\frac{2}{m+2}\eta_{{\scriptscriptstyle m,1,0}}\left[(-\sqrt{\frac{m}{m+1}}+\sqrt{\frac{m}{m+1}})f_{{\scriptscriptstyle 1}}^{{\scriptscriptstyle m+1}}\right]$
(Lemma \ref{lem:6.5})

$\qquad\qquad\qquad\quad\,=\frac{2}{m+2}\eta_{{\scriptscriptstyle m,1,0}}\left[0\times f_{{\scriptscriptstyle 1}}^{{\scriptscriptstyle m+1}}\right]$$=0$.

Similarly,

$C(\Phi_{m,m-1,m-1})\left(v\right)=\frac{2}{m}\eta_{{\scriptscriptstyle m,1,1}}\eta_{{\scriptscriptstyle m,1,1}}^{*}\left(\sqrt{m}(f_{{\scriptscriptstyle 0}}^{{\scriptscriptstyle m}}\otimes f_{{\scriptscriptstyle 0}}^{{\scriptscriptstyle 1}})+f_{{\scriptscriptstyle 1}}^{{\scriptscriptstyle m}}\otimes f_{{\scriptscriptstyle 1}}^{{\scriptscriptstyle 1}}\right)$

$\qquad\qquad\qquad\qquad=\frac{2}{m+2}\eta_{{\scriptscriptstyle m,1,1}}\left[(-\sqrt{m}\varepsilon_{{\scriptscriptstyle \mathrm{0}}}^{{\scriptscriptstyle 1}}{\scriptscriptstyle (m,1,1)}+\varepsilon_{{\scriptscriptstyle 0}}^{{\scriptscriptstyle 0}}{\scriptscriptstyle (m,1,1)})f_{{\scriptscriptstyle 0}}^{{\scriptscriptstyle m-1}}\right]$

$\qquad\qquad\qquad\qquad=\frac{2}{m}(\frac{m+1}{\sqrt{m+1}})\eta_{{\scriptscriptstyle m,1,1}}(f_{{\scriptscriptstyle 0}}^{{\scriptscriptstyle m-1}})$
(Lemma \ref{lem:6.5})

$\qquad\qquad\qquad\qquad=\frac{2\sqrt{m+1}}{m}\left[\overset{1}{\underset{{\scriptscriptstyle j=0}}{\sum}}{\scriptstyle (-1)^{j}}\varepsilon_{{\scriptscriptstyle 0}}^{j}f_{{\scriptscriptstyle 1-j}}^{{\scriptscriptstyle m}}\otimes f_{{\scriptscriptstyle 1-j}}^{{\scriptscriptstyle 1}}\right]$

$\qquad\qquad\qquad\qquad=\frac{2\sqrt{m+1}}{m}(\frac{1}{\sqrt{m+1}}f_{{\scriptscriptstyle 1}}^{{\scriptscriptstyle m}}\otimes f_{{\scriptscriptstyle 1}}^{{\scriptscriptstyle 1}}+\sqrt{\frac{m}{m+1}}f_{{\scriptscriptstyle 0}}^{{\scriptscriptstyle m}}\otimes f_{{\scriptscriptstyle 0}}^{{\scriptscriptstyle 1}})$

$\qquad\qquad\qquad\qquad=\frac{2}{m}(f_{{\scriptscriptstyle 1}}^{{\scriptscriptstyle m}}\otimes f_{{\scriptscriptstyle 1}}^{{\scriptscriptstyle 1}}+\sqrt{m}f_{{\scriptscriptstyle 0}}^{{\scriptscriptstyle m}}\otimes f_{{\scriptscriptstyle 0}}^{{\scriptscriptstyle 1}})$$=\frac{2}{m}v$.

Thus, we have $C(\Phi)(v)=C(\Phi_{m,m+1,m})(v)-\alpha C(\Phi_{m,m-1,m-1})(v)$$=-\frac{2\alpha}{m}v$.

Hence, $-\frac{2\alpha}{m}$ is a negative eigenvalue of $C(\Phi)$
for any $\alpha>0$.
\end{proof}

It is straightforward to show that if $\Phi$ is $n$-positive then
it is $s$-positive for $1\leq s\leq n$. Thus, combining the result
of the proposition above and \noun{choi }result \cite[p.35]{key-11}
about the $n$-positivity, we get the following:
\begin{cor}
For $0<\alpha$ and $m\in\mathbb{N}\smallsetminus\{0\}$ the map $\Phi_{m,m+1,m}-\alpha\Phi_{m,m-1,m-1}$
is not $n$-positive for any $n>1$.
\end{cor}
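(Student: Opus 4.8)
The plan is to deduce this directly from Proposition \ref{pro:6.6} together with Choi's characterization of complete positivity via $n$-positivity. Write $\Phi=\Phi_{m,m+1,m}-\alpha\Phi_{m,m-1,m-1}$. The decisive structural observation is that the domain is $End(P_{{\scriptscriptstyle 1}})$ with $\dim P_{{\scriptscriptstyle 1}}=2$. By Choi's result \cite[p.35]{key-11}, a linear map defined on $End(P_{{\scriptscriptstyle 1}})$ is completely positive if and only if it is $2$-positive, since $2$ is exactly the dimension of the input space.

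First I would invoke Proposition \ref{pro:6.6}, which already establishes that $\Phi$ fails to be completely positive for every $\alpha>0$; concretely, its Choi matrix carries the negative eigenvalue $-\frac{2\alpha}{m}$. Combining this with the equivalence above, $\Phi$ cannot be $2$-positive.

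Then I would argue by contradiction, using the monotonicity of $n$-positivity recorded immediately before the statement, namely that $n$-positivity implies $s$-positivity for all $1\leq s\leq n$. Suppose $\Phi$ were $n$-positive for some $n>1$. Since then $n\geq 2$, monotonicity forces $\Phi$ to be $2$-positive. But $2$-positivity of a map out of the two-dimensional space $End(P_{{\scriptscriptstyle 1}})$ yields complete positivity by Choi's theorem, contradicting Proposition \ref{pro:6.6}. Hence $\Phi$ is not $n$-positive for any $n>1$.

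I expect no genuine obstacle in this deduction; the single point demanding care is the correct bookkeeping of the relevant dimension, $\dim P_{{\scriptscriptstyle 1}}=2$, so that Choi's criterion is applied at level $n=2$, which is precisely the least integer exceeding $1$. Everything else is a formal chaining of the monotonicity statement, Choi's theorem, and the already-proved failure of complete positivity.
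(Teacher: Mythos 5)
Your proposal is correct and follows essentially the same route as the paper: the paper also derives the corollary by combining Proposition \ref{pro:6.6} (failure of complete positivity via the negative eigenvalue of the Choi matrix), Choi's theorem that a map on $End(P_{{\scriptscriptstyle 1}})\cong M_2$ is completely positive if and only if it is $2$-positive, and the monotonicity fact that $n$-positivity implies $s$-positivity for $1\leq s\leq n$. Your explicit attention to $\dim P_{{\scriptscriptstyle 1}}=2$ is exactly the bookkeeping the paper leaves implicit.
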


\newpage{}

\section*{list of equations that are used in the computations.}

$\,$

For any $m,n\in\mathbb{N}$ and $0\leq h\leq\min\{m,n\}$, let $r=m+n-2h$
then 
\begin{itemize}
\item $c_{m,n,h}=\dfrac{{\scriptstyle \left((m-h)!\right)^{2}}}{{\scriptstyle (m+n-2h)!\ m!\ n!}\left(\overset{{\scriptstyle {\scriptscriptstyle h}}}{\underset{{\scriptstyle {\scriptscriptstyle k=0}}}{\sum}}\tfrac{\tbinom{h}{k}^{2}}{\binom{m}{h-k}\,\binom{n}{k}}\right)}$.
\item $\beta_{i,s,j}^{m,n,h}=\left(-1\right)^{{\scriptstyle {\scriptscriptstyle s}}}\:\sqrt{\tfrac{c_{m,n,h}\, r!\ m!\ n!}{\binom{r}{i}\,\binom{m}{i-j+h}\,\binom{n}{j}}}\:\,\tfrac{\tbinom{h}{s}\,\tbinom{n-h}{j-s}\,\tbinom{m-h}{i-j+s}}{(m-h)!}$.\\
\item $k_{1}(i)={\scriptstyle {\textstyle \max\{0,-m+i+h\}}}$, $k_{2}(i)=\min\{i,\, n-h\}$,
and $l_{ij}=h+i-j$.\\
\item $\varepsilon_{i}^{j}{\scriptscriptstyle (m,n,h)}=\overset{{\scriptstyle {\scriptscriptstyle \min\{h,j,j+m-i-h\}}}}{\underset{{\scriptscriptstyle {\scriptscriptstyle s=\max\{0,j-i,j+h-n\}}}}{\sum}}\beta_{{\scriptscriptstyle i},s,j}^{m,n,h}$.\\
\item $B(i)=\{j:k_{1}(i)\leq j\leq k_{2}(i)+h\}$.\\
\item $\left\{ f_{{\scriptscriptstyle l}}^{_{m}}=a_{m}^{l}x_{1}^{{\scriptscriptstyle l}}x_{2}^{{\scriptscriptstyle m-l}}:\,0\leq l\leq m\right\} $
where $a_{m}^{l}=\dfrac{{\scriptstyle 1}}{\sqrt{{\scriptstyle l!(m-l)!}}}$.\\
\item $J_{m}(f_{{\scriptscriptstyle l}}^{_{m}})={\scriptstyle (-1)}{}^{l}f_{{\scriptscriptstyle m-l}}^{{\scriptscriptstyle m}}$
and $J_{m}^{*}(f_{{\scriptscriptstyle l}}^{_{m}})={\scriptstyle (-1)}{}^{{\scriptscriptstyle m-l}}f_{{\scriptscriptstyle m-l}}^{{\scriptscriptstyle m}}$.\\
\item $P_{{\scriptscriptstyle m}}\otimes P_{{\scriptscriptstyle n}}$$\approxeq$$\overset{{\scriptstyle {\scriptscriptstyle \min\left\{ m,n\right\} }}}{\underset{{\scriptstyle {\scriptscriptstyle h=0}}}{\bigoplus}}P_{{\scriptscriptstyle m+n-2h}}$.\\
\item $\alpha_{m,n,h}\left(f_{i}^{r}\right)=\underset{{\scriptstyle j\in B(i)}}{\sum}\varepsilon_{i}^{j}{\scriptscriptstyle (m,n,h)}\: f_{l_{ij}}^{m}\otimes f_{j}^{n}$.\\
\item $\eta_{m,n,h}=\left(I_{{\scriptscriptstyle P_{m}}}\otimes J_{{\scriptscriptstyle n}}\right)\alpha_{m,n,h}:P_{{\scriptscriptstyle m+n-2h}}\longrightarrow P_{{\scriptscriptstyle m}}\otimes\overline{P}_{{\scriptscriptstyle n}}$.\\
\item $C(\Phi_{m,n,h})=\frac{{\scriptstyle r+1}}{{\scriptstyle n+1}}q_{{\scriptscriptstyle m,r,m-h}}$.\\
\end{itemize}

\end{document}